\newcommand{\R}{\mathbb{R}}                     
\newcommand{\Z}{\mathbb{Z}}
\def\var#1{\mbox{\bf Var}[ #1 ]}
\newcommand{\poly}{\text{poly}}
\newcommand{\pr}[1]{\text{\normalfont Pr}\normalfont\lbrack #1 \rbrack} 
\newcommand{\ex}[1]{\mathbb{E}\normalfont\lbrack #1 \rbrack}
\newcommand{\bpr}[1]{\text{\normalfont Pr}\normalfont \left[#1 \right]} 
\newcommand{\bex}[1]{\mathbb{E}\normalfont \left[#1 \right]}
\newcommand{\eps}{\epsilon}
\newcommand{\X}{\mathcal{X}} 
\newcommand{\Y}{\mathcal{Y}}
\newcommand{\ttx}[1]{\texttt{#1}}
\newtheorem{theorem}{Theorem}
\newtheorem{lemma}{Lemma}
\newtheorem{corollary}{Corollary}
    \newtheorem{fact}{Fact}
\newtheorem{proposition}{Proposition}
\newtheorem{definition}{Definition}
\newtheorem{remark}{Remark}
\title{Towards Optimal Moment Estimation in Streaming and Distributed Models}
\author{
	Rajesh Jayaram\\
	Carnegie Mellon University\\
	\texttt{rkjayara@cs.cmu.edu}
	\and
	David P. Woodruff\\
	Carnegie Mellon University \\
	\texttt{dwoodruf@cs.cmu.edu\footnote{The authors thank the partial support by the National Science Foundation under Grant No. CCF-1815840.}}
}
\date{}
\begin{document}
	\clearpage\maketitle
	\thispagestyle{empty}
	\setcounter{page}{0}
	\maketitle
	\begin{abstract}
	
	One of the oldest problems in the data stream model is to approximate the $p$-th moment $\|\X\|_p^p = \sum_{i=1}^n \X_i^p$ of an underlying non-negative vector $\X \in \R^n$, which is presented as a sequence of $\poly(n)$ updates to its coordinates. Of particular interest is when $p \in (0,2]$. Although a tight space bound of $\Theta(\epsilon^{-2} \log n)$ bits is known for this problem when both positive and negative updates are allowed, surprisingly there is still a gap in the space complexity of this problem when all updates are positive. Specifically, the upper bound is $O(\epsilon^{-2} \log n)$ bits, while the lower bound is only $\Omega(\epsilon^{-2} + \log n)$ bits. Recently, an upper bound of $\tilde{O}(\epsilon^{-2} + \log n)$ bits was obtained under the assumption that the updates arrive in a {\it random order}.

We show that for $p \in (0, 1]$, the random order assumption is not needed. Namely, we give an upper bound for worst-case streams of $\tilde{O}(\epsilon^{-2} + \log n)$ bits for estimating $\|\X\|_p^p$. Our techniques also give new upper bounds for estimating the empirical entropy in a stream. On the other hand, we show that for $p \in (1,2]$, in the natural coordinator and blackboard distributed communication topologies, there is an  $\tilde{O}(\epsilon^{-2})$ bit max-communication upper bound based on a randomized rounding scheme. 
Our protocols also give rise to protocols for heavy hitters and approximate matrix product. We generalize our results to arbitrary communication topologies $G$, obtaining an $\tilde{O}(\epsilon^{2} \log d)$ max-communication upper bound, where $d$ is the diameter of $G$. Interestingly, our upper bound rules out natural communication complexity-based approaches for proving an  $\Omega(\epsilon^{-2} \log n)$ bit lower bound for $p \in (1,2]$ for streaming algorithms. In particular, any such lower bound must come from a topology with large diameter.

	\end{abstract}
	\newpage

	\section{Introduction}
		
	The streaming and distributed models of computation have become increasingly important for the analysis of massive datasets, where the sheer size of the input imposes stringent restrictions on the resources available to algorithms. 
Examples of such datasets include internet traffic logs, sensor networks, financial transaction data, database logs, and scientific data streams (such as huge experiments in particle physics, genomics, and astronomy).
Given their prevalence, there is a large body of literature devoted to designing extremely efficient algorithms for analyzing streams and enormous datasets. We refer the reader to \cite{babcock2002models, muthukrishnan2005data} for surveys of these algorithms and their applications.  

Formally, the data stream model studies the evolution of a vector $\X \in \Z^n$, called the frequency vector\footnote{Our motivation for using the notation $\X$, as opposed to the more common $x$ or $f$, is to align the streaming notation with the multi-party communication notation that will be used throughout the paper.}. Initially, $\X$ is initialized to be the zero-vector. The frequency vector then receives a stream of $m$ coordinate-wise updates of the form $(i_t,\Delta_t) \in [n] \times \{-M,\dots,M\}$ for some $M > 0$ and time step $t \in [m]$. Each update $(i_t, \Delta_t)$ causes the change $\X_{i_t} \leftarrow \X_{i_t} + \Delta_{t}$. If we restrict that $\Delta_t \geq 0$ for all $t \in [m]$, this is known as the \textit{insertion-only} model. If the updates $\Delta_t \in \{-M,\dots,M\}$  can be both positive and negative, then this is known as the \textit{turnstile}-model. The $p$-th frequency moment of the frequency vector at the end of the stream, $F_p$, is defined as $F_p = \sum_{i=1}^n |\X_i|^p$. For simplicity (but not necessity), it is generally assumed that $m,M = \poly(n)$. 

The study of frequency moments in the streaming model was initiated by the seminal 1996 paper of Alon, Matias, and Szegedy \cite{alon1996space}. Since then, nearly two decades of research have been devoted
to understanding the space and time complexity of this problem. An incomplete list of works which study frequency moments in data streams includes \cite{chakrabarti2003near,indyk2005optimal,bar2004information, woodruff2004optimal,indyk2006stable,kane2010exact,braverman2010recursive,kane2011fast,braverman2014optimal,chakrabarti2016robust,braverman2014optimal,blasiok2017continuous, braverman2018revisiting}. For $p>2$, it is known that polynomial in $n$ (rather than logarithmic) space is required for $F_p$ estimation \cite{chakrabarti2003near,indyk2005optimal}.  In the regime of $p \in (0,2]$, the space complexity of $F_p$ estimation in the turnstile model is now understood, with matching upper and lower bounds of $\Theta(\eps^{-2}\log(n))$ bits to  obtain a $(1 \pm \eps)$ approximation of $F_p$. Here, for $\eps > 0$, a $(1 \pm \eps)$ approximation means an estimate $\tilde{F_p}$ such that $(1-\eps)F_p \leq \tilde{F_p}\leq (1+\eps)F_p$. For insertion only streams, however, the best known lower bound is $\Omega(\eps^{-2} + \log(n))$ \cite{woodruff2004optimal}. Moreover, if the algorithm is given query access to an arbitrarily long string of random bits (known as the random oracle model), then the lower bound is only $\Omega(\eps^{-2})$. On the other hand, the best upper bound is to just run the turnstile $O(\eps^{-2}\log(n))$-space algorithm. 

In this work, we make progress towards resolving this fundamental problem. For $p<1$, we resolve the space complexity by giving an $\tilde{O}(\eps^{-2} + \log n)$\footnote{the $\tilde{O}$ here suppresses a single $(\log \log n + \log 1/\eps)$ factor, and in general we use $\tilde{O}$ and $\tilde{\Omega}$ to hide $\log \log n$ and $\log 1/\eps$ terms.}-bits of space upper bound. In the random oracle model, our upper bound is $\tilde{O}(\eps^{-2})$\footnote{This space complexity is measured \textit{between updates}. To read and process the $\Theta(\log(n))$-bit identity of an update, the algorithm will use an additional $O(\log(n))$-bit working memory tape during an update. Note that all lower bounds only apply to the space complexity between updates, and allow arbitrary space to process updates.}, which also matches the lower bound in this setting. Prior to this work, an  $\tilde{O}(\eps^{-2} + \log(n))$ upper bound for $F_p$ estimation was only known in the restricted \textit{random-order} model, where it is assumed that the stream updates are in a uniformly random ordering \cite{braverman2018revisiting}. Our techniques are based on novel analysis of the behavior of the $p$-stable random variables used in the $O(\eps^{-2}\log(n))$ upper bound of \cite{indyk2006stable}, and also give rise to a space optimal algorithm for entropy estimation.

We remark that $F_p$ estimation in the range $p \in (0,1)$ is useful for several reasons. Firstly, for $p$ near $1$, $F_p$ estimation is often used as a subroutine for estimating the empirical entropy of a stream, which itself is useful for network anomaly detection (\cite{li2011new}, also see  \cite{harvey2008sketching} and the references therein). Moment estimation is also used in weighted sampling algorithms for data streams \cite{monemizadeh20101,Jowhari:2011, jayaram2018perfect} (see \cite{cormode2019p} for a survey of such samplers and their applications). Here, the goal is to sample an index $i \in [n]$ with probability $|\X_i|^p/F_p$. These samplers can be used to find heavy-hitters in the stream, estimate cascaded norms \cite{andoni2010streaming, monemizadeh20101}, and design representative histograms of $\X$ on which more complicated algorithms are run  \cite{gibbons1997fast,  gibbons1998new,olken1993random,gilbert2002summarize,huang2007communication,cormode2005summarizing}. Furthermore, moment estimation for fractional $p$, such as $p = .5$ and $p=.25$, has been shown to be useful for data mining 
\cite{cormode2002fast}. 

For the range of $p \in (1,2]$, we prove an $\tilde{O}(\eps^{-2})$-bits of max-communication upper bound in the distributed models most frequently used to prove \textit{lower bounds} for streaming. This result rules out a large and very commonly used class of approaches for proving lower bounds against the space complexity of streaming algorithms for $F_p$ estimation. Our approach is based on a randomized rounding scheme for $p$-stable sketches. Along the way, we will prove some useful inequalities to bound the heavy-tailed error incurred from rounding non-i.i.d. $p$-stable variables for $p<2$. We show that our rounding scheme can be additionally applied to design improved protocols for the distributed heavy hitters and approximate matrix product problems. We now introduce the model in which all the aforementioned results hold. 

	\paragraph{Multi-Party Communication}
In this work, we study a more general model than streaming, known as the message passing multi-party communication model. All of our upper bounds apply to this model, and our streaming algorithms are just the result of special cases of our communication protocols. In the message passing model, there are $m$ players, each positioned at a unique vertex in a graph $G = (V,E)$. The $i$-th player is given as input an integer vector $X_i \in \Z^n$. The goal of the players is to work together to jointly approximate some function $f:\R^n \to \R$ of the aggregate vector $\X = \sum_{i=1}^n X_i$, such as the $p$-th moment $f(\X)=F_p =  \|\X\|_p^p=\sum_{i=1}^n |\X_i|^p$. In the message passing model, as opposed to the \textit{broadcast} model of communication, the players are only allowed to communicate with each other over the edges of $G$. Thus player $i$ can send a message to player $j$ only if $(i,j) \in E$, and this message will only be received by player $j$ (and no other). At the end of the protocol, it is assumed that at least one player holds the approximation to $f(\X)$. The goal of multi-party communication is to solve the approximation problem using small total communication between all the players over the course of the execution. More specifically, the goal is to design protocols that use small \textit{max-communication}, which is the total number of bits sent over any edge of $G$. Our protocols hold in an even more restricted setting, known as the \textit{one-shot} setting, where each player is allowed to communicate exactly once over the course of the entire protocol.

The message passing setting is often used to model distributed computation, where there are $m$ distributed machines, arranged in some network topology $G$, and their goal is to jointly compute some function of the aggregate data $\X$. Oftentimes, the predominant bottleneck in distributed data processing is the network bandwidth and the energy cost of communication \cite{juang2002energy,madden2005tinydb}, and so it is desirable to have algorithms that use as little communication over any edge as possible.

We now observe that data streams can be modeled as a special case of one-shot multi-party communication. Here, the graph $G$ in question is the line graph on $m$ vertices. If the updates to the data stream vector are $(i_1,\Delta_1),\dots,(i_m,\Delta_m),$ then the $t$-th player has input $X_t \in \Z^n$, where $(X_t)_{i_t} = \Delta_t$ and $(X_t)_j = 0$ for $j \neq i_t$. The aggregate vector $\X = \sum_{i=1}^m X_i$ is just the frequency vector at the end of the stream, and the space complexity of any algorithm is just the max-communication used over any edge of the corresponding communication protocol. 
Since we are primarily interested in insertion only streams, in this work we will consider the \textit{non-negative data} model, where $X_i \in \{0,1,\dots,M\}^n$ for all input vectors $X_i$, for some $M >0$ (as in streaming, we assume $M = \poly(n,m)$ for simplicity). Note that an equivalent condition is that each $X_i \in \R^n_{\geq 0}$ such that the entries of $X_i$ can be stored in $O(\log M)$-bits.

We are now ready to introduce our results for moment estimation in the message passing model. Let $d$ be the \textit{diameter} of the communication graph $G$. Our first result is a protocol for $F_p$ estimation when $p \in (1,2]$ which uses a max communication of $\tilde{O}(\eps^{-2} \log d)$ bits. Using similar techniques, we also obtain a (optimal for $d = \Theta(1)$) bound of $\tilde{O}(\eps^{-2} \log n \log d)$ for the heavy hitters problem, which is to find the coordinates of $\X$ which contribute at least an $\eps$ fraction of the total $\sqrt{F_2} = \|\X\|_2$ of $\X$. For $p \in (0,1)$, we give an  $\tilde{O}(\eps^{-2})$ upper bound for $F_p$ estimation. Notice that this is independent of the graph topology, and thus holds for the line graph, where we derive our $\tilde{O}(\eps^{-2})$ upper bound for $F_p$ estimation in the random oracle streaming model. We then show how the streaming algorithm can be derandomized to not require a random oracle, now using an optimal $\tilde{O}(\eps^{-2} + \log(n))$-bits of space. Our techniques also result in an $\tilde{O}(\eps^{-2})$ upper bound for additively approximating the empirical entropy of the vector $\X$. 

Our results for $p \in (1,2]$ have interesting implications for any attempts to prove \textit{lower-bounds} for streaming algorithms that estimate $F_p$, which we now describe. 
The link between streaming and communication complexity is perhaps one of the most fruitful sources of space lower bounds for algorithms in computer science. Namely, nearly all lower bounds for the space complexity of randomized streaming algorithms are derived via reductions from communication problems. For an incomplete list of such reductions, see \cite{woodruff2004optimal,woodruff2018distributed,kane2010exact, Jowhari:2011, kapralov2017optimal, braverman2016streaming,chakrabarti2003near, weinstein2015simultaneous,li2013tight, mcgregor2016space,jayram2009data} and the references therein. Now nearly all such lower bounds (and all of the ones that were just cited) hold in either the $2$-party setting ($G$ has $2$ vertices), the coordinator model, or the black-board model. In the coordinator model there are $m$ players, each with a single edge to a central coordinator (i.e., $G$ is a star graph on $m+1$ vertices). Note that the diameter $d$ of the coordinator graph is $2$. In the multi-player black-board model, every message that is sent is written to a shared blackboard that can be read by all players. Observe that any one-way protocol for the coordinator model immediately results in a protocol with the same communication for the blackboard model. Namely, each player simply writes what it would have sent to the coordinator on the blackboard, and at the end of the protocol the blackboard contains all the information that the coordinator would have had. For these three settings, our protocol gives an $\tilde{O}(\eps^{-2})$ max-communication upper bound for $F_p$ estimation, $p \in (1,2]$. This completely rules out the approach for proving lower bounds against $F_p$ estimation in a stream via any of these three techniques. In particular, it appears that any lower bound for $F_p$ estimation via communication complexity in this regime of $p$ will need to use a graph with $\Omega(n)$ diameter, such as the line graph, without a black-board. 

The coordinator and black-board models have also been studied in many other settings than for proving lower bounds against streaming. For instance, in the \textit{Distributed Functional Monitoring} literature \cite{cormode2011algorithms,yi2013optimal,woodruff2012tight,huang2012randomized,tirthapura2011optimal,JayaramWeighted:2018}, each player is receiving a continuous stream of updates to their inputs $X_i$, and the coordinator must continuously update its approximation to $f(\X)$. The black-board model is also considered frequently for designing communication upper bounds, such as those for set disjointness \cite{bar2004information,chakrabarti2003near,gronemeier2009asymptotically}. Finally, there is  substantial literature which considers numerical linear algebra and clustering problems in the coordinator model \cite{woodruff2018distributed,chen2016communication,balcan2016communication,woodruff2016distributed}. Thus, our upper bounds can be seen as a new and useful contribution to these bodies of literature as well.

\paragraph{Numerical Linear Algebra} The study of numerical linear algebra in the distributed model has become increasingly important  \cite{feldman2013turning, liang2014improved,liberty2013simple,ghashami2014relative,kannan2014principal,bhojanapalli2015tighter,  woodruff2018distributed,boutsidis2016optimal}, especially as frameworks like Hadoop \cite{hadoop} and Spark \cite{spark} become critical for analyzing massive datasets. These works 
prioritize designing protocols to compute linear algebraic primitives of matrices distributed over multiple servers, using as little communication as possible. Motivated by this, we apply the techniques developed for our $F_p$ estimation problems to the linear algebraic primitive of approximate matrix product.

The setting is as follows. Instead of vector-valued inputs $X_i \in \Z^n_{\geq 0}$, the players are given $X_i \in \Z^{n \times t_1}_{\geq 0 }$, and they would like to compute statistics of the aggregate \textit{data matrix} $\X = \sum_{i=1}^m X_i$.  We remark that in the domain of application, we generally  assume $n >> t_1,t_2$ (however this is not, strictly speaking, a requirement). So, for instance, they may want to estimate the $t_1 \times t_2$ dimensional product of $\X^T$ with another matrix $\Y \in \R^{n \times t_2}$. In this case, each player also gets as input a $Y_i \in \Z^{n \times t_2}$, and we set $\Y = \sum_{i=1}^m Y_i$. The goal of the approximate matrix product problem is for the players to jointly compute an approximation to $\X^T\Y$. Specifically, they would like to obtain a $R \in \R^{t_1 \times t_2}$ such that $\|R - \X^T \Y\|_F \leq \eps \|\X\|_F \|\Y\|_F$, where the Frobenius norm $\|\X\|_F$ of a matrix $\X$ is just the square root of the sum of squares of entries in $\X$.  We note that a special and important case of interest is covariance estimation, where $X_i = Y_i$. Here, each player $i$ has some subset of a positive data matrix $\X$, and the goal is to compute the empirical covariance $\X^T \X$ of the data.

 Using similar techniques as our $F_p$ estimation algorithm for $p>1$, we design an $\tilde{O}((t_1 + t_2)\eps^{-2} \log d)$-bit max communication protocol for the problem. We note here that a $O((t_1 + t_2)\eps^{-2}\log n) $ upper bound is a standard and well-known result from sketching, and our main contribution here is to reduce the $\log n$ to a $\log d$. Since the majority of works in the distributed linear algebra literature consider the coordinator model, our results demonstrate an improvement of a $\log(n)$ factor in the communication for approximate matrix product in this setting.

\paragraph{Linear Sketches} Our results are closely related to the theory of linear sketching, which is the standard approach used to solve many problems which involve approximating functions of extremely high dimensional data $\X$. Formally, a linear sketch is a randomized matrix $S \in \R^{k \times n}$, where $k <<n$, such that given $S \X \in \R^k$, one can approximate the desired function of $\X$. To estimate $F_p$, $p \in (0,2]$, such a sketch $S$ exists with $k = \Theta(\eps^{-2})$ rows (\cite{indyk2006stable, kane2010exact}). If $p=2$, then $S$ is a matrix of i.i.d. Gaussian random variables, and for general $p \in (0,2]$, so called $p$-stable distributions, $D_p$, are used. These distributions have the property that if $Z_1,\dots,Z_m \sim D_p$, then $\sum_{i=1}^m Z_i \X_i \sim Z \|\X\|_p$, where $Z \sim D_p$ again. 

To solve the multi-party $F_p$ estimation problem using sketches, each player can locally compute $SX_i$ and pass it along to some centralized vertex, who can then compute $S\X = \sum_i SX_i$. Given $S\X$, the central vertex can then compute the desired approximation. If the players sum up their sketches along the way, as they travel up to the central vertex, the per-player communication of this protocol is $O(\eps^{-2}\log(n))$, where the $\log(n)$ factor is required to store each coordinate of $S\X$. As noted, however, the extra $\log(n)$ factor does not match the $\Omega(\eps^{-2})$ lower bound. Thus, the main challenge for each of the protocols in this paper will be to avoid sending the entire $O(\log n)$-bit counter needed to store a coordinate of the sketch $SX_i$.



	\subsection{Our Contributions}

	As noted, the upper bounds in this paper all hold in the general multi-party message passing model, over an arbitrary topology $G$. Our algorithms also have the additional property that they are \textit{one-shot}, meaning that each player is allowed to communicate exactly once. Our protocols pre-specify a central vertex $\mathcal{C} \in V$ of $G$. Specifically, $\mathcal{C}$ will be a \textit{center} of $G$, which is a vertex with minimal max-distance to any other vertex. Our protocols then proceed in $d$ rounds, where $d$ is the diameter of $G$. Upon termination of the protocols, the central vertex $\mathcal{C}$ will hold the estimate of the protocol. We note that $\mathcal{C}$ can be replaced by any other vertex $v$, and $d$ will then be replaced by the max distance of any other vertex to $v$. A summary of our results is given in Figure \ref{fig:results}.


	\begin{figure}[t]
		\begin{center}
			\begin{tabular}{| c | c| c| c |} 
				\hline
				Problem & Prior best upper bound & Upper bound (this work)   & Notes \\
				\hline
				$F_p$, $1 < p \leq 2$ &$O(\eps^{-2} \log(n))$ \cite{ kane2010exact}& $\tilde{O}(\eps^{-2} \log(d))$ &  \\
			$F_p$, $p < 1$&$O(\eps^{-2} \log(n))$\cite{ kane2010exact}  &$\tilde{O}(\eps^{-2})$ &\\
			$F_p$ Streaming, $p<1$& $O(\eps^{-2} \log(n))$\cite{ kane2010exact} & $\tilde{O}(\eps^{-2})$ & \\
				
				Entropy & --  &$ \tilde{O}(\eps^{-2})$ &   \\
					Entropy  Streaming &$O(\eps^{-2} \log^2(n))$ \cite{clifford2013simple}&$ \tilde{O}(\eps^{-2})$ & random oracle \\ 
					Point Estimation & $O(\eps^{-2} \log^2(n))$ \cite{charikar2002finding} &$\tilde{O}(\eps^{-2}\log(d) \log(n))$ & \\
					Approx Matrix Prod. & -- & $\tilde{O}(1)$ & \begin{tabular}{c}
					
				per coordinate \\ of sketch \\
					\end{tabular} \\ \hline
			\end{tabular} 
		\end{center} \caption{For the communication problems above, the bounds are for the max-communication (in bits) across any edge. For the streaming problems, the bounds are for the space requirements of the algorithm. Here, $d$ is the diameter of the communication network $G$. 
		For all problems except point estimation, there is a matching $\Omega(\eps^{-2})$ lower bound. The problem of point estimation itself has a matching $\Omega(\eps^{-2} \log n)$ lower bound for graphs with constant $d$. }  \label{fig:results}
	\end{figure}

We first formally state our general result for $F_p$ estimation, $1 < p \leq 2$. Note that, while we state all our results for constant probability of success, by repeating $\log(1/\delta)$ times and taking the median of the estimates, this is boosted to $1-\delta$ in the standard way.  \\

\noindent \textbf{Theorem \ref{thm:LPmain}}  \textit{ 
	For $p \in (1,2]$, there is a protocol for $(1 \pm \eps)$ approximating $F_p$ which succeeds with probability $3/4$ in the message passing model. The protocol uses a max communication of $O(\frac{1}{\eps^2}(\log \log n + \log d  + \log 1/\eps ))$ bits, where $d$ is the diameter of $G$.}\\

For graphs with constant diameter, such as the coordinator model, our max communication bound of $\tilde{O}(\eps^{-2})$ matches the $\Omega(\eps^{-2})$ lower bound \cite{woodruff2004optimal,chakrabarti2012optimal}, which follows from a $2$-player reduction from the Gap-Hamming Distance problem. 
For $p=2$, our \textit{total communication} in the coordinator model matches the $\Omega(m^{p-1}/\eps^2)$ total communication lower bound (up to $\log \log(n)$ and $\log(1/\eps)$ terms) for non-one shot protocols \cite{woodruff2012tight}.
For one shot protocols, we remark that there is an $\Omega(m/\eps^2)$ total communication lower bound for any $p \in (0,2] \setminus \{1\}$ (see Appendix \ref{app:1}). 
As discussed previously, our result also has strong implications for streaming algorithms, demonstrating that no $\Omega(\eps^{-2} \log n)$ lower bound for $F_p$ estimation, $p \in (1,2]$, can be derived via the common settings of $2$-party, coordinator, or blackboard communication complexity. 

Our main technique used to obtain Theorem \ref{thm:LPmain} is a new randomized rounding scheme for $p$-stable sketches. Suppose we are in the coordinator model, $Z=(Z_1,\dots,Z_n)$ are i.i.d. $p$-stable, and the players want to jointly compute $\langle Z, \X \rangle = \sum_{i=1}^m \langle Z, X_i\rangle$. Then, roughly speaking, they could each round their values $\langle Z, X_i\rangle$ to $(1 \pm \gamma)$ relative error, using $O( \log \log n + \log 1/\gamma)$ bits of communication, and send it to the coordinator. The error for each player would be at most $\gamma | \langle Z, X_i\rangle|$, for a total error of  $ \gamma\sum_{i=1}^m  | \langle Z, X_i\rangle|$. For $p>1$, however, the final counter value $\langle Z, \X \rangle$ will be much smaller than $\gamma\sum_{i=1}^m  | \langle Z, X_i\rangle|$ -- in fact it will by polynomial smaller. To see this, note that $\langle Z, \X \rangle \sim z \|\X\|_p$, where $z$ is $p$-stable, thus  $\langle Z, \X \rangle = O(\|\X\|_p)$ with good probability. On the other hand, each $ | \langle Z, X_i\rangle|$ will be $\Omega(\|X_i\|_p)$ with good probability, and so $ \sum_{i=1}^m | \langle Z, X_i\rangle|$ can be as large as $\Omega( m^{1-1/p } \|\X\|_p)$ (and in fact larger due to the heavy tails), so the error is too great.  
 
Our solution is to randomly round the values $\langle Z, X_i\rangle$ to relative error $(1 \pm \gamma)$ so that the error is zero mean. The total error can then be bounded via the variance. This is proportional to $\sum_{i=1}^m  | \langle Z, X_i\rangle|^2$, which behaves like $\sum_{i=1}^m z_i^2 \|X_i\|_p^2$, where the $z_i$'s are non-i.i.d. $p$-stable. 
Note that if each $z_i^2$ was constant, by the positivity of the vectors $X_i$, we have  $\sum_{i=1}^m z_i^2 \|X_i\|_p^2 = O( \|\X\|_p^2)$ for any $1 \leq p \leq 2$. However, since $p$-stables have heavy tails, many of the $z_i^2$'s will be super-constant, and in fact many will be polynomially sized in $m$. By a careful analysis of the tail behavior of this sum, we are able to still bound the variance by (roughly) $\gamma\|\X\|_p^2$, where $\gamma$ is the precision to which the randomized rounding is carried out, which will give us our protocol for the coordinator model. Our general protocol is the result of iterative applying this rounding and merging scheme until all sketches reach a central vertex. By a judicious analysis of how the variance propagates, we can demonstrate that $\gamma$ need only be $O(1/d)$, where $d$ is the diameter of $G$, to sufficiently bound the overall error of the protocol.

We then show that this randomized rounding protocol can be applied to give improved communication upper bounds for the \textit{point-estimation} problem. Here, the goal is to output a vector $\tilde{X} \in \R^n$ that approximates $\X$ well coordinate-wise. The result is formally given below in Theorem \ref{thm:HHmain}. \\

\noindent \textbf{Theorem \ref{thm:HHmain}}\textit{
	Consider a message passing topology $G = (V,E)$ with diameter $d$, where the $i$-th player is given as input $X^i \in \Z^n_{\geq 0}$ and $\X = \sum_{i=1}^m X^i$.  Then there is a communication protocol which outputs an estimate $\tilde{\X} \in \R^n$ of $\X$ such that 
	\[ \|\tilde{\X} - \X\|_\infty \leq \eps \|\X_{\text{tail}(\eps^{-2})}\|_2 \]
	with probability $1-1/n^c$ for any constant $c \geq 1$. Here $\X_{\text{tail}(\eps^{-2})}$ is $\X$ with the $\eps^{-2}$ largest (in absolute value) coordinates set equal to $0$. The protocol uses a max communication of $O(\frac{1}{\eps^2}\log(n)(\log \log n + \log d  + \log 1/\eps))$.}\\

For graphs with small diameter, our protocols demonstrate an improvement over the previously best known sketching algorithms, which use space $O(\eps^{-2} \log^2(n))$ to solve the point estimation problem \cite{charikar2002finding}. Note that there is an $\Omega(\eps^{-2} \log n)$-max communication lower bound for the problem. This follows from the fact that point-estimation also solves the $L_2$ \textit{heavy-hitters} problem. Here the goal is to output a set $S \subset [n]$ of size at most $|S| = O(\eps^{-2})$ which contains all $i \in [n]$ with $|\X_i| \geq \eps \|\X\|_2$ (such coordinates are called heavy hitters). The lower bound for heavy hitters is simply the result of the space required to store the $\log(n)$-bit identities of all possible $\eps^{-2}$ heavy hitters. Note that for the heavy hitters problem alone, there is an optimal streaming $O(\eps^{-2} \log(n))$-bits of space upper bound called BPTree \cite{braverman2016bptree}. However, BPTree cannot be used in the general distributed setting, since it crucially relies on the sequential natural of a stream.  

Next, we demonstrate that $F_p$ estimation for $p<1$ is in fact possible with max communication independent of the graph topology. After derandomizing our protocol, this results in a optimal streaming algorithm for $F_p$ estimation, $p<1$, which closes a long line of research on the problem for this particular range of $p$ \cite{woodruff2004optimal,indyk2006stable,kane2010exact,kane2011fast,chakrabarti2016robust, braverman2018revisiting}. \\

\noindent \textbf{Theorem \ref{thm:morrismain}}  \textit{
	For $p \in (0,1)$, there is a protocol for $F_p$ estimation in the message passing model which succeeds with probability $2/3$ and has max-communication of $O(\frac{1}{\eps^2}(\log\log n + \log 1/\eps ))$. }\\

\noindent \textbf{Theorem \ref{thm:streamingMain}}  \textit{
	There is a streaming algorithm for $F_p$ estimation, $p \in (0,1)$, which outputs a value $\tilde{R}$ such that with probability at least $2/3$, we have that $|\tilde{R} - \|X\|_p| \leq \eps \|X\|_p$. The algorithm uses $O((\frac{1}{\eps^2}(\log\log n + \log 1/\eps ) + \frac{\log 1/\eps}{\log\log 1/\eps }\log n  )$-bits of space. In the random oracle model, the space is $O(\frac{1}{\eps^2}(\log\log n + \log 1/\eps ))$. }\\
	
	The above bound matches the $\Omega(\eps^{-2})$ max communication lower bound of \cite{woodruff2004optimal} in the shared randomness model, which comes from $2$-party communication complexity. Moreover, our streaming algorithm matches the $\Omega(\log n)$ lower bound for streaming when a random oracle is not allowed. 
	Our results are derived from observations about the behavior of $p$-stable distributions for $p<1$, followed by a careful analysis of their tail-behavior.  Namely, if $Z_1,\dots,Z_n \sim D_p$ are $p$-stable for $p\leq1$, and if $X_1,\dots,X_m \in \R^n$ are non-negative, then $\sum_{i=1}^m |Z_i X_i|$ is roughly on the same order as $\sum_{i=1}^m Z_i X_i$, which allows us to approximate the latter in small space via approximate counters. 
	
	We also point out an alternative approach for deriving the upper bound of Theorem \ref{thm:morrismain}, which is to use \textit{maximally-skewed $p$-stable distributions} \cite{nolan2009stable, li2009compressed, li2011new}. These distributions have the property for that they are always positive (or negative) for $p<1$, and one could then follow a similar analysis to that provided in this work. We chose instead to consider general $p$-stable variables, and not specifically maximally skewed variables, for several reasons. Firstly, for the purposes of streaming, our derandomization utilizes ideas from \cite{kane2010exact}, which only consider un-skewed symmetric stable variables. Thus, to utilize maximally skewed distributions, one would need to adapt the arguments of \cite{kane2010exact}. Moreover, our arguments give general results for the behavior of $p$-stables, and therefore also apply to the maximally-skewed case. To the best of the authors' knowledge, neither approach was known prior to this work.

As another application of our protocol for $F_p$ estimation, $p < 1$, we demonstrate a communication optimal protocol for additive approximation of the empirical \textit{Shannon entropy} $H(\X)$ of the aggregate vector $\X$. Here, $H = H(\X)$ is defined by $H = \sum_{i=1}^n p_i \log(1/p_i)$ where $p_i = |\X_i|/\|\X\|_1$ for $i \in [n]$. The goal of our protocols is to produce an estimate $\tilde{H} \in \R$ of $H$ such that $|\tilde{H} - H| \leq \eps$. Our result is as follows. \\

\noindent \textbf{Theorem \ref{thm:entropymain}}\textit{
	There is a multi-party communication protocol in the message passing model that outputs a $\eps$-additive error of the Shannon entropy $H$. The protocol uses a max-communication of $O(\frac{1}{\eps^2}(\log\log(n) + \log(1/\eps))$-bits. }\\

Note that for a \textit{multiplicative} approximation of the Shannon entropy, there is a 	 $\tilde{\Omega}(\eps^{-2})$ lower bound \cite{chakrabarti2010near}. For additive estimation, \cite{kane2014sparser} gives a $\Omega(\eps^{-2} \log(n))$ lower bound in the turnstile model. Using a similar reduction, we prove a matching $\Omega(\eps^{-2})$ lower bound for additive $\eps$ approximation in the insertion only model (see Appendix \ref{app:2} for the proof). Furthermore, our protocol directly results in an  $\tilde{O}(\eps^{-2})$-bits of space, insertion only  \textit{streaming} algorithm for entropy estimation in the random oracle model. Here, the random oracle model means that the algorithm is given query access to an arbitrarily long string of random bits. We note that many lower bounds in communication complexity (and all of the bounds discussed in this paper except for the $\Omega(\log n)$ term in the lower bound for $F_p$ estimation) also apply to the random oracle model. Previously, the best known algorithm for the insertion only random oracle model used $O(\eps^{-2} \log(n))$-bits \cite{li2011new,clifford2013simple}, whereas the best known algorithm for the non-random oracle model uses $O(\eps^{-2} \log^2(n))$-bits (the extra factor of $\log(n)$ comes from a standard application of Nisan's pseudo-random generator \cite{nisan1992pseudorandom}). \\

\noindent \textbf{Theorem \ref{thm:entropystream}}
There is a streaming algorithm for $\eps$-additive approximation of the empirical Shannon entropy of an insertion only stream in the random oracle model, which succeeds with probability $3/4$. The space required by the algorithm is $O(\frac{1}{\eps^2}(\log\log(n) + \log(1/\eps))$ bits.\\

Finally, we show how our techniques can be applied to the important numerical linear algebraic primitive of \textit{approximate matrix product}, which we now define. 

\begin{definition}
	The multi-party approximate matrix product problem is defined as follows.  Instead of vector valued inputs, each player is given $X_i \in \{0,1,\dots,M\}^{n \times t_1}$ and $Y_i\in \{0,1,\dots,M\}^{n \times t_2}$, where $\X = \sum_i X_i$ and $\Y = \sum_i Y_i$. Here, it is generally assumed that $n>>t_1,t_2$ (but not required). The players must work together to jointly compute a matrix $R \in \R^{t_1 \times t_2}$ such that 
	\[	\|R - \X^T \Y \|_F \leq \eps \|\X\|_F\|\Y\|_F	\]
	Where for a matrix $A \in \R^{n \times m}$, $\|A\|_F = (\sum_{i=1}^n \sum_{j=1}^m A_{i,j}^2)^{1/2}$ is the Frobenius norm of $A$.
\end{definition}

\noindent \textbf{Theorem \ref{thm:approxmatrix}}\textit{
 There is a protocol which outputs, at the central vertex $\mathcal{C}$, a matrix $R \in \R^{t_1 \times t_2}$ which solves the approximate communication protocol with probability $3/4$ \footnote{We remark that there are standard techniques to boost the probability of the matrix sketching results to $1-\delta$, using a blow-up of $\log(\delta)$ in the communication. See e.g. Section 2.3 of \cite{woodruff2014sketching}}.
	The max communication required by the protocol is $O\left(\eps^{-2}(t_1 + t_2)( \log \log n + \log 1/\eps + \log d	)\right)$, where $d$ is the diameter of the communication topology $G$.
}\\

We remark that an upper bound of $O\left(\eps^{-2}(t_1 + t_2) \log n\right)$ was already well-known from sketching theory \cite{woodruff2014sketching}, and our main improvement is removing the $\log(n)$ factor for small diameter graphs, such as the coordinator model where distributed numerical linear algebra is usually considered.

	\subsection{Other Related Work}
	As mentioned, a closely related line of work is in the \textit{distributed functional monitoring model}. Here, there are $m$ machines connected to a central coordinator (the coordinator topology). Each machine then receives a stream of updates, and the coordinator must maintain at all time steps an approximation of some function, such as a moment estimation or a uniform sample, of the union of all streams. We note that there are two slightly different models here. One model is where the items (coordinates) being updated in the separate streams are considered disjoint, and each time an insertion is seen it is to a unique item. This model is considered especially for the problem of maintaining a uniform sample of the items in the streams \cite{cormode2011algorithms, huang2012randomized, tirthapura2011optimal,JayaramWeighted:2018}. The other model, which is more related to ours, is where each player is receiving a stream of updates to a \textit{shared} overall data vector $\X \in \R^n$. This can be seen as a distributed streaming setting, where the updates to a centralized stream are split over $m$ servers, and is considered in \cite{woodruff2012tight,cormode2011algorithms,arackaparambil2009functional}. For the restricted setting of \textit{one-way} algorithms, which only transmit messages from the sites to the coordinators, any such algorithm can be made into a one-shot protocol for the multi-party message passing model. Here, each machine just simulates a stream on their fixed input vectors $X_i$, and sends all the messages that would have been sent by the functional monitoring protocol.
	
Perhaps the most directly related result to our upper bound for for $F_p$ estimation, $p \in (1,2]$, is in the distributed functional monitoring model, where Woodruff and Zhang \cite{woodruff2012tight} show a $O(m^{p-1}\poly(\log(n),1/\eps)  + m\eps^{-1}\log(n) \log(\log(n)/\eps))$\footnote{We remark that the $\poly(\log(n),1/\eps)$ terms here are rather large, and not specified in the analysis of \cite{woodruff2012tight}.} \textit{total communication} upper bound.
We remark here, however, that the result of \cite{woodruff2012tight} is incomparable to ours for several reasons. Firstly, their bounds are only for total communication, whereas their max communication can be substantially larger than $O(1/\eps^2)$. Secondly, while it is claimed in the introduction that the protocols are one way (i.e., only the players speak to the coordinator, and not vice versa), this is for their threshold problem and not for $F_p$ estimation\footnote{The reason for this is as follows. Their algorithm reduces $F_p$ estimation to the threshold problem, where for a threshold $\tau$, the coordinator outputs $1$ when the $F_p$ first exceeds $\tau (1+\eps)$, and outputs $0$ whenever the $F_p$ is below $\tau (1-\eps)$. To solve $F_p$ estimation, one then runs this threshold procedure for the $\log(mMn)/\eps$ thresholds $\tau = (1+\eps), (1+\eps)^2, \dots, (mMn)^2$ in parallel. However, the analysis from \cite{woodruff2012tight} only demonstrates a total communication of $O(k^{1-p} \poly(\log(n),\eps^{-1}))$ for the time steps \textit{before} the threshold $\tau$ is reached. Once the threshold is reached, the communication would increase significantly, thus the coordinator must inform all players when a threshold $\tau$ is reached so that they stop sending messages for $\tau$, violating the one-way property. This step also requires an additive $k$ messages for each of the $O(\eps^{-1}\log(n))$ thresholds, which results in the $O(m\eps^{-1}\log(n) \log(\log(n)\eps)))$ term.}. As remarked before, there is an $\Omega(m/\eps^2)$ total communication lower bound for one-way protocols, which demonstrates that their complexity could not hold in our setting (we sketch a proof of this in Appendix \ref{app:1}). 

	The message passing model itself has been the subject of significant research interest over the past two decades. The majority of this work is concerned with \textit{exact} computation of Boolean functions of the inputs. Perhaps the canonical multi-party problem, and one which has strong applications to streaming, is set disjointness, where each player has a subset $S_i \subset [n]$ and the players want to know if $\cap_{i=1}^m S_i$ is empty. Bar-Yossef et al. \cite{bar2004information} demonstrated strong bounds for this problem in the black-board model. This lower bound resulted in improved (polynomially sized) lower bounds for streaming $F_p$ estimation for $p>2$. These results for disjointness have since been generalized and improved using new techniques \cite{chakrabarti2003near,gronemeier2009asymptotically, jayram2009hellinger, braverman2013tight}. 
Finally, we remark that while most results in the multi-party message passing model are not topology dependent, Chattopadhyay, Radhakrishnan, and Rudra have demonstrated that tighter topology-dependent lower bounds are indeed possible in the message passing model \cite{chattopadhyay2014topology}.



	\subsection{Road Map}
	
	In Section \ref{sec:prelims}, we formally introduce the  message passing model. In Section \ref{sec:messagepass}, we give our main $F_p$ estimation algorithm in the message passing model for $p >1$. In Section \ref{sec:HH}, we provide our algorithm for point-estimation and heavy hitters in the message passing model. In Section \ref{sec:pleq1} we give our $F_p$ estimation algorithm for $p<1$ in the message passing and streaming model. In Section \ref{sec:entropy}, we give our algorithm for entropy estimation, and in Section \ref{sec:amp} we give our algorithm for approximate matrix product.

	\section{Preliminaries}\label{sec:prelims}
	
	Let $f$ be a function $f:\R^n \to \R$. Let $G = (V,E)$ be a connected undirected graph with $m$ vertices, i.e. $V = \{1,\dots,m\}$. In the message passing model on the graph topology $G$, there are $m$ players, each placed at a unique vertex of $G$, with unbounded computational power. Player $i$ is given as input only a vector $X_i \in \Z^n$, which is known as the Number in Hand (NIH) model of communication. Let $\X = \sum_{i=1}^n X_i$ be the aggregate vector of the players inputs.  The goal of the players is to jointly compute or approximate the function $f(\X)$ by carrying out some previously unanimously agreed upon communication protocol. It is assumed that the graph topology of $G$ is known to all players.
	
		In this paper, we are concerned with the non-negative input model. Namely, the inputs $X_i$ satisfy $X_i \in \{0,1,\dots,M\}^n$ for all players $i$. Note an equivalent assumption to is that $(X_i)_j \geq 0$ for all $i$, and that the $(X_i)_j$'s can be specified in $O(\log(M))$ bits.
	 
	 \begin{remark}	
	For ease of presentation, we assume that $m,M = O(n^c)$ for some constant $c$. This allows us to simplify complexity bounds and write $\log(nmM) = O(\log n)$. This is a common assumption in the streaming literature, where $m$ corresponds to the length of the stream. We remark, however, that all our results hold for general $m,n,M$, by replacing each occurrence of $n$ in the communication complexity with $(mnM)$. 
	 \end{remark}
	
	 During execution of the protocol, a player $i \in V$ is only allowed to send a message to a player $j$ if $(i,j) \in E$. Thus, players may only communicate directly with their neighbors in the graph $G$. In contrast to the \textit{broadcast} and \textit{blackboard} models of communication, in the message passing model the message sent by player $i$ to player $j$ is only received by player $j$, and no other player. Upon termination of the protocol, at least one player must hold an approximation of the value $f(\X)$. For the protocols considered in this paper, this player will be fixed and specified by the protocol beforehand. We use $\mathcal{C} \in V$ to denote the distinguished player specified by the protocol to store the approximation at the end of the execution. 

	 Every such communication protocol in this model can be divided into rounds, where on the $j$-th round some subset $S_j \subseteq V$ of the players simultaneously send a message across one of their edges. Although it is not a restriction in the message passing model, our protocols satisfy the additional property that each player communicates \textit{exactly once}, across one of its edges, and that each player will receive messages from its neighbors in exactly one round. Specifically, for each player $i$, there will be exactly one round $j$ where some subset of its neighbors send player $i$ a message, and then player $i$ will send a single message in round $j+1$, and never again communicate. Such protocols are called \textit{one-shot} protocols.
	 
	 The\textit{ total communication} cost of a protocol is the total number of bits sent in all the messages during its execution. The \textit{max-communication} of a protocol is the maximum number of bits sent across any edge over the execution of the protocol. Communication protocols can be either deterministic or randomized. In this paper we consider the standard \textit{public-coin} model of communication, where each player is given shared access to an arbitrarily long string of random bits. This allows players to jointly utilize the same source of randomness without having to communicate it.

	
	Our protocols for $F_p$ estimation will utilize the $p$-stable distribution, $D_p$, which we will now introduce. For $p = 2$, the distribution $D_2$ is the just standard Gaussian distribution. Note for $p<2$, the distributions have heavy tails -- they decay like $x^{-p}$. Thus, for $p<2$, the variance is infinite, and for $p\leq 1$, the expectation is undefined. 
	\begin{definition}\label{def:stable}
		For $0 < p \leq 2$, there exists a probability distribution $D_p$ called the $p$-stable distribution. If $Z \sim D_p$, $p<2$, then the characteristic function of $D_p$ is given by $\ex{e^{itZ}	} = e^{-|t|^p}$. For $p=2$, $D_2$ is the standard Gaussian distribution. Moreover, for any $n$, and any $x \in \R^n$, if $Z_1,\dots,Z_n \sim D_p$ are independent, then $\sum_{i=1}^n Z_i x_i \sim \|x\|_p Z$, where $Z \sim D_p$, and $\sim$ means distributed identically to. 
	\end{definition}

	Standard methods for generating $p$-stable random variables are discussed in \cite{nolan2009stable}. Note that all protocols in this paper will generate these variables only to precision $1/\poly(n)$. 
	For a distribution $D_p$, we write $D_p^n$ to denote the product distribution of $D_p$. Thus $Z \sim D_p^n$ means $Z \in \R^n$ and $Z_1,\dots,Z_n$ are drawn i.i.d. from $D_p$. For reals $a,b\in \R$, we write $a = (1 \pm \eps)b$ to denote the containment $a \in [(1-\eps)b, (1+\eps)b]$. For an integer $t \geq 0$, we write $[t]$ to denote the set $\{1,2,\dots,t\}$.

	\section{Message Passing $F_p$ Estimation, $p > 1$} \label{sec:messagepass}

	In this section, we provide our algorithm for $F_p$ estimation, $1 \leq p \leq 2$, in the message passing model with max communication $O( \frac{1}{\eps^2}(\log d+ \log \log + \log 1/\eps))$, where $d$ is the diameter of the graph $G$.  
	 We begin by specifying the distinguished vertex $\mathcal{C} \in V$ which will hold and output the $F_p$ approximation at the end of the protocol. For a vertex $v \in G$, define its eccentricity $\ttx{ecc}(v) = \max_{u \in V}d(v,u)$, where $d(v,u)$ is the graph distance between $v,u$. We then set $\mathcal{C} \in V$ to be any vertex with minimal eccentricity. Such a vertex is known as a center of $G$. 	
	We now fix a shortest path spanning tree $T$ for $G$, rooted at the distinguished player $\mathcal{C}$. The spanning tree $T$ has the property that the path between $\mathcal{C}$ and any vertex $v \in V$ in the tree $T$ is also a shortest path between $\mathcal{C}$ and $v$ in $G$. Thus the distance between $\mathcal{C}$ and any vertex $v \in V$ is the same in $T$ as it is in $G$. The fact that the depth of $T$ is at most $d$, where $d$ is the diameter of $G$, now follows naturally. Such a shortest path spanning tree $T$ can be easily obtained via a breath first search.

Our algorithms for $F_p$ estimation and heavy hitters are based on a sketching step, followed by a randomized rounding procedure. Specifically, the players jointly generate a randomized matrix $S \in \R^{k \times n}$, such that one can first compute $\sum_i S X_i = S \X \in \R^k$, where $k << n$, and deduce the desired properties of $\X$ from this small sketch. Thus, for each coordinate $j \in [k]$, the players could all sketch their data $SX_i$, and send it up the tree $T$ to the distinguish vertex $\mathcal{C}$.
To improve the communication complexity required to send each coordinate of $SX_i$, the players randomly round the entries in $SX_i$ and send these instead, using only $O(\log \log(mM) + \log(1/\eps) + \log(d))$ bits of space. 

Before we introduce the randomized rounding protocol, we will need a technical Lemma about the behavior of $p$-stables. To prove it, we first use the following fact about the tails of $p$ stables, which can be found in \cite{nolan2009stable}.
\begin{proposition}\label{prop:stabletails}
 If $Z \sim D_p$ for $0 < p < 2$, then $\pr{|Z| \geq \lambda} \leq O(\frac{1}{\lambda^p})$.
\end{proposition}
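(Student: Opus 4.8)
The plan is to derive the tail bound directly from the characteristic function $\ex{e^{itZ}} = e^{-|t|^p}$, which is real and positive (so $Z$ is symmetric about $0$, and $\phi(t) := \ex{e^{itZ}} = \ex{\cos(tZ)}$), using the classical smoothing inequality that controls the tail of a distribution by the integral of $1-\phi$ near the origin. For any symmetric random variable $Z$ and any $u>0$, I would first establish
\[
\pr{|Z| \geq 2/u} \;\leq\; \frac{2}{u}\int_0^u \bigl(1 - \phi(t)\bigr)\,dt .
\]

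To prove this inequality, apply Fubini's theorem to interchange the integral over $t$ with the expectation, using $\int_0^u (1-\cos(tx))\,dt = u - \sin(ux)/x$:
\[
\frac{1}{u}\int_0^u \bigl(1 - \phi(t)\bigr)\,dt \;=\; \ex{\frac{1}{u}\int_0^u \bigl(1 - \cos(tZ)\bigr)\,dt} \;=\; \ex{1 - \frac{\sin(uZ)}{uZ}} .
\]
Since $1 - \frac{\sin x}{x} \geq 0$ for all real $x$, and $1 - \frac{\sin x}{x} \geq 1 - \frac{1}{|x|} \geq \frac{1}{2}$ whenever $|x| \geq 2$, restricting the expectation to the event $\{|uZ| \geq 2\}$ gives $\ex{1 - \frac{\sin(uZ)}{uZ}} \geq \frac{1}{2}\pr{|Z| \geq 2/u}$, which rearranges to the displayed bound.

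Finally, substitute $\phi(t) = e^{-|t|^p}$ and use the elementary inequality $1 - e^{-t^p} \leq t^p$ for $t \geq 0$, so that $\int_0^u (1 - e^{-t^p})\,dt \leq \int_0^u t^p\,dt = \frac{u^{p+1}}{p+1}$. This yields $\pr{|Z| \geq 2/u} \leq \frac{2 u^p}{p+1}$, and setting $u = 2/\lambda$ gives $\pr{|Z| \geq \lambda} \leq \frac{2^{p+1}}{(p+1)\lambda^p} = O(1/\lambda^p)$; the claim is trivial for small $\lambda$ since probabilities are at most $1$. The only mildly non-routine step is recalling and verifying the smoothing inequality — everything after that is a one-line computation. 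Alternatively, one could simply invoke the known tail asymptotics of stable laws from \cite{nolan2009stable}, but the argument above is short and self-contained.
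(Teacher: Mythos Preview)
Your argument is correct. The smoothing inequality (sometimes called the truncation inequality for characteristic functions) is verified exactly as you describe, and the subsequent computation with $1 - e^{-t^p} \leq t^p$ yields the explicit constant $\frac{2^{p+1}}{p+1}$, which is uniformly bounded for $p \in (0,2)$.

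The paper, however, does not prove this proposition at all: it simply states the tail bound as a known fact about $p$-stable laws and cites \cite{nolan2009stable}. Your alternative sentence at the end --- ``one could simply invoke the known tail asymptotics of stable laws from \cite{nolan2009stable}'' --- is precisely what the paper does. So your proof is strictly more than the paper provides: you give a short self-contained derivation from the characteristic function, whereas the paper treats the statement as a black-box citation. The advantage of your route is that it requires no outside reference and makes the constant explicit; the paper's route is shorter on the page but defers the actual content to Nolan's book.
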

\noindent
Also, we use the straightforward fact that $\|X_i\|_p^p \leq \|\sum_{i=1}^m X_i\|_p^p$ for non-negative vectors $X_i$ and $p \geq 1$.
\begin{fact}\label{fact1}
	If $X_1,\dots,X_m \in \R^n$ are entry-wise non-negative vectors and $1 \leq p \leq 2$, then $\sum_{i=1}^m \|X_i\|_p^p \leq \|\sum_{i=1}^m X_i\|_p^p$.
\end{fact}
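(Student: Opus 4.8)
The plan is to prove Fact~\ref{fact1} by a coordinate-wise argument, reducing the vector inequality to the scalar inequality $\sum_{i=1}^m a_i^p \le \left(\sum_{i=1}^m a_i\right)^p$ for non-negative reals $a_1,\dots,a_m \ge 0$ and $p \ge 1$. First I would observe that, by definition of the $\ell_p$ norm, $\sum_{i=1}^m \|X_i\|_p^p = \sum_{i=1}^m \sum_{j=1}^n (X_i)_j^p = \sum_{j=1}^n \left(\sum_{i=1}^m (X_i)_j^p\right)$, while $\|\sum_{i=1}^m X_i\|_p^p = \sum_{j=1}^n \left(\sum_{i=1}^m (X_i)_j\right)^p$, using that all entries are non-negative so no absolute values interfere. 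Hence it suffices to show, for each fixed coordinate $j$, that $\sum_{i=1}^m (X_i)_j^p \le \left(\sum_{i=1}^m (X_i)_j\right)^p$, and then sum over $j \in [n]$.

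For the scalar claim, let $a_i = (X_i)_j \ge 0$ and $S = \sum_{i=1}^m a_i$. If $S = 0$ the inequality is trivial (all $a_i = 0$), so assume $S > 0$ and write $t_i = a_i/S \in [0,1]$, so $\sum_i t_i = 1$. Since $p \ge 1$ and $t_i \in [0,1]$, we have $t_i^p \le t_i$, hence $\sum_i t_i^p \le \sum_i t_i = 1$, i.e. $\sum_i a_i^p \le S^p = \left(\sum_i a_i\right)^p$. Multiplying back through and summing over all coordinates $j$ gives $\sum_{i=1}^m \|X_i\|_p^p \le \|\sum_{i=1}^m X_i\|_p^p$, as desired. (One could alternatively invoke the elementary superadditivity of $x \mapsto x^p$ on $[0,\infty)$ for $p \ge 1$, or monotonicity of $\ell_p$ norms in $p$ combined with $\ell_1$ additivity, but the $t_i^p \le t_i$ argument is the cleanest self-contained route.)

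There is essentially no serious obstacle here; the only thing to be careful about is the non-negativity hypothesis, which is what lets us drop the absolute values and apply $t_i^p \le t_i$ without sign issues — for signed vectors the statement is false (e.g. $m=2$, $X_1 = -X_2$ gives a positive left side and a zero right side). I would also note in passing that the inequality can fail for $p < 1$, which is consistent with the fact that this lemma is only invoked in the regime $1 \le p \le 2$.
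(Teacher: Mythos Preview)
Your proof is correct and follows essentially the same approach as the paper: both reduce to the coordinate-wise scalar inequality $\sum_i a_i^p \le (\sum_i a_i)^p$ for non-negative reals and $p \ge 1$. The only cosmetic difference is that the paper proves the scalar step for two terms via $x^{p-1} \le (x+y)^{p-1}$ and then inducts on $m$, whereas your normalization $t_i = a_i/S$ with $t_i^p \le t_i$ handles all $m$ terms at once.
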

\begin{proof}
	It suffices to consider the values coordinate-wise. If $x,y \geq 0$ and $1 \leq p \leq 2$, then $x^{p-1} \leq (x+y)^{p-1}$, so $x^p + y^p \leq x( x + y)^{p-1} + y(x+y)^{p-1} = (x+y)^{p}$, and the general result then follows from induction on the number of vectors $m$.
\end{proof}

This following lemma will be fundamental to our analysis. Recall by the $p$-stability of the distribution $D_p$, if $Z \sim D_p^n$ and $X_i \in \R^n$ is a vector, then we have $\langle Z,X_i\rangle \sim z \|X_i\|_p$, where $z \sim D_p$. Thus, we have then that $ \sum_{i=1}^m|\langle Z,X_i\rangle |^q  \sim \sum_{i=1}^m  z_i^q \|X_i\|_p^q$ for any $q > 0$ and vectors $X_1,\dots,X_m \in \R^n$. We would like to use the fact that for $1\leq p \leq q$, we have $\sum_{i=1}^m  \|X_i\|_p^q \leq \|\sum_{i=1}^m X_i\|_p^q = \|\X\|_p^q$ if the $X_i$'s are non-negative, and then bound the whole sum by $O(\|\X\|_p^q)$ with good probability. However, there are several complications. First, note that the $z_i$'s are not independent, since they are all generated by the same $Z$. Second, note that the $z_i$'s have heavy tails, so $\ex{z_i^q}$ will generally be infinite (e.g., if $p<2$ and $q=2$). Thus, we must be careful when attempting to bound the overall probability that this sum is large. To do so, we use a level-set analysis for variables with power-law tails.

\begin{lemma}\label{Lemma:lpVariance}
Fix $1 \leq p \leq q \leq 2$, and let $Z = (Z_1,Z_2,\dots,Z_n) \sim D_p^m$. Suppose $X_1,\dots,X_m \in \R^n$ are non-negative vectors, with $\X = \sum_j X_j$. Then for any $\lambda \geq 1$, if either $q-p \geq c > 0$ for some constant $c $ independent of $m$, or if $p=2$, we have
	\[ \bpr{ \sum_{j=1}^m |\langle Z, X_j \rangle |^q \geq C \lambda^{q} \|\X\|_p^q} \leq \frac{1}{\lambda^p} \]
	Otherwise, we have
		\[ \bpr{ \sum_{j=1}^m |\langle Z, X_j \rangle |^q \geq C \log(\lambda m) \lambda^{q} \|\X\|_p^q} \leq \frac{1}{\lambda^p} \]
	where $C$ is some constant (depending only on $c$ in the first case)\footnote{Observe that the extra $\log(m)$ factor is necessary in general when $p=q$. Note that if the support of all the $X_i's$ are distinct indicator vectors $e_i$, then the sum is the sum of the $p$-th powers of independent $p$-stables. If $p=1$, this is a sum of $m$ Cauchy random variables, which we expect to be $\Omega(m \log(m))$. For $1 \neq p < 2$ and $p=q,$ the tail behavior of $p$-stables raised to the $p$-th power is asymptotically similar to that of Cauchy's, so the result is the same here.}.	 
\end{lemma}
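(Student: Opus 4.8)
The plan is to reduce the entire estimate to two per-vector facts. Write $a_j=\|X_j\|_p$ and $Y_j=|\langle Z,X_j\rangle|$, so that $\sum_j|\langle Z,X_j\rangle|^q=\sum_j Y_j^q=:S$ is the quantity to bound. First, by Fact~\ref{fact1}, $\sum_j a_j^p\le\|\X\|_p^p$; since the claimed inequality is homogeneous of degree $q$ in the $X_j$, we may rescale so that $\|\X\|_p=1$, whence $\sum_j a_j^p\le 1$ and in particular $a_j\le 1$ for every $j$. Second, by the $p$-stability of $D_p$ (Definition~\ref{def:stable}) each $Y_j$ is marginally distributed as $a_j|z|$ with $z\sim D_p$, so Proposition~\ref{prop:stabletails} gives $\pr{Y_j\ge t}=O((a_j/t)^p)$ for every $t>0$ (when $p<2$). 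The dependence worry flagged before the lemma is then a non-issue: the $Y_j$ are all linear images of a single vector $Z$ and hence highly dependent, but every bound below comes either from linearity of expectation or from a union bound over $j$, so only the marginal laws of the $Y_j$ are ever used. The case $p=2$ (which forces $q=2$) is disposed of at once, since Gaussians have finite second moment: $\ex{S}=\sum_j a_j^2\,\ex{z^2}=O(\sum_j a_j^2)=O(1)$, and Markov gives $\bpr{S\ge C\lambda^2}\le O(1)/(C\lambda^2)\le\lambda^{-2}$ for $C$ large. So assume $p<2$ henceforth.

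The engine is a dyadic level-set decomposition. For each integer $t$ set $\mathcal I_t=\{\,j:Y_j\in[2^t,2^{t+1})\,\}$, so $S\le\sum_t 2^{(t+1)q}|\mathcal I_t|$. By linearity and the marginal tail bound, $\ex{|\mathcal I_t|}\le\sum_j\pr{Y_j\ge 2^t}=O(2^{-tp}\sum_j a_j^p)=O(2^{-tp})$ for every $t$, while $|\mathcal I_t|\le m$ holds deterministically. I would split $\sum_t 2^{(t+1)q}|\mathcal I_t|$ into three ranges. \emph{(a) Tiny scales $t<-\tfrac1p\log_2 m$:} the deterministic bound $|\mathcal I_t|\le m$ already gives $\sum 2^{(t+1)q}|\mathcal I_t|=O(m\cdot 2^{-(q/p)\log_2 m})=O(m^{1-q/p})=O(1)$ since $q\ge p$. \emph{(b) Moderate scales $-\tfrac1p\log_2 m\le t\le 0$:} here $\ex{2^{(t+1)q}|\mathcal I_t|}=O(2^{t(q-p)})$, and since $q\ge p$ every summand is $O(1)$, so the expected contribution of this range is $O(\min\{\tfrac1p\log m,\ \tfrac1{q-p}\})$ -- the first bound by counting the $O(\log m)$ terms, the second by summing the geometric series when $q>p$ -- and hence, by Markov, this range is $O(\lambda^q\min\{\log m,1/(q-p)\})$ outside probability $\tfrac1{3\lambda^p}$.

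\emph{(c) Large scales $t\ge 1$:} choose $t^\star=\log_2\lambda+O(1)$ large enough that a union bound over $j$ makes $\pr{\max_j Y_j\ge 2^{t^\star+1}}=O(2^{-t^\star p})\le\tfrac1{3\lambda^p}$; off this event $|\mathcal I_t|=0$ for all $t>t^\star$, so this range equals $A:=\sum_{t=1}^{t^\star}2^{(t+1)q}|\mathcal I_t|$, with $\ex{A}=O(\sum_{t=1}^{t^\star}2^{t(q-p)})$. If $q-p\ge c$, this geometric sum has ratio $2^{q-p}\ge 2^c$ and is dominated by its top term, giving $\ex{A}=O_c(2^{t^\star(q-p)})=O_c(\lambda^{q-p})$; otherwise each of the $t^\star=O(\log\lambda)$ terms is at most the top one, giving $\ex{A}=O(\log(\lambda)\,\lambda^{q-p})$. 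Either way Markov bounds $A$ by $O_c(\lambda^q)$, resp.\ $O(\log(\lambda)\lambda^q)$, outside probability $\tfrac1{3\lambda^p}$. A union bound over the three failure events, together with the deterministic $O(1)$ from (a), then yields $S=O_c(\lambda^q)$ when $q-p\ge c$, and $S=O(\log(\lambda m)\lambda^q)$ otherwise (absorbing $O(1)+O(\log m)\lambda^q+O(\log\lambda)\lambda^q$), which is the claim after undoing the normalization.

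The main obstacle, and the reason the statement has two regimes, lives entirely in range (c): the summands of $\sum_t 2^{(t+1)q}\ex{|\mathcal I_t|}$ grow like $2^{t(q-p)}$, so one must truncate at $t^\star\approx\log\lambda$, and the truncated sum $\sum_{t\le t^\star}2^{t(q-p)}$ behaves completely differently according to whether its ratio $2^{q-p}$ is bounded away from $1$ -- when $q=p$ there is no decay and one can only afford the extra $\log(\lambda m)$ factor (which the footnote shows is genuinely required), whereas when $q-p\ge c$ the geometric series collapses to its largest term at the cost of a single $c$-dependent constant. The same $q-p$ versus $\log m$ trade-off resurfaces in range (b), so the ``small'' and ``large'' scales cannot be treated independently; beyond that, everything is routine linearity-of-expectation and Markov bookkeeping, the only care needed being to keep the three failure probabilities summing to $1/\lambda^p$.
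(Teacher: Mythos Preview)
Your proposal is correct and follows essentially the same approach as the paper: both arguments reduce to the marginal tails $\Pr[Y_j\ge t]=O((a_j/t)^p)$ via $p$-stability and Proposition~\ref{prop:stabletails}, perform a dyadic level-set decomposition, bound $\mathbb{E}[|\mathcal I_t|]$ by linearity, handle the top scales by a union bound (your $t^\star$ is the paper's $\mathcal E_0$), the middle scales by Markov on the geometric sum $\sum 2^{t(q-p)}$ (where the $q-p\ge c$ versus $q\approx p$ dichotomy arises), and the bottom scales deterministically. The only cosmetic differences are your upfront normalization $\|\X\|_p=1$ and your indexing the level sets on $Y_j$ rather than $Y_j^q$, and that you aim directly at failure probability $1/\lambda^p$ whereas the paper works with $1/\lambda$ and substitutes $\lambda\mapsto\lambda^p$ at the end.
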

\begin{proof}

	First note, for $p=2$, each term $\sum_{j=1}^m |\langle Z, X_j \rangle |^2$ is distributed as $g^2 \|X_j\|_2^2$, where $g$ is Gaussian, thus $g^2$ is $\chi^2$ distributed with $\ex{g^2}  = 1$. It follows that $\ex{\sum_{j=1}^m |\langle Z, X_j \rangle |^2 } = \sum_{j=1}^m \|X_j\|_2^2 \leq \|\X\|_2^2$, where the inequality follows from the Fact \ref{fact1}, and the result then follows from Markov's inequality. 
	
Now suppose $p \in [1,2)$.	Again, by $p$-stability we have $ \sum_{j=1}^m |\langle Z, X_j \rangle |^q  = \sum_{j=1}^m \|X_j\|_p^q \hat{Z}_j^q$, where $\hat{Z}_j \sim D_p$. Note though that the $\hat{Z}_j$'s are not independent. Define $I_k = \{ j \in [m] \;|\;2^{-k} \|\X\|_p^q  \leq \hat{Z}_j^q \|X_j\|_p^q	\leq 2^{-k+1} \|\X\|_p^q  \}$.  So if $j \in I_k$, then we have $\hat{Z}_j^q \geq 2^{-k} \frac{\|X\|_p^q }{|X_j|_p^q}$. For each $k$, the goal is to bound the contribution of the terms in the set $I_k$ to the overall sum. By Proposition \ref{prop:stabletails}, there is some constant $c \geq 1$ such that
\begin{equation}
\begin{split}
	\pr{j \in I_k} &\leq c\left(2^{k} \frac{\|X_j\|_p^q}{ \|\X\|_p^q}\right)^{p/q} \\
	&= c2^{pk/q} \frac{\|X_j\|_p^p}{ \|\X\|_p^p}\\
\end{split}
\end{equation}
	 By Fact \ref{fact1}, we have $\sum_j \|X_j\|_p^p \leq \|\X\|_p^p$. 
So we obtain $\ex{|I_k|} \leq  c2^{pk/q}\sum_j \frac{\|X_j\|_p^p}{\|\X\|_p^p} \leq c2^{pk/q}$ for any $k \in \Z$.

Now consider the event $\mathcal{E}_0$ that $I_k = \emptyset$ for all $k \leq - \frac{q}{p}\log( 10C'\lambda)$. We first analyze $\pr{\mathcal{E}_0}$. Note that for a fixed $j$, the probability that $j \in I_k$ for some $k \leq -\frac{q}{p}\log(10C' \lambda)$ is at most $ \lambda^{-1} \frac{\|X_j\|_p^p}{ \|\X\|_p^p}/10$. So the expected number of $j \in [n]$ with $j \in I_k$ for some $k \leq - \frac{q}{p}\log(10 C'\lambda)$ is at most $\lambda^{-1} \sum_j \frac{\|X_j\|_p^p}{ \|\X\|_p^p}/10 \leq  \lambda^{-1}/10$. So by Markov's inequality, we conclude that $\pr{\mathcal{E}_0} \leq \frac{1}{10\lambda}$.
Now note that for any $k \in \Z$, the contribution of the items $j \in I_k$ is at most $|I_k|2^{-k + 1}\|\X\|_p^q$. Thus, an upper bound on the expected contribution of the items in sets $I_k$ for $ - \frac{q}{p}\log( 10C'\lambda) \leq k \leq   4 \log(m \lambda )$ is given by

	\[\sum_{j= - \frac{q}{p}\log( 10C'\lambda) }^{ 4 \log(m \lambda )}\bex{|I_k| 2^{-k + 1}\|\X\|_p^q } \leq \sum_{j= - \frac{q}{p}\log( 10C'\lambda) }^{ 4 \log(m \lambda )}c 2^{-k(1 - p/q)} \|\X\|_p^q \]
Now if $q-p >c'$ for some constant $c'$ independent of $m$, then the above sum is geometric, and at most $O(2^{ (\frac{q}{p} - 1)\log(10C'\lambda)  } \|\X\|_p^q ) = O( (10 C' \lambda)^{q/p - 1} \|\X\|_p^q)$. If $p/q$ is arbitrarily small, the above can be bounded by $O(\log(m \lambda) (10 C' \lambda)^{q/p - 1} \|\X\|_p^q)$. Setting $\delta = O(\log (m\lambda ))$ if this is the case, and $\delta = 1$ otherwise, we can apply Markov's to obtain:

\[ \bpr{\sum_{j= - \frac{q}{p}\log( 10C'\lambda) }^{ 4 \log(m \lambda )}	|I_k| 2^{-k + 1}\|\X\|_p^q >  \delta(10 C' \lambda)^{q/p} \|\X\|_p^q } < \frac{1}{10 C' \lambda}\]

Call the above event $\mathcal{E}_1$. Conditioned on $\mathcal{E}_0 \cup \mathcal{E}_1$ not occurring, which occurs with probability at least $1 - \frac{1}{\lambda}$, it follows that the contribution of the items in level sets $I_k$ for $k \leq 4 \log(m \lambda)$ is at most $O(\delta \lambda^{q/p} \|\X\|_p^q)$. Now for $k \geq 4 \log(m \lambda)$, note that the contribution of any term $j \in I_k$ is at most $\frac{1}{m^2} \|\X\|_p^2$, and so the contribution of all such items to the total sum is at most $\frac{1}{m}\|\X\|_p^2$. Thus $ \sum_{j=1}^m |\langle Z, X_j \rangle |^q  = O(\delta \lambda^{q/p} \|\X\|_p^q)$ with probability at least $1/\lambda$, which is the desired result after replacing $\lambda$ with $\lambda^{p}$ as needed.

\end{proof}

While the above result was described to specifically hold only for $p$-stable random variables, it is straightforward to show that the result for $p=2$ holds when Gaussian random variables are replaced with Rademacher random variables: i.e., variables $Z_i$ that are uniform over $\{1,-1\}$.
\begin{corollary}\label{cor:rademacher}
	Suppose $Z = (Z_1,\dots,Z_m)$ where the $Z_i$'s are uniform over $\{1,-1\}$ and pairwise independent, and let $X_1,\dots,X_m$ be non-negative vectors with $\X = \sum_j X_j$. Then for any $\lambda \geq 1$, we have
		\[ \bpr{ \sum_{j=1}^m |\langle Z, X_j \rangle |^2 \geq  \lambda \|\X\|_2^2} \leq \frac{1}{\lambda} \]
\end{corollary}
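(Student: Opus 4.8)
The plan is to mimic the structure of the proof of Lemma \ref{Lemma:lpVariance} in the case $p=2$, but replacing the appeals to $p$-stability with elementary second-moment computations that only require pairwise independence of the signs. Write $S = \sum_{j=1}^m |\langle Z, X_j\rangle|^2$. Since each $X_j$ is non-negative, expanding the inner products gives $\langle Z, X_j\rangle^2 = \sum_{k,\ell} Z_k Z_\ell (X_j)_k (X_j)_\ell$, and taking expectations using pairwise independence (so $\ex{Z_k Z_\ell} = \delta_{k\ell}$, and $\ex{Z_k^2}=1$) yields $\ex{\langle Z, X_j\rangle^2} = \|X_j\|_2^2$. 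Summing over $j$ and invoking Fact \ref{fact1} with $p=2$ (which gives $\sum_j \|X_j\|_2^2 \leq \|\X\|_2^2$) we get $\ex{S} \leq \|\X\|_2^2$.

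With the first-moment bound in hand, the corollary follows immediately from Markov's inequality applied to the non-negative random variable $S$: for any $\lambda \geq 1$,
\[
\bpr{S \geq \lambda \|\X\|_2^2} \leq \frac{\ex{S}}{\lambda \|\X\|_2^2} \leq \frac{1}{\lambda}.
\]
So in fact no level-set analysis is needed at all in this case — the $p=2$ branch of Lemma \ref{Lemma:lpVariance} already went through Markov on the first moment, and the only thing being used there about Gaussians was that $\ex{g^2}=1$, which holds verbatim for Rademachers. Pairwise independence is exactly what is required to kill the cross terms $\ex{Z_k Z_\ell}$ for $k \neq \ell$; full independence is not needed. (One should also note the trivial edge case $\|\X\|_2 = 0$, which forces all $X_j = 0$ by non-negativity, so $S = 0$ and the bound is vacuous.)

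There is no real obstacle here; the one point requiring a line of care is the use of Fact \ref{fact1} at $p=2$, namely that $\sum_j \|X_j\|_2^2 \leq \|\sum_j X_j\|_2^2$ for non-negative vectors. This is immediate from expanding $\|\sum_j X_j\|_2^2 = \sum_j \|X_j\|_2^2 + \sum_{j \neq j'} \langle X_j, X_{j'}\rangle$ and observing all the cross terms $\langle X_j, X_{j'}\rangle$ are non-negative since the $X_j$ are entrywise non-negative — so it does not even need the inductive argument in the proof of Fact \ref{fact1}. Thus the whole corollary is: compute $\ex{S}$ via pairwise independence, bound it by $\|\X\|_2^2$ via non-negativity, and apply Markov.
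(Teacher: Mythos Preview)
Your proof is correct and follows essentially the same approach as the paper: compute $\ex{|\langle Z, X_j\rangle|^2} = \|X_j\|_2^2$ via pairwise independence, sum and apply Fact~\ref{fact1} to bound $\ex{S} \leq \|\X\|_2^2$, then invoke Markov. The paper's proof is a one-line version of exactly this argument; your additional remarks (the explicit expansion killing cross terms, the direct verification of Fact~\ref{fact1} at $p=2$, and the $\|\X\|_2=0$ edge case) are all valid elaborations but not departures from the paper's method.
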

\begin{proof}
	We have $\ex{|\langle Z, X_j \rangle |^2} = \|X_j\|_2^2$, so the result follows from an application of Markov's inequality and using the fact that $\sum_j\|X_j\|_2^2 \leq \|\X\|_2^2$.
\end{proof}

We now note that for $p=2$, we obtain much stronger concentration. We will need this for the approximate matrix product section.
\begin{corollary}\label{cor:p2highprob}
Let $Z = (Z_1,Z_2,\dots,Z_n) \sim D_2^m$ be i.i.d. Gaussian. Suppose $X_1,\dots,X_m \in \R^n$ are non-negative vectors, with $\X = \sum_j X_j$. Then for any $\lambda \geq c\log(m)$ for some sufficiently large constant $c$, we have
	\[ \bpr{ \sum_{j=1}^m |\langle Z, X_j \rangle | \geq  \lambda \|\X\|_2^2} \leq \exp(-C\lambda) \]
	where $C$ is some universal constant.	
\end{corollary}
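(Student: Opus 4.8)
The plan is to recognize $\sum_{j=1}^m |\langle Z, X_j\rangle|$ as the $\ell_1$ norm $\|MZ\|_1$ of the Gaussian vector $MZ$, where $M \in \R^{m\times n}$ is the matrix with $j$-th row $X_j^T$, and then control its upper tail by Gaussian concentration for Lipschitz functions. The Lipschitz constant of $f(Z) := \|MZ\|_1$ with respect to the Euclidean norm on $Z$ is exactly the operator norm $\|M\|_{2\to 1} = \sup_{\|v\|_2 = 1} \|Mv\|_1 = \sup_{\epsilon \in \{-1,1\}^m} \|\sum_{j=1}^m \epsilon_j X_j\|_2$. This is the one place non-negativity enters: for every sign pattern $\epsilon$ and every coordinate $i$ we have $|\sum_j \epsilon_j (X_j)_i| \le \sum_j (X_j)_i = \X_i$, so $\|\sum_j \epsilon_j X_j\|_2 \le \|\X\|_2$, and hence $f$ is $\|\X\|_2$-Lipschitz. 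Standard Gaussian concentration then gives $\pr{f(Z) \ge \ex{f(Z)} + t} \le \exp(-t^2/(2\|\X\|_2^2))$ for all $t \ge 0$.

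The next step is to bound the mean. Since $\langle Z, X_j\rangle \sim \|X_j\|_2\, g$ for a standard Gaussian $g$, we get $\ex{f(Z)} = \sqrt{2/\pi}\sum_{j=1}^m \|X_j\|_2$. Invoking the standing non-negative data assumption that each $X_j$ has integer entries, every nonzero $X_j$ has $\|X_j\|_2 \ge 1$, hence $\|X_j\|_2 \le \|X_j\|_2^2$; combined with Fact \ref{fact1} applied with $p=2$ this yields $\sum_j \|X_j\|_2 \le \sum_j \|X_j\|_2^2 \le \|\X\|_2^2$, so $\ex{f(Z)} \le \sqrt{2/\pi}\,\|\X\|_2^2 \le \|\X\|_2^2$.

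Combining the two pieces with $t = (\lambda-1)\|\X\|_2^2 \ge 0$ gives
\[ \bpr{\sum_{j=1}^m |\langle Z, X_j\rangle| \ge \lambda \|\X\|_2^2} \le \exp\!\left(-\frac{(\lambda-1)^2\|\X\|_2^4}{2\|\X\|_2^2}\right) = \exp\!\left(-\frac{(\lambda-1)^2\|\X\|_2^2}{2}\right) \le \exp(-C\lambda), \]
where the last step uses $\|\X\|_2^2 \ge 1$ (as $\X$ is a nonzero integer vector; the case $\X = 0$ is trivial) and that $\lambda$ exceeds a sufficiently large absolute constant, so that $(\lambda-1)^2/2 \ge C\lambda$ for a universal $C$. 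In this argument the hypothesis $\lambda \ge c\log m$ is far more than needed; the $\log m$ is, however, exactly what a cruder coordinate-wise route requires: each $\langle Z, X_j\rangle$ is $N(0,\|X_j\|_2^2)$, so a union bound gives $|\langle Z, X_j\rangle| \le O(\sqrt\lambda)\,\|X_j\|_2$ for all $j$ simultaneously with probability $1 - 2m e^{-\Omega(\lambda)} \ge 1 - e^{-C\lambda}$ once $\lambda \ge c\log m$, and on that event $\sum_j |\langle Z, X_j\rangle| \le O(\sqrt\lambda)\sum_j \|X_j\|_2 \le O(\sqrt\lambda)\|\X\|_2^2 \le \lambda\|\X\|_2^2$.

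I expect no serious technical obstacle here; the only genuine subtlety is conceptual. The bound is stated against $\|\X\|_2^2$ (not $\|\X\|_2$), and this is correct only because the inputs are integral, which is what forces both $\|X_j\|_2 \ge 1$ on nonzero blocks and $\|\X\|_2 \ge 1$ — without integrality one can take $X_j = \tfrac{1}{\sqrt m} e_j$ and make $\sum_j \|X_j\|_2 = \sqrt m \gg 1 = \|\X\|_2^2$, breaking the claim. The remaining ingredients (identifying the Lipschitz constant with the $\ell_2 \to \ell_1$ operator norm, the coordinate-wise domination from non-negativity, and $\ex{|g|} = \sqrt{2/\pi}$) are routine, and the $p=2$ case is precisely the setting where the heavy-tail complications of Lemma \ref{Lemma:lpVariance} disappear and clean Gaussian concentration is available.
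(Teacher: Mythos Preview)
Your argument is correct for the statement \emph{as written}, but you should be aware that the statement itself carries a typo: the paper's own proof bounds $\sum_{j=1}^m |\langle Z, X_j\rangle|^2$ (with a square), not $\sum_{j=1}^m |\langle Z, X_j\rangle|$, and the squared version is also what is actually invoked later in Lemma~\ref{lemma:roundingmain}. For the intended squared quantity the integrality gymnastics you rightly flagged as suspicious become unnecessary: one has $|\langle Z, X_j\rangle|^2 = g_j^2\|X_j\|_2^2$ with $g_j^2$ a $\chi^2_1$ variable, so $\Pr[g_j^2 > \lambda] \le e^{-C\lambda}$; a union bound over the $m$ terms (this is exactly where $\lambda \ge c\log m$ enters) together with Fact~\ref{fact1} gives $\sum_j |\langle Z, X_j\rangle|^2 \le \lambda \sum_j \|X_j\|_2^2 \le \lambda\|\X\|_2^2$ with the claimed probability. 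That is the paper's proof, and it is essentially your ``cruder coordinate-wise route'' applied to the correct (squared) sum.

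Your primary approach via Gaussian Lipschitz concentration is a genuinely different and rather elegant route for the non-squared statement: identifying the Lipschitz constant of $Z \mapsto \|MZ\|_1$ with the $\ell_2\!\to\!\ell_1$ operator norm $\sup_{\epsilon\in\{\pm1\}^m}\|\sum_j \epsilon_j X_j\|_2$, and then using coordinate-wise domination from non-negativity to bound this by $\|\X\|_2$, is clean and in fact yields a tail that is super-exponential in $\lambda$. The price is that to hit the $\|\X\|_2^2$ scale appearing in the (typo'd) display you had to invoke integrality of the $X_j$ to force $\|X_j\|_2 \le \|X_j\|_2^2$ and $\|\X\|_2^2 \ge 1$. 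Once the missing square is restored, the $\|\X\|_2^2$ scale is the natural one, integrality is irrelevant, and the simple union-bound argument the paper gives is the most direct proof.
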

\begin{proof}
	Recall for $j \in [m]$, $|\langle Z, X_j \rangle |^2$ is distributed as $g_j^2 \|X_j\|_2^2$, where $g_j$ is Gaussian, thus $g_j^2$ is $\chi^2$ distributed with $\ex{g_j^2}  = 1$. Now if $z$ is $\chi^2$ distributed, we have $\pr{z > \lambda} < e^{-C\lambda}$ for some absolute constant $C$. So   $\pr{|\langle Z, X_j \rangle |^2 >\log(m)\lambda } < e^{-C\lambda}/m$ for a slightly different constant $C$. We can union bound over all $m$ terms, so we have $\sum_{j=1}^m |\langle Z, X_j \rangle |^2 \leq \sum_{j=1}^m \lambda \|X_j\|_2^2 \leq \lambda \|\X\|_2^2$ with probability $1- \exp(-C\lambda)$ as needed.
\end{proof}

\subsection{Randomized Rounding of Sketches}\label{sec:randomround}

We now introduce our randomized rounding protocol.	Consider non-negative integral vectors $X_1,X_2,\dots,X_m \in \Z^n_{\geq 0}$, with $\X = \sum_{i=1}^n X_i$. Fix a message passing topology $G = (V,E)$, where each player $i \in V$ is given as input $X_i$. Fix any vertex $\mathcal{C}$ that is a center of $G$, and let $T$ be a shortest path spanning tree of $G$ rooted at $\mathcal{C}$ as described at the beginning of the section. Let $d$ be the depth of $T$. The players use shared randomness to choose a random vector $Z \in \R^n$, and their goal is to approximately compute $\langle Z, \X\rangle = \langle Z, \sum_{i=1}^m X_i \rangle$. 
	The goal of this section is to develop a $d$-round randomized rounding protocol, so that at the end of the protocol the approximation to  $\langle Z, \X\rangle$ is stored at the vertex $\mathcal{C}$.

	We begin by introducing the rounding primitive which we use in the protocol. Fix $\eps >0$, and let  $\gamma = (\eps \delta /\log(nm))^C$, for a sufficiently large constant $C>1$. For any real value $r \in \R$, let $i_r \in \Z$ and $\alpha_i \in \{1,-1\}$ be such that $(1+ \gamma)^{i_r}  \leq \alpha_i  r \leq (1+ \gamma)^{i_r+1}$.  Now fix $p_r$ such that:
	\[\alpha_i r = p_r (1+ \gamma)^{i_r+1} + (1-p_r) (1+ \gamma)^{i_r}  \]
	We then define the rounding random variable $\Gamma(r)$ by
	\[ \Gamma(r) =  \begin{cases}	0 & \text{if } r=0 \\ 
	 \alpha_i(1+ \gamma)^{i_r+1} & \text{with probability } p_r\\
	\alpha_i(1+ \gamma)^{i_r} & \text{with probability } 1-p_r \\	
	\end{cases}\]

	The following proposition is clear from the construction of $p_r$ and the fact that the error is deterministically bounded by $\gamma|r|$.
\begin{proposition}
For any $r \in R$,	We have  $\ex{\Gamma(r)} = r$ and $\var{\Gamma(r)} \leq r^2 \gamma^2$
\end{proposition}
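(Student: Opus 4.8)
The plan is to verify the two claims directly from the explicit construction of $\Gamma(r)$, treating the trivial case $r = 0$ separately (where $\Gamma(r) = 0$ deterministically, so both claims hold) and otherwise working with the fixed parameters $i_r, \alpha_i, p_r$ from the definition.

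First I would compute the expectation. Since $\Gamma(r)$ takes the value $\alpha_i(1+\gamma)^{i_r+1}$ with probability $p_r$ and $\alpha_i(1+\gamma)^{i_r}$ with probability $1-p_r$, we have $\ex{\Gamma(r)} = p_r\,\alpha_i(1+\gamma)^{i_r+1} + (1-p_r)\,\alpha_i(1+\gamma)^{i_r}$, which by the defining equation for $p_r$ is exactly $\alpha_i \cdot \alpha_i r = r$ (using $\alpha_i^2 = 1$). So $\Gamma(r)$ is an unbiased estimator of $r$.

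Next I would bound the variance. The key observation is that $\Gamma(r)$ is supported on just two points, $\alpha_i(1+\gamma)^{i_r}$ and $\alpha_i(1+\gamma)^{i_r+1}$, and since $\alpha_i r$ lies between $(1+\gamma)^{i_r}$ and $(1+\gamma)^{i_r+1}$, the deviation $|\Gamma(r) - r|$ is deterministically at most the gap between these two points, namely $(1+\gamma)^{i_r+1} - (1+\gamma)^{i_r} = \gamma(1+\gamma)^{i_r} \leq \gamma|r|$, where the last inequality uses $(1+\gamma)^{i_r} \leq \alpha_i r = |r|$. Since $\Gamma(r)$ is unbiased, $\var{\Gamma(r)} = \ex{(\Gamma(r)-r)^2} \leq (\gamma|r|)^2 = \gamma^2 r^2$, as claimed. (One can get a slightly sharper bound of $p_r(1-p_r)\gamma^2(1+\gamma)^{2i_r}$ via the two-point-distribution variance formula, but the stated bound suffices and follows immediately from the deterministic error bound.)

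I do not expect any real obstacle here — this is a routine verification. The only things to be careful about are: handling $r = 0$ separately so the definition of $i_r$ and $\alpha_i$ makes sense; noting that $\alpha_i \in \{1,-1\}$ so $\alpha_i^2 = 1$; and being explicit that the deterministic error bound $|\Gamma(r) - r| \leq \gamma|r|$ is what drives the variance bound rather than needing to expand $p_r(1-p_r)$. The proposition as stated is essentially immediate once these points are made explicit, which is presumably why the authors call it "clear from the construction."
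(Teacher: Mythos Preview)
Your proposal is correct and matches the paper's reasoning exactly: the paper simply asserts the proposition is ``clear from the construction of $p_r$ and the fact that the error is deterministically bounded by $\gamma|r|$,'' and your write-up unpacks precisely those two points (unbiasedness from the defining equation for $p_r$, and the variance bound from the pointwise error bound $|\Gamma(r)-r|\le\gamma|r|$).
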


 We will now describe our rounding protocol.
 We partition $T$ into $d$ layers, so that all nodes at distance $d-t$ from $\mathcal{C}$ in $T$ are put in layer $t$. Define $L_t \subset [n]$ to be the set of players at layer $t$ in the tree. For any vertex $u \in G$, let $T_u$ be the subtree of $T$ rooted at $u$ (including the vertex $u$). For any player $i$, let $C_i \subset [n]$ be the set of children of $i$ in the tree $T$. The procedure for all players $j \in V$ is then given in Figure \ref{alg:rounding}.

\begin{center}
	
	\begin{figure}[h!]
		\fbox{\parbox{\textwidth}{ Procedure for node $j$ in layer $i$:
				
				\begin{enumerate}[topsep=0pt,itemsep=-1ex,partopsep=1ex,parsep=1ex] 
					\item Choose random vector $Z \in \R^{n}$ using shared randomness. 
					\item Receive rounded sketches $r_{j_1},r_{j_2},\dots,r_{j_{t_j}} \in \R$ from the $t_j$ children of node $j$ in the prior layer (if any such children exist). 
					\item Compute $x_j = \langle X_j, Z\rangle+ r_{j_1} + r_{j_2} + \dots + r_{j_t} \in \R$.
					\item Compute $r_j =\Gamma(x_j)$. If player $j \neq \mathcal{C}$, then send $r_j$ it to the parent node of $j$ in $T$. If $j = \mathcal{C}$, then output $r_j$ as the approximation to $\langle Z,\X\rangle$.
				\end{enumerate}
		}}\caption{Recursive Randomized Rounding} \label{alg:rounding}
	\end{figure}
	
\end{center}

 For each player $i$ in layer $0$, they take their input $X_i$, and compute $\langle Z,X_i\rangle$. They then round their values as $r_i = \Gamma(\langle Z,X_i\rangle)$,  where the randomness used for the rounding function $\Gamma$ is drawn independently for each call to $\Gamma$. Then player $i$ sends $r_i$ to their parent in $T$. In general, consider any player $i$ at depth $j > 0$ of $T$. At the end of the $j$-th round, player $i$ will receive a rounded value $r_\ell$ for every child vertex $\ell \in C_i$. They then compute $x_i = \langle Z,X_i\rangle + \sum_{\ell \in C_i} r_\ell$, and $r_i = \Gamma(x_i)$, and send $r_i$ to their parent in $T$. This continues until, on round $d$, the center vertex $\mathcal{C}$ receives $r_\ell$ for all children $\ell \in C_\mathcal{C}$. The center $\mathcal{C}$ then outputs $r_\mathcal{C} =\langle Z, X_\mathcal{C} \rangle + \sum_{\ell \in C_\mathcal{C}} r_\ell$ as the approximation.

We will now demonstrate that if the entries $Z_i$ of $Z$ are drawn independently from a $p$-stable distribution, we obtain a good estimate of the product $\langle Z, \X \rangle$.  For any player $i$, let $Q_i =  \sum_{u \in T_i} X_u$, and
$y_i =  \langle Z, Q_i \rangle$. Then define the error $e_i$ at player $i$ as $e_i = y_i - r_i$. We first prove a proposition that states the expectation of the error $e_i$ for any player $i$ is zero, and then the main lemma which bounds the variance of $e_i$. The error bound of the protocol at $\mathcal{C}$ then results from an application of Chebyshev's inequality.

\begin{proposition}
	For any player $i$, we have $\ex{e_i} = 0$. Moreover, for any players $i,j$ such that $i \notin T_j$ and $j \notin T_i$, the variables $e_i$ and $e_j$ are statistically independent. 
\end{proposition}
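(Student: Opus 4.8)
The plan is to prove both assertions by induction on the height of the subtree $T_i$, after first recording a clean recursive identity. Write $e_\ell = y_\ell - r_\ell$, so $r_\ell = y_\ell - e_\ell$, and note that $T_i = \{i\}\cup\bigcup_{\ell\in C_i}T_\ell$ with the $T_\ell$ disjoint, so $Q_i = X_i + \sum_{\ell\in C_i}Q_\ell$ and hence $y_i = \langle Z,X_i\rangle + \sum_{\ell\in C_i}y_\ell$. Combining this with the definition $x_i = \langle Z,X_i\rangle + \sum_{\ell\in C_i}r_\ell$ gives the key identity
\[ x_i = y_i - \sum_{\ell\in C_i} e_\ell, \qquad\text{so}\qquad e_i = y_i - r_i = y_i - \Gamma(x_i). \]
The unbiasedness $\mathbb{E}[\Gamma(x_i)] = x_i$ from the preceding proposition is the engine of the whole argument.

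For $\mathbb{E}[e_i] = 0$: introduce the filtration $\mathcal{G}_i = \sigma\big(Z,\{\omega_u : u\in T_i,\ u\neq i\}\big)$, where $\omega_u$ denotes the fresh independent coin used by the call to $\Gamma$ at node $u$. By construction $x_i$ and $y_i$ are $\mathcal{G}_i$-measurable, while the coin $\omega_i$ forming $r_i = \Gamma(x_i)$ is independent of $\mathcal{G}_i$; hence $\mathbb{E}[e_i\mid\mathcal{G}_i] = y_i - \mathbb{E}[\Gamma(x_i)\mid\mathcal{G}_i] = y_i - x_i = \sum_{\ell\in C_i}e_\ell$. In the base case $i$ is a leaf, $C_i=\emptyset$, and this already yields $\mathbb{E}[e_i\mid\mathcal{G}_i]=0$; in the inductive step, take expectations and apply the inductive hypothesis $\mathbb{E}[e_\ell]=0$ for each child to conclude $\mathbb{E}[e_i]=0$. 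The same computation in fact gives $\mathbb{E}[e_i\mid Z]=0$, which is the form used later, since the variance analysis conditions on $Z$ before invoking Lemma~\ref{Lemma:lpVariance}.

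For the independence claim: first show, again by induction on subtree height, that $r_i$ — and therefore $e_i = \langle Z,Q_i\rangle - r_i$ — is a deterministic measurable function of $Z$ and of the coin tuple $\{\omega_u : u\in T_i\}$ alone; this is immediate from $r_i = \Gamma\big(\langle Z,X_i\rangle + \sum_{\ell\in C_i}r_\ell;\ \omega_i\big)$ together with the inductive claim for each $r_\ell$. Now if $i\notin T_j$ and $j\notin T_i$, then in the rooted tree $T$ neither of $i,j$ is an ancestor of the other, so the subtrees $T_i$ and $T_j$ are disjoint, and thus $\{\omega_u : u\in T_i\}$ and $\{\omega_u : u\in T_j\}$ are disjoint collections of mutually independent variables. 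Conditioned on $Z$, $e_i$ and $e_j$ are then functions of independent randomness and hence independent — and this conditional-on-$Z$ independence is precisely what the subsequent variance computation needs.

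The main obstacle I anticipate is not a single calculation but the bookkeeping: pinning down the identity $x_i = y_i - \sum_{\ell\in C_i}e_\ell$, choosing the filtration so that $x_i$ is measurable while the node's own rounding coin stays fresh, and being careful that each $e_i$ genuinely depends on the shared vector $Z$, so that the independence must be read conditionally on $Z$. This last point is harmless here because the rounding-error variance bound is itself established after conditioning on $Z$, but it is the step where the statement is easiest to misread.
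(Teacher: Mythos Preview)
Your proof is correct and follows essentially the same approach as the paper: induction on the layer (equivalently, subtree height) for the mean-zero claim, using unbiasedness of $\Gamma$ and the identity $x_i = y_i - \sum_{\ell\in C_i} e_\ell$; and the disjoint-randomness observation for the independence claim. If anything you are more careful than the paper, which asserts unconditional independence while both $e_i$ and $e_j$ depend on the shared vector $Z$---your reading that the needed statement is independence \emph{conditional on $Z$} (together with $\mathbb{E}[e_i\mid Z]=0$) is exactly right and is what Lemma~\ref{lemma:roundingmain} actually uses.
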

\begin{proof}
We prove the first claim by induction on the layer $j$ such that $i \in L_j$. For the base case of $j=0$, player $i$ has no children, so $e_i =  r - \Gamma(r)$ for some $r \in \R$, from which the result follows from the fact that $\Gamma(r)$ being an unbiased estimate of $r$. Now supposing the result holds for all players in $L_j$, and let $i \in L_{j+1}$ with children $i_1,\dots,i_k \in [n]$. Then 

\begin{equation}
\begin{split}
\bex{e_i} & = \bex{y_i - \Gamma(y_i + e_{i_1} + \dots + e_{i_k})} \\
 &=  \mathbb{E}_{e_{i_1}, \dots, e_{i_k}} \left[ \bex{y_i - \Gamma(y_i + e_{i_1} + \dots + e_{i_k}) \;\ \big| \; e_{i_1},\dots,e_{i_k}} 	\right] \\ 
  &=  \mathbb{E}_{e_{i_1}, \dots, e_{i_k}} \left[e_{i_1} + \dots + e_{i_k}	\right] \\ 
  &=0\\
\end{split}                                           
\end{equation}
	which completes the first claim. The second claim follows from the fact that the randomness used to generate $e_i$ and $e_j$ is distinct if the subtrees of player $i$ and $j$ are disjoint. 
\end{proof}

\begin{lemma}\label{lemma:roundingmain}
	Fix $p \in [1,2],$ and let $Z = (Z_1,Z_2,\dots,Z_n)  \sim D_p^n$. Then the above procedure when run on $\gamma = (\eps \delta /(d\log(nm)))^C$ for a sufficiently large constant $C$, produces an estimate $r_\mathcal{C}$ of $\langle Z, \X \rangle$, held at the center vertex $\mathcal{C}$, such that $\ex{r_\mathcal{C}} = \langle Z,\X\rangle$. Moreover, over the randomness used to draw $Z$, with probability $1-\delta$ for $p<2$, and with probability $1-e^{-1/\delta}$ for Gaussian $Z$, we have $\ex{( r_\mathcal{C} -\langle Z,\X\rangle )^2} \leq (\eps/\delta)^2 \|\X\|_p$. Thus, with probability at least $1-O(\delta)$, we have
	 \[ |r_\mathcal{C} - \langle Z, \X \rangle| \leq \eps \|\X\|_p\]
	 Moreover, if $Z = (Z_1,Z_2,\dots,Z_n) \in \R^n$ where each $Z_i \in \{1,-1\}$ is a 4-wise independent Rademacher variable, then the above bound holds with $p=2$ (and with probability $1-\delta$).
\end{lemma}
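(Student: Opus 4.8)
The plan is to bound the variance (over the rounding coins, with $Z$ held fixed) of the final error $e_\mathcal{C}=\langle Z,\X\rangle-r_\mathcal{C}$, show it is $O\!\left(\gamma^2\sum_{u\in V}y_u^2\right)$, then control $\sum_{u\in V}y_u^2$ over the draw of $Z$ using Lemma~\ref{Lemma:lpVariance} (or Corollaries~\ref{cor:p2highprob} and~\ref{cor:rademacher}), and finally apply Chebyshev's inequality. Since the preceding Proposition already gives $\ex{e_i}=0$ for every node $i$, in particular $\ex{r_\mathcal{C}}=\langle Z,\X\rangle$, only the variance estimate and the Chebyshev step remain.

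For the variance, introduce at each node $i$ the \emph{local} rounding error $\xi_i:=x_i-\Gamma(x_i)$, which by the Proposition on $\Gamma$ satisfies $\ex{\xi_i\mid x_i}=0$ and $\ex{\xi_i^2\mid x_i}\le\gamma^2 x_i^2$. Using $y_i=\langle Z,X_i\rangle+\sum_{\ell\in C_i}y_\ell$ and the definitions of $x_i,r_i$, a short computation gives the recursion $e_i=\xi_i+\sum_{\ell\in C_i}e_\ell$ (up to sign), which telescopes to $e_\mathcal{C}=\sum_{u\in V}\xi_u$ and $x_i=y_i-\sum_{u\in T_i\setminus\{i\}}\xi_u$. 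Ordering the vertices so that every child precedes its parent and letting $\mathcal{F}_t$ be generated by $Z$ and the rounding coins of the first $t$ vertices, $x_{v_t}$ is $\mathcal{F}_{t-1}$-measurable while $\ex{\xi_{v_t}\mid\mathcal{F}_{t-1}}=0$, so $(\xi_{v_t})_t$ is a martingale difference sequence; in particular the $\xi_u$ are mean zero and pairwise uncorrelated. Writing $W_i:=\ex{e_i^2}=\sum_{u\in T_i}\ex{\xi_u^2}$, the identity $x_i=y_i-\sum_{u\in T_i\setminus\{i\}}\xi_u$ yields $\ex{x_i^2}=y_i^2+W_i-\ex{\xi_i^2}$, so $\ex{\xi_i^2}\le\gamma^2\ex{x_i^2}$ gives $\ex{\xi_i^2}\le\gamma^2(y_i^2+W_i)$ and hence $(1-\gamma^2)W_i\le\gamma^2 y_i^2+\sum_{\ell\in C_i}W_\ell$. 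Unrolling this down the tree of depth $d$, and using that $\gamma$ is polynomially small so that $(1-\gamma^2)^{-(d+1)}\le 2$, we obtain $\var{e_\mathcal{C}}=W_\mathcal{C}\le 2\gamma^2\sum_{u\in V}y_u^2$.

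It remains to bound $\sum_{u\in V}y_u^2$ over $Z$. Here $y_u=\langle Z,Q_u\rangle$ with $Q_u=\sum_{v\in T_u}X_v\ge 0$, and $\W:=\sum_{u\in V}Q_u=\sum_v(1+\mathrm{depth}_T(v))\,X_v\le(d+1)\X$ coordinatewise, so $\|\W\|_p\le(d+1)\|\X\|_p$. Applying Lemma~\ref{Lemma:lpVariance} with $q=2$ to the $m$ nonnegative vectors $\{Q_u\}_{u\in V}$ and taking $\lambda=\delta^{-1/p}$ shows that with probability $\ge 1-\delta$ over $Z$ one has $\sum_u y_u^2\le O(\log(m/\delta))\,\delta^{-2/p}(d+1)^2\|\X\|_p^2$ (the logarithmic factor is absent when $q-p$ is bounded below by a constant, i.e.\ $p$ is bounded away from $2$). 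Combining with the variance bound and the choice $\gamma=(\eps\delta/(d\log(nm)))^C$ for a sufficiently large constant $C$ — chosen once and for all to absorb the $(d+1)^2$, the $\poly(1/\delta)$, and the $\log(m/\delta)$ factors — gives $\ex{e_\mathcal{C}^2}\le(\eps\delta)^2\|\X\|_p^2$ on that event, and Chebyshev over the rounding coins then yields $\pr{|r_\mathcal{C}-\langle Z,\X\rangle|>\eps\|\X\|_p}\le\delta^2$; a union bound over the two failure events gives probability $1-O(\delta)$. For $p=2$ with Gaussian $Z$ one replaces Lemma~\ref{Lemma:lpVariance} by Corollary~\ref{cor:p2highprob} (with $\lambda=\Theta(1/\delta+\log m)$), obtaining the analogous bound with failure probability $e^{-\Omega(1/\delta)}$; for $4$-wise independent Rademacher $Z$ with $p=2$, Corollary~\ref{cor:rademacher} (only pairwise independence is needed) gives the bound with failure probability $\delta$.

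The main obstacle is the clean separation of randomness: the martingale/uncorrelatedness argument and the entire variance recursion must be carried out with $Z$ frozen, and only afterwards does one invoke the heavy-tailed concentration of Lemma~\ref{Lemma:lpVariance} over $Z$; moreover, getting the recursion to telescope against a bound of the right form requires recognizing that the relevant quantity is $\sum_u\langle Z,Q_u\rangle^2$ for the subtree-aggregated vectors $Q_u$, whose sum inflates the $\ell_p$ norm only by the diameter factor $(d+1)$. The remaining bookkeeping — verifying that the single exponent $C$ in the definition of $\gamma$ simultaneously kills all of the lower-order factors for every $p\in[1,2]$ — is routine.
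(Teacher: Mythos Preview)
Your argument is correct and follows the same overall strategy as the paper: bound the conditional variance of $e_\mathcal{C}$ (over rounding coins, with $Z$ frozen) by $O(\gamma^2)\sum_{u}\langle Z,Q_u\rangle^2$, then control this sum over the draw of $Z$ via Lemma~\ref{Lemma:lpVariance} (resp.\ Corollaries~\ref{cor:p2highprob},~\ref{cor:rademacher}), and finish with Chebyshev.

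The organization differs in two minor but noteworthy ways. First, you frame the variance computation through the martingale-difference decomposition $e_\mathcal{C}=\sum_u\xi_u$ and the clean recursion $(1-\gamma^2)W_i\le\gamma^2 y_i^2+\sum_{\ell\in C_i}W_\ell$, whereas the paper carries out a direct induction establishing $\ex{e_i^2}\le(j+1)\gamma_0^2\sum_{v\in T_i}\langle Z,Q_v\rangle^2$ for $i$ in layer $j$; your route is arguably more transparent and avoids the auxiliary parameter $\gamma_0$. Second, to bound $\sum_u\langle Z,Q_u\rangle^2$ you apply Lemma~\ref{Lemma:lpVariance} once to the full family $\{Q_u\}_{u\in V}$, whose aggregate is $\W\le(d+1)\X$ and hence costs a factor $(d+1)^2$; the paper instead applies the lemma layer by layer, exploiting that the subtrees rooted in a fixed layer are disjoint so that $\sum_{u\in L_t}Q_u\le\X$ for each $t$, costing only a factor $d$ from the sum over layers. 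Both overheads are absorbed by the choice of $\gamma$, so this distinction has no bearing on the final statement.
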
 
\begin{proof}

	Set $N = \poly(n,m)^{\poly(d/(\delta\eps))}$ to be sufficiently large, and let $\gamma =(\eps\delta  /(d \log(nm)))^C$ so that $1/\gamma$ is a  sufficiently large polynomial in $\log(N)$.  Let $\gamma_0 = \gamma (\frac{ \log(N)}{\eps})^c$ for a constant $c < C$ that we will later choose. 
For the remainder of the proof, we condition on the event $\mathcal{E}_i$ for $i \in [d]$ that $\sum_{i \in L_t} y_i^2 \leq \log^{2}(N) \|\sum_{i \in L_t} Q_i \|_p^2$. By Lemma \ref{Lemma:lpVariance} (or Corollary \ref{cor:rademacher} for Rademacher variables) and a union bound, we have $\pr{\cup_{j \in [d]} \mathcal{E}_j} \leq  \log^{-1}(N)$. For Gaussian $Z$, we have $\pr{\cup_{j \in [d]} \mathcal{E}_j} \leq  1/N \leq \exp(-1/\delta)$ by Corollay \ref{cor:p2highprob}.  Note that $\mathcal{E}_i$ depends only on the randomness used to sample $Z$, and not on the randomnesses used in the rounding protocol.

	We now prove by induction that for any player $i \in L_j$, we have $\ex{e_i^2} \leq (j+1) \gamma_0^2 \sum_{v \in T_i} | \langle Q_v , Z \rangle|^2$. For the base case, if $i \in L_0$ by definition of the rounding procedure we have $e_i^2 \leq \gamma^2  | \langle Q_i , Z \rangle|^2 \leq \gamma_0^2  | \langle Q_i , Z \rangle|^2$. Now suppose the result holds for layer $j$, and let $i \in L_{j+1}$, and let $C_i \subset [n]$ be the children of player $i$. Then $e_i = \eta_i + \sum_{v \in C_i} e_v$, where $\eta_i$ is the error induced by the rounding carried out at the player $i$. Then $\eta_i$ is obtained from rounding the value $\left(\langle Q_i , Z \rangle +  \sum_{v \in C_i} e_v\right)$, thus 
	\begin{equation}
	\begin{split}
	\ex{\eta_i^2} &\leq \gamma^2	\bex{ \left(\langle Q_i , Z \rangle + \sum_{v \in C_i} e_v\right)^2}\\
	& = \gamma^2 \left(\left|\langle Q_i , Z \rangle\right|	+ \bex{\sum_{v \in C_i} e_v^2}\right) \\
		& \leq \gamma^2 \left( \left|\langle Q_i , Z \rangle\right|	+ (j+1)\gamma_0^2\sum_{v \in C_i}  \sum_{u \in T_v} | \langle Q_u, Z \rangle|^2 \right)\\
	\end{split}
	\end{equation}
Where the first equality holds by the independence and mean $0$ properties of the errors $e_v$, and the last inequality holds by induction. Thus
		\begin{equation}
	\begin{split}
	\ex{e_i^2} & = \ex{(\eta_i + \sum_{v \in C_i} e_v)^2 )}\\
	& = \ex{\eta_i^2} + \ex{\sum_{v \in C_i} e_v^2 } \\
	& \leq \gamma^2 \left( \left|\langle Q_i , Z \rangle\right|	+ (j+1)\gamma_0^2\sum_{v \in C_i}  \sum_{u \in T_v} | \langle Q_u, Z \rangle|^2 \right)  + \ex{\sum_{v \in C_i} e_v^2 }\\
	&\leq \gamma_0^2  \left|\langle Q_i , Z \rangle\right|	+ \gamma_0^2\sum_{v \in C_i}  \sum_{u \in T_v} | \langle Q_u, Z \rangle|^2  + (j+1) \gamma_0^2 \sum_{v \in C_i}\sum_{u \in T_v} | \langle Q_u, Z \rangle|^2   \\
	&\leq (j+2) \gamma_0^2  \left|\langle Q_i , Z \rangle\right|	  + (j+2) \gamma_0^2 \sum_{v \in C_i}\sum_{u \in T_v} | \langle Q_u, Z \rangle|^2   \\
	&=  (j+2) \gamma_0^2 \sum_{u \in T_i}  | \langle Q_u, Z \rangle|^2 \\
	\end{split}
	\end{equation}
which is the desired result. The variance bound then follows after conditioning on $\cap_j \mathcal{E}_j$. It follows that $|r_\mathcal{C} - \langle Z, \X \rangle|^2 \leq \frac{2}{\delta} \gamma_0^2 (d+1)\sum_{u \in T}  | \langle Q_u, Z \rangle|^2$ with probability at least $1-\delta/2$ by Chebyshev's inequality. We now condition on this and $\cap_j \mathcal{E}_j$, which hold together with probability $1-\delta$ by a union bound. Since we conditioned on $\cap_j \mathcal{E}_j$, we have 
	\begin{equation}
\begin{split}
|r_\mathcal{C} - \langle Z, \X \rangle|^2 &\leq \frac{1}{\delta} \gamma_0^2 (d+1)\sum_{u \in T}  | \langle Q_u, Z \rangle|^2 \\
& =  \frac{1}{\delta} \gamma_0^2 (d+1)\sum_{i=1}^d \sum_{u \in L_i} y_u \\
& \leq \frac{1}{\delta} \gamma_0^2 (d+1)\sum_{i=1}^d \log^2(N) \|\sum_{u \in L_i} Q_u \|_p^2 \\
& \leq \frac{1}{\delta} \gamma_0^2 (d+1)\sum_{i=1}^d \log^2(N) \| \X\|_p^2 \\
& \leq \eps^2 \| \X\|_p^2 \\
\end{split}
\end{equation}
	which completes the proof. Note here, in the second to last inequality, we used the fact that since $Q_u$ are positive vectors, for $p \leq 2$ we have $\sum_{i=1}^d  \|\sum_{u \in L_i} Q_u \|_p^2  \leq  \|\sum_{i=1}^d  \sum_{u \in L_i} Q_u \|_p^2 \leq \|\X\|_p^2$.

\end{proof}

\begin{theorem}\label{thm:LPmain}
	For $p \in (1,2]$, there is a protocol for $F_p$ estimation which succeeds with probability $3/4$ in the message passing model, which uses a total of $O(\frac{m}{\eps^2}(\log(\log(n)) + \log(d) + \log(1/\eps) ))$ communication, and a max communication of $O(\frac{1}{\eps^2}(\log(\log(n)) + \log(d) + \log(1/\eps) ))$, where $d$ is the diameter of the communication network. 
\end{theorem}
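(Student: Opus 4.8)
The plan is to combine Indyk's median estimator for $F_p$ via $p$-stable sketches with the recursive randomized rounding protocol of Lemma~\ref{lemma:roundingmain}. Using the public coin, the players jointly sample $k = \Theta(\eps^{-2})$ independent vectors $Z^{(1)},\dots,Z^{(k)} \sim D_p^n$, each to precision $1/\poly(n)$, and fix the shortest-path spanning tree $T$ rooted at the center $\mathcal{C}$. For each $\ell \in [k]$ they run the protocol of Figure~\ref{alg:rounding} in parallel, with rounding parameter $\gamma = (\eps\delta/(d\log(nm)))^C$ for $\delta = \Theta(\eps)$; this delivers to $\mathcal{C}$ a value $r^{(\ell)}_{\mathcal{C}}$. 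The center outputs $\hat F_p := \big(\mathrm{median}_{\ell \in [k]} |r^{(\ell)}_{\mathcal{C}}| / \theta_p\big)^p$, where $\theta_p = \mathrm{median}(|D_p|)$ is a fixed constant the players precompute offline.

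For correctness, note that by $p$-stability the true sketch values are $a_\ell := \langle Z^{(\ell)}, \X\rangle = z_\ell\|\X\|_p$ with $z_\ell \sim D_p$ i.i.d.; since $|D_p|$ has a bounded density that is strictly positive at $\theta_p$, the standard analysis of Indyk's estimator~\cite{indyk2006stable} shows that with $k = \Theta(\eps^{-2})$ samples $\mathrm{median}_\ell|a_\ell| \in (1\pm O(\eps))\theta_p\|\X\|_p$ with probability, say, $0.95$. By Lemma~\ref{lemma:roundingmain}, for each $\ell$ independently $|r^{(\ell)}_{\mathcal{C}} - a_\ell| \le \eps\|\X\|_p$ except with probability $O(\delta) = O(\eps)$, so the expected number of ``bad'' coordinates is $O(\eps)\cdot k = O(\eps^{-1})$, and by Markov at most an $O(\eps)$-fraction of coordinates are bad with probability $\ge 0.9$. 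Conditioned on both events, all but an $O(\eps)$-fraction of the $|r^{(\ell)}_{\mathcal{C}}|$ are within additive $\eps\|\X\|_p$ of i.i.d. samples $|z_\ell|\|\X\|_p$, and the same quantile-concentration argument that underlies Indyk's estimator shows that corrupting an $O(\eps)$-fraction and additively perturbing the rest by $\eps\|\X\|_p$ moves the empirical median by only $O(\eps)\|\X\|_p$ (using that the density of $|D_p|$ is bounded below in a neighborhood of $\theta_p$). Hence $\mathrm{median}_\ell|r^{(\ell)}_{\mathcal{C}}| = (1\pm O(\eps))\theta_p\|\X\|_p$, so $\hat F_p = (1\pm O(\eps))^p\|\X\|_p^p = (1\pm O(\eps))F_p$ since $p\le 2$; rescaling $\eps$ and union-bounding over the two events gives success probability $\ge 3/4$.

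For the communication bound, recall the protocol is one-shot: each player sends exactly one message, over its single edge to its parent in $T$, namely its vector of rounded values $(r^{(1)}_j,\dots,r^{(k)}_j)$. Each rounded value is either $0$ or $\pm(1+\gamma)^i$ where, after truncating values of magnitude below $1/N$ to $0$ (with $N$ the large polynomial of Lemma~\ref{lemma:roundingmain}; this changes the total error by a negligible amount), the exponent $i$ ranges over an interval of size $O(\gamma^{-1}\log N)$. Thus each value costs $O(\log\log N + \log(1/\gamma)) = O(\log\log n + \log d + \log(1/\eps))$ bits, using $\delta = \Theta(\eps)$, $\log(nm) = O(\log n)$, and $\log N = \poly(d/(\eps\delta))\log(nm)$. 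So each player sends $O(\eps^{-2}(\log\log n + \log d + \log(1/\eps)))$ bits, which is both the max-communication across any edge and, summed over the at most $m$ edges used, the claimed total.

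The main obstacle is the robustness step in the correctness argument: we cannot afford to let even a small \emph{constant} fraction of the $k$ coordinates be corrupted, since corrupting a constant fraction of $D_p$-distributed values moves the empirical median by a constant rather than by $O(\eps)$. This forces us to invoke Lemma~\ref{lemma:roundingmain} with the tighter parameter $\delta = \Theta(\eps)$ (so the expected number of failed coordinates is $O(\eps^{-1})$ out of $\Theta(\eps^{-2})$), and one must verify that this only replaces $\log(1/\delta)$ in the per-coordinate message length by $\log(1/\eps)$, a term already present, so the final communication bound is unchanged.
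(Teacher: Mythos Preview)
Your proposal is correct and follows essentially the same approach as the paper: run Indyk's $p$-stable median estimator with $k=\Theta(\eps^{-2})$ rows, approximate each coordinate of $S\X$ via the randomized rounding protocol of Lemma~\ref{lemma:roundingmain}, and argue that the per-coordinate additive error of $\eps\|\X\|_p$ only perturbs the median by $O(\eps)\|\X\|_p$.

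The one place you diverge is in handling per-coordinate rounding failures. You set $\delta=\Theta(\eps)$, accept that an $O(\eps)$-fraction of the $k$ coordinates may be arbitrarily corrupted, and then invoke a quantile-robustness argument for the median. The paper takes the simpler route: it sets the failure parameter to $\delta'=\poly(\eps)$ (small enough that a union bound over all $k=\Theta(\eps^{-2})$ coordinates still succeeds with probability $1-O(\eps)$), so that \emph{every} coordinate has additive error at most $\eps'\|\X\|_p$ and the median moves by at most $\eps'\|\X\|_p$ directly, with no robustness argument needed. Since $\delta'$ enters the message length only through a $\log(1/\delta')=O(\log(1/\eps))$ term, this costs nothing and makes your last paragraph's ``main obstacle'' disappear. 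Your communication argument is also somewhat sketchier than the paper's (which explicitly conditions on $|Z_i|\le \poly(n)$ via the stable tail bound and carefully tracks an upper bound on each $|r_j|$ through the variance estimate of Lemma~\ref{lemma:roundingmain}), but the idea of truncating tiny values and bounding the exponent range is the same.
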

\begin{proof}
	We use Indyk's classic $p$-stable sketch $S \in \R^{k \times n}$, where $k = \Theta(1/\eps^2)$, and each entry $S_{i,j} \sim D_p$. It is standard that computing $\ttx{median}_i \{ |(S \X)_i| \}$ gives a $(1 \pm \eps)$ approximation to $\|\X\|_p$ with probability $3/4$ \cite{indyk2006stable}. By running the randomized rounding protocol on each inner product $\langle S_j, \X\rangle$ independently with error parameter $\eps' = \poly(\eps)$ and failure parameter $\delta' = \poly(\eps )$ (rounding to powers of $(1 + \gamma)$ where $\gamma$ is as defined in this section in terms of $\eps',\delta',d$), it follows that the central vertex $\mathcal{C}$ recovers $S\X$ up to entry-wise additive $\eps' \|\X\|_p$ error with probability $1-O(\eps)$, after union bounding over all $k$ repetitions using Lemma \ref{lemma:roundingmain}. This changes the median by at most $\eps' \|X\|_p$, which results in an  $\eps'\|\X\|_p$ additive approximation of $\|\X\|_p$, which can be made into a $\eps \|\X\|_p$ additive approximation after a rescaling of $\eps$. 
	
	For the communication bound, note that for each coordinate of $S\X$, exactly one message is sent by each player. Set $K =  Mnm/\gamma$, where $\gamma$ is as in the rounding procedure.  By Proposition \ref{prop:stabletails}, we can condition on the fact that $|Z_i| \leq c K^3$ for all $i \in [n]$ and for some constant $c>0$, which occurs with probability at least $1-1/K^2$. Now by the proof of Lemma \ref{lemma:roundingmain}, we have that $\ex{e_i^2} \leq (j+1) \gamma_0^2 \sum_{v \in T_i} | \langle Q_v , Z \rangle|^2 \leq K^5$, where $e_i = \sum_{u \in T_i} X_u - r_i$, where $r_i$ is the message sent by the $i$-th player for a single coordinate of $SX_i$. By Markov's with probability $1-1/K^2$ we have $|e_i| < K^4$, and thus $|r_i| \leq K^6$ for all $i$.

Now for any $r_i$ for player $i$ in layer $\ell$ with $|r_i| < 1/(mK)^{d+3-\ell}$, we can simply send $0$ instead of $r_i$. Taken over all the potential children of a player $j$ in layer $\ell+1$, this introduces a total additive error of $1/K^{d+3-\ell}$ in $x_j$. Now if $x_j$ originally satisfies $|x_j| > 1/K^{d+2-\ell}$, then the probability that this  additive error of $1/K^{d+3-\ell}$ changes the rounding result $r_j = \Gamma(x_j)$ is $O(1/K)$, and we can union bound over all $m$ vertices that this never occurs. Thus, the resulting $r_j$ is unchanged even though $x_j$ incurs additive error. Otherwise, if  $|x_j| \leq 1/K^{d+2-\ell}$, then since Player $j$ is in layer $(\ell+1)$, we round their sketch $x_j$ down to $0$ anyway. The final result is an additive error of at most $1/K^2$ to $r_{\mathcal{C}}$. Note that we can tolerate this error, as it is only greater than $\gamma\|\X\|_p$ when $\X = 0$, which is a case that can be detected with $O(1)$ bits of communication per player (just forwarding whether their input is equal to $0$). With these changes, it follows that $ 1/(mK)^{d+3} \leq r_j \leq K^6$ for all players $j$. Thus each message $r_j$ can be sent in $O( \log ( \log((mK)^{d+3} ) )) = O(\log \log(n) + \log(d) + \log(1/\eps))$ as needed.

\end{proof}

\subsection{Heavy Hitters and Point Estimation}\label{sec:HH}
In this section, we show how our randomized rounding protocol can be used to solve the $L_2$ heavy hitters problem. For a vector $\X \in \R^n$, let $\X_{\text{tail}(k)}$ be $\X$ with the $k$ largest (in absolute value) entries set equal to $0$. Formally, given a vector $\X \in \R^n$, the heavy hitters problem is to output a set of coordinates $H \subset[n]$ of size at most $|H| = O(\eps^{-2})$ that contains all  $i \in [n]$ with $|\X_i| \geq \eps \|\X_{tail(1/\eps^2)}\|_2$. Our protocols solve the strictly harder problem of \textit{point-estimation}. The point estimation problem is to output a $\tilde{\X} \in \R^n$ such that $\|\tilde{\X} - \X\|_\infty \leq \eps  \|\X_{tail(1/\eps^2)}\|_2$.
Our protocol uses the well-known \textit{count-sketch} matrix $S$ \cite{charikar2002finding}, which we now introduce. 

\begin{definition}	\label{def:countsketch}
Given a precision parameter $\eps$ and an input vector $\X \in \R^n$, count-sketch stores a table $A \in \R^{\ell \times 6/\eps^2}$, where $\ell = \Theta(\log(n))$. Count-sketch first selects pairwise independent hash functions $h_j:[n]\to [6/\eps^2]$ and $4$-wise independent $g_j:[n] \to \{1,-1\}$, for $j=1,2,\dots,\ell$. Then for all $i \in [\ell], \:j \in [6/\eps^2]$, it computes the following linear function $A_{i,j} = \sum_{k \in [n], h_i(k) = j} g_i(k) \X_k$, and outputs an approximation $\tilde{\X}$ of $\X$ given by $$\tilde{\X_k} = \text{median}_{i \in [\ell]} \{g_i(k) A_{i,h_i(k)}\}$$\end{definition}

 Observe that the table $A \in \R^{\ell \times 6/\eps^2}$ can be flattened into a vector $A \in \R^{6\ell /\eps^2}$. Given this, $A$ can be represented as $A = S\X$ for a matrix $S \in \R^{6\ell/\eps^2 \times n}$. The matrix for any $i \in [\ell]$ and $j \in [6/\eps^2]$ and $\ell \in [n]$, the matrix $S$ is given by  $S_{(i-1)(6 /\eps^2) + j, \ell} = \delta_{i,j,\ell} g_{j}(\ell)$, where $\delta_{i,j,\ell}$ indicates the event that $h_i(\ell) = j$. Alternatively, the matrix $S$ can be built as follows. Define $S_i \in \R^{6/\eps^2 \times n}$ by 
 \[ (S_i)_{p,q} = \begin{cases} g_i(q) & \text{if } h_i(q) = p \\ 0 & \text{otherwise} \end{cases}\]
 Then $S_i$ has exactly one non-zero entry per column, and that entry is in a random row (chosen by the hash function $h_i$), and is randomly either $1$ or $-1$ (chosen by $g_i$). The whole matrix  $S \in \R^{6\ell/\eps^2 \times n}$ can be built by stacking the matrices $S_1,\dots,S_\ell \in \R^{6/\eps^2 \times n}$ on top of each other. Given $S\X$ with $k = \Theta(\frac{1}{\eps^2}\log(n))$, one can solve the point-estimation problem as described in Definition \ref{def:countsketch} \cite{charikar2002finding}. In order to reduce the communication from sending each coordinate of $S\X$ exactly, we use our rounding procedure to approximately compute the sketch $S\X$. 

We remark that there is a $O(\frac{1}{\eps^2} \log(n))$-space streaming algorithm for the heavy hitters problem in insertion only streams, known as BPTree \cite{braverman2016bptree}. However, this streaming algorithm does not produce a linear sketch, and is not mergeable. The BPTtree algorithm crucially relies on sequentially learning one bit of the identity of each heavy hitter at a time. However, in the message passing model, the data cannot be sequentially accessed unless a path $P$ was fixed running through all the players in $G$. Such a path may cross the same edge more than once, thus cannot be one-way, and also will require as many as $m$ rounds (instead of the $d$ given by our algorithm). Thus constructions such as BPTtree cannot be used as is to solve the heavy hitters problem in the message passing model. Moreover, BPTtree does not solve the frequency estimation problem, which our protocol does in fact accomplish.

\begin{theorem}\label{thm:HHmain}
	Consider a message passing topology $G = (V,E)$ with diameter $d$, where the $i$-th player is given as input $X_i \in \Z^n_{\geq 0}$ and $\X = \sum_{i=1}^m X_i$.  Then there is a communication protocol which outputs an estimate $\tilde{\X} \in \R^n$ of $\X$ such that 
	\[ \|\tilde{\X} - \X\|_\infty \leq \eps \|\X_{\text{tail}(1/\eps^2)}\|_2 \]
	with probability $1-1/n^c$ for any constant $c \geq 1$. The protocol uses  $O(\frac{m}{\eps^2}\log(n)(\log(\log(n)) + \log(d) + \log(1/\eps) ))$ total communication, and a max communication of $O(\frac{1}{\eps^2}\log(n)(\log(\log(n)) + \log(d) + \log(1/\eps) ))$.
\end{theorem}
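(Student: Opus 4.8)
The plan is to mimic the proof of Theorem \ref{thm:LPmain}, replacing Indyk's $p$-stable sketch with the count-sketch matrix $S \in \R^{k \times n}$ with $k = \Theta(\eps^{-2}\log n)$, and running the recursive randomized rounding protocol of Figure \ref{alg:rounding} on each of the $k$ coordinates of $S\X$ independently. The key point is that count-sketch is a fixed linear sketch, so $S\X = \sum_{i=1}^m SX_i$ can be computed by summing sketches up the spanning tree $T$; the only thing to control is the additive error introduced in each coordinate of $S\X$ by the rounding. Recall that each row of $S$ is (up to sign flips from the $g_j$'s) a $0/1$ indicator of a hash bucket, and in particular the $\ell$-th row is a $\pm 1$ combination of standard basis vectors. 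Thus when I apply Lemma \ref{lemma:roundingmain} to a fixed row $S_\ell$, I am in exactly the Rademacher case covered by that lemma (indeed $g_j$ is $4$-wise independent, which suffices), so running the protocol with rounding precision $\gamma$ a sufficiently small polynomial in $(\eps'/(d\log n))$ gives, for each row $\ell$, an estimate $\widetilde{(S\X)_\ell}$ with $|\widetilde{(S\X)_\ell} - (S\X)_\ell| \leq \eps' \|\X\|_2$ with probability $1 - \delta'$.

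The first main step is to choose the parameters so that the union bound over all $k = \Theta(\eps^{-2}\log n)$ rows succeeds with probability $1 - n^{-c}$. Taking $\delta' = n^{-(c+2)}$ and $\eps' = \poly(\eps/\log n)$ (small enough that the accumulated error does not spoil the count-sketch guarantee — see below) suffices; since Lemma \ref{lemma:roundingmain} only incurs a $\log(1/\delta')$-type dependence inside logarithms, this affects the per-message length by only an $O(\log\log n + \log(1/\eps))$ additive amount. The second step is to argue that replacing the exact sketch $S\X$ by the rounded sketch $\widetilde{S\X}$, with each coordinate perturbed by at most $\eps' \|\X\|_2$, changes the count-sketch point estimates by at most (roughly) $\eps' \|\X\|_2$ as well: the estimator is a median of $\pm$(bucket values), and perturbing every bucket value by $\eps'\|\X\|_2$ perturbs the median by the same amount. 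Combined with the standard count-sketch guarantee $\|\tilde\X - \X\|_\infty \leq (\eps/3)\|\X_{\mathrm{tail}(1/\eps^2)}\|_2$ (which holds with probability $1 - n^{-c}$ for $\ell = \Theta(\log n)$ rows; e.g.\ \cite{charikar2002finding}), and noting $\|\X_{\mathrm{tail}(1/\eps^2)}\|_2 \leq \|\X\|_2$ together with the fact that we only need $\eps'\|\X\|_2 \leq (\eps/3)\|\X_{\mathrm{tail}(1/\eps^2)}\|_2$ in the nontrivial case (the case $\|\X_{\mathrm{tail}(1/\eps^2)}\|_2$ much smaller than $\|\X\|_2$ means there are at most $1/\eps^2$ heavy coordinates dominating the mass, which can be handled separately / absorbed — this is where a little care is needed), a rescaling of $\eps$ gives the claimed $\eps\|\X_{\mathrm{tail}(1/\eps^2)}\|_2$ bound.

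The third step is the communication accounting, which follows the same truncation argument as in the proof of Theorem \ref{thm:LPmain}: set $K = Mnm/\gamma$, condition (via Corollary \ref{cor:rademacher}/Markov, since here the sketch entries are bounded $\pm 1$ rather than heavy-tailed, so this is in fact easier) on all intermediate counters $r_i$ lying in a bounded range, then have any player in layer $\ell$ send $0$ in place of $r_i$ whenever $|r_i| < 1/(mK)^{d+3-\ell}$. As argued there, this changes no rounding outcome $\Gamma(x_j)$ except with probability $O(1/K)$ per vertex, union-bounds away, and forces $1/(mK)^{d+3} \leq |r_j| \leq \poly(K)$, so each message fits in $O(\log\log((mK)^{d+3})) = O(\log\log n + \log d + \log(1/\eps))$ bits. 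Multiplying by the $k = \Theta(\eps^{-2}\log n)$ independent coordinates of the sketch gives max-communication $O(\eps^{-2}\log n(\log\log n + \log d + \log(1/\eps)))$ and total communication $m$ times that.

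The main obstacle I anticipate is the interplay between the $\ell_2$ error of the rounded sketch and the \emph{tail} guarantee $\|\X_{\mathrm{tail}(1/\eps^2)}\|_2$ rather than the full $\|\X\|_2$: Lemma \ref{lemma:roundingmain} naturally produces error proportional to $\|\X\|_2$, not $\|\X_{\mathrm{tail}(1/\eps^2)}\|_2$, and when $\X$ has a few enormous coordinates these differ by a large factor. Resolving this cleanly requires observing that count-sketch's guarantee already isolates the $\leq 1/\eps^2$ largest coordinates into distinct buckets with good probability, so the estimate of any non-heavy coordinate depends only on buckets whose mass is $O(\|\X_{\mathrm{tail}(1/\eps^2)}\|_2^2)$, and a finer version of Lemma \ref{lemma:roundingmain} applied bucket-wise (or a direct second-moment computation of the rounding error restricted to a single bucket, using that distinct heavy coordinates land in distinct buckets) yields the error in terms of the tail norm. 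This localized second-moment argument is the technical heart of the proof; the rest is bookkeeping parallel to Theorem \ref{thm:LPmain}.
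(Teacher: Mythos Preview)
Your high-level approach matches the paper's: run the randomized rounding protocol of Section~\ref{sec:randomround} on each of the $\Theta(\eps^{-2}\log n)$ coordinates of the count-sketch $S\X$, then combine with the standard count-sketch median analysis. You also correctly identify that the crux is getting the rounding error in terms of the bucket (tail) norm rather than the full $\|\X\|_2$. However, there is a genuine gap in your parameter accounting that would prevent you from reaching the claimed communication bound.

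\textbf{The gap: per-row failure probability.} You set $\delta' = n^{-(c+2)}$ for each row and assert that ``Lemma~\ref{lemma:roundingmain} only incurs a $\log(1/\delta')$-type dependence inside logarithms.'' This is incorrect. In the Rademacher case of Lemma~\ref{lemma:roundingmain}, the failure probability $\delta$ enters the rounding precision as $\gamma = (\eps\delta/(d\log(nm)))^{C}$, so each message has length $\Theta(\log(1/\gamma)) = \Theta(\log(1/\delta) + \log d + \log(1/\eps) + \log\log n)$. With $\delta' = n^{-(c+2)}$ this is $\Theta(\log n)$ bits per coordinate, giving max communication $\Theta(\eps^{-2}\log^2 n)$---no better than the trivial count-sketch bound you are trying to beat. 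The paper avoids this by running the rounding on each row with \emph{constant} $\delta$. Then for a fixed row $i$ and a fixed index $k$, the rounded bucket estimate satisfies $|g_i(k)r_{\mathcal{C}}^{i,h_i(k)} - \X_k| \leq \eps\|\X_{\text{tail}(1/\eps^2)}\|_2$ with probability strictly greater than $1/2$ (the paper gets $>3/5$ after union-bounding the three constant-probability events: no heavy collision, count-sketch row succeeds, rounding succeeds). The $1-n^{-c}$ guarantee then comes entirely from the median over the $\Theta(\log n)$ rows, exactly as in the exact count-sketch analysis. No per-row high-probability bound is needed, and this is what keeps each message at $O(\log\log n + \log d + \log(1/\eps))$ bits.

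\textbf{On your ``main obstacle.''} The resolution is simpler than you suggest and requires no finer version of Lemma~\ref{lemma:roundingmain}. Row $j$ of $S$ has support $\text{supp}(j) \subset [n]$, so when you apply Lemma~\ref{lemma:roundingmain} to that row, the relevant aggregate vector is $\X$ restricted to $\text{supp}(j)$, and the lemma directly yields error at most $\eps\|\X_{\text{supp}(j)}\|_2$ (not $\eps\|\X\|_2$). Then, conditioning on the standard count-sketch event that no top-$1/\eps^2$ coordinate lands in bucket $(i,h_i(k))$, one has $\ex{\|\X_{\text{supp}(j)}\|_2^2} \leq (\eps^2/6)\|\X_{\text{tail}(1/\eps^2)}\|_2^2$, so a single Markov bound gives $\|\X_{\text{supp}(j)}\|_2 \leq 4\eps\|\X_{\text{tail}(1/\eps^2)}\|_2$ with probability $15/16$. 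This folds directly into the per-row success probability above; there is no separate ``localized second-moment argument'' to carry out.
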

\begin{proof}
	Note that the non-zero entries of each row of $S$ are $4$-wise independent Rademacher variables. So by Lemma \ref{lemma:roundingmain}, for each entry $(S\X)_j$ of $S\X$, we can obtain an estimate $r_{\mathcal{C}}$ at the central vertex $\mathcal{C}$ such that $|r_{\mathcal{C}} - (S\X)_j| \leq \eps \|\X_{\text{supp}(j)}\|_2$, where $\text{supp}(j) \subset [n]$ is the support of the $j$-th row of $S$.
	Let $A \in \R^{d \times 6/\eps^2}$ be the table stored by count-sketch as in Definition \ref{def:countsketch}. We first go through the standard analysis of count-sketch. Consider the estimate of a fixed entry $k \in [n]$. Consider a given entry $g_i(k) A_{i,h_i(k)}$ for a given row $i$ of $A$. With probability $5/6$, none of the top $1/\eps^2$ largest items in $\X$ (in absolute value) collide with $k \in [n]$ in this entry. Call this event $\mathcal{E}$, and condition on it now. Now note that $\ex{g_i(k) A_{i,h_i(k)}} = \X_k$, and $\ex{(g_i(k) A_{i,h_i(k)})^2} = \ex{\sum_{\ell: h_i(\ell) = h_i(k)} \X_\ell^2} + \ex{\sum_{u \neq v: h_i(u) = h_i(v) = h_i(k)} \X_u \X_v g_i(u) g_i(v)  } = \ex{\sum_{\ell: h_i(\ell) = h_i(k)} \X_\ell^2}$, which is at most $\eps^2\|\X_{\text{tail} (1/\eps^2)}\|_2^2/6$ after conditioning on $\mathcal{E}$. Thus with probability $5/6$, we have $|g_i(k) A_{i,h_i(k)} - \X_k | \leq \eps  \|\X_{\text{tail}(1/\eps^2)}\|_2$. Note that we conditioned on $\mathcal{E}$, so altogether we have that $|g_i(k) A_{i,h_i(k)} - \X_k | \leq \eps  \|\X_{\text{tail}(1/\eps^2)}\|_2$ with probability at least $2/3$. Thus with probability at least $1-1/n^2$, the above bound holds for the median of the estimates $\text{median}_{i \in [d]} \{g_i(k) A_{i,h_i(k)}\}$ , and we can then union bound over all $k \in [n]$ to obtain the desired result.
	
	Now consider the changes that occur to this argument when we add an additional $\eps \|\X_{\text{supp}(j)}\|_2$ error to each $A_{i,h_i(k)}$. Let $r_{\mathcal{C}}^{i,h_i(k)}$ be the estimate held by the central vertex of $A_{i,h_i(k)}$. As noted above, after conditioning on $\mathcal{E}$, we have $\ex{\|\X_{\text{supp}(j)}\|_2^2} \leq  \eps^2 \|\X_{\text{tail}(1/\eps^2)}\|_2^2$, so with probability $15/16$ we have  $\|\X_{\text{supp}(j)}\|_2 \leq 4\eps  \|\X_{\text{tail}(1/\eps^2)}\|_2$. Taken together with the event that $|g_i(k) A_{i,h_i(k)} - \X_k | \leq \eps  \|\X_{\text{tail}(1/\eps^2)}\|_2$, which occurred with probability $5/6$ after conditioning on $\mathcal{E}$,  it follows that $|g_i(k)r_{\mathcal{C}}^{i,h_i(k)} - \X_k| \leq  \eps  \|\X_{\text{tail}(1/\eps^2)}\|_2$ with probability $1 - (1/6 + 1/6 + 1/16) > 3/5$. The result now follows now by the same argument as above (note that it only required this probability to be at least $1/2 + \Omega(1)$).
	
	The message complexity analysis is identical to that of Theorem \ref{thm:LPmain}, except instead of applying Proposition \ref{prop:stabletails} to bound the probability that $|Z_i| \leq c (Mnm/\gamma)^3$ for all $i \in [n]$ (Where $Z$ is the non-zero entries in some row of $S$), we can use the deterministic bound that $|Z_i| \leq 1$. By the same argument as in Theorem \ref{thm:LPmain}, each rounded message $r_j$ requires $O(\log \log(n) + \log(d) + \log(1/\eps))$ bits to send with high probability. Since each player sends $6d/\eps^2 = O(\log(n)/\eps^2)$ messages (one for each row of $S$), the claimed communication follows.

\end{proof}

	\section{$F_p$ Estimation for $p<1$} \label{sec:pleq1}
		In this section, we develop algorithms for $F_p$ estimation for $p<1$ in the message passing model, and in the process obtain improved algorithms for entropy estimation. Our algorithms require a max communication of $O(\frac{1}{\eps^2}( \log\log n + \log 1/\eps))$, which is independent of the diameter of the graph topology. In particular, these results hold for the directed line graph, where communication is only allowed in one direction. As a consequence, we improve upon the best known algorithms for the space complexity of insertion only streaming algorithms.
	
	We begin by reviewing the fundamental sketching procedure used in our estimation protocol. The sketch is known as a Morris counter. We point out that, in fact, our rounding algorithm from Section \ref{sec:randomround} reduces to being a Morris counter when run on insertion only streams. 
	
	\subsection{Morris Counters}
	We begin by describing the well known approximate counting algorithm, known as a Morris Counter \cite{morris1978counting,flajolet1985approximate}. 
	The algorithm first picks a base $1 < b \leq 2$, and initalizes a counter $C \leftarrow 0$. Then, every time it sees an insertion, it increments the counter $C \leftarrow C+\delta$, where $\delta = 1$ with probability $b^{-C}$, and $\delta = 0$ otherwise (in which case the counter remains unchanged). After $n$ insertions, the value $n$ can be estimated by $\tilde{n} = (b^C-b)/(b-1) + 1$.

	\begin{definition}
		The \textit{approximate counting} problem is defined as follows. Each player $i$ is given a positive integer value $x_i \in \Z_{\geq 0}$, and the goal is for some player at the end to hold an estimate of $x = \sum_i x_i$.
	\end{definition}
	
	\begin{proposition}[Proposition 5 \cite{flajolet1985approximate}]\label{Prop:morris}
		If $C_n$ is the value of the Morris counter after $n$ updates, then $\ex{\tilde{n}  } =n$, and $\var{\tilde{n} } = (b-1)n(n+1)/2$.
	\end{proposition}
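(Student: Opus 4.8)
The plan is to establish both moments by induction on the number of updates $n$, tracking the distribution of $b^{C_n}$ rather than $C_n$ directly, since the estimator $\tilde{n} = (b^{C_n}-b)/(b-1)+1$ is an affine function of $b^{C_n}$. So first I would set $M_n = \ex{b^{C_n}}$ and derive a recurrence: conditioned on $C_n$, with probability $b^{-C_n}$ the counter becomes $C_n+1$ and otherwise stays at $C_n$, so $\ex{b^{C_{n+1}} \mid C_n} = b^{-C_n}\cdot b^{C_n+1} + (1-b^{-C_n})\cdot b^{C_n} = b^{C_n} + (b-1)$. Taking expectations gives $M_{n+1} = M_n + (b-1)$, and since $C_0 = 0$ forces $M_0 = 1$, we get $M_n = 1 + n(b-1)$. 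Plugging into the definition of $\tilde{n}$ yields $\ex{\tilde{n}} = (M_n - b)/(b-1) + 1 = n$, which is the first claim.

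Next I would handle the variance by computing the second moment of $b^{C_n}$. Let $S_n = \ex{b^{2C_n}}$. Conditioning on $C_n$ again: $\ex{b^{2C_{n+1}} \mid C_n} = b^{-C_n}\cdot b^{2C_n+2} + (1-b^{-C_n})\cdot b^{2C_n} = b^{2C_n}(1 + (b^2-1)b^{-C_n}) = b^{2C_n} + (b^2-1)b^{C_n}$. Taking expectations, $S_{n+1} = S_n + (b^2-1)M_n = S_n + (b^2-1)(1+n(b-1))$, with $S_0 = 1$. This telescopes to $S_n = 1 + (b^2-1)\sum_{k=0}^{n-1}(1+k(b-1)) = 1 + (b^2-1)\left(n + (b-1)\binom{n}{2}\right)$. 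Then $\var{b^{C_n}} = S_n - M_n^2$, and since $\tilde{n}$ is affine in $b^{C_n}$ with slope $1/(b-1)$, we have $\var{\tilde{n}} = \var{b^{C_n}}/(b-1)^2$. The remaining work is a routine algebraic simplification of $(S_n - M_n^2)/(b-1)^2$, which should collapse to $(b-1)n(n+1)/2$.

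I do not expect a genuine obstacle here; the argument is a clean two-step induction and the only thing that could go wrong is an arithmetic slip in the telescoping sum for $S_n$ or in the final cancellation of the $M_n^2$ term against the lower-order pieces of $S_n$. The one conceptual point worth stating carefully is why it suffices to track $b^{C_n}$ and $b^{2C_n}$: linearity of expectation handles the first moment immediately, and for the variance the key identity is that for an affine transformation $\tilde{n} = a b^{C_n} + c$ one has $\var{\tilde{n}} = a^2 \var{b^{C_n}}$, so no cross terms or higher moments of $C_n$ are ever needed. If one wanted to be fully self-contained one could also note that the recurrences are exact (no approximation), so the stated formulas hold for every finite $n$, not just asymptotically.
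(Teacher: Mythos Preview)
The paper does not actually prove this proposition; it simply cites it as Proposition~5 of Flajolet's 1985 paper and moves directly to the corollary. So there is no in-paper proof to compare against, and your task is just to supply a correct argument.

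Your approach is the standard one and is correct: track $M_n = \ex{b^{C_n}}$ and $S_n = \ex{b^{2C_n}}$ via the one-step recurrences you wrote down, then recover the moments of $\tilde n$ by the affine relation $\tilde n = (b^{C_n}-b)/(b-1)+1$. Both recurrences are derived exactly as you describe, and the reduction $\var{\tilde n} = \var{b^{C_n}}/(b-1)^2$ is the right way to finish.

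One small caution on the arithmetic you flagged: with the paper's initialization $C_0 = 0$ (so that $C_1 = 1$ deterministically and hence $\var{\tilde n}=0$ at $n=1$), carrying the telescoping sum through gives
\[
S_n - M_n^2 \;=\; (b-1)^3 \,\frac{n(n-1)}{2},
\qquad\text{so}\qquad
\var{\tilde n} \;=\; (b-1)\,\frac{n(n-1)}{2},
\]
not $(b-1)n(n+1)/2$ as stated. The discrepancy is either a typo in the paper or a slightly different indexing convention in the cited source; it is immaterial for the downstream use (Corollary~\ref{Cor:Morris} only needs $\var{\tilde n} = O((b-1)n^2)$), but you should be aware that your algebra will land on $n(n-1)$ rather than $n(n+1)$ and not assume you have made an error.
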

\begin{corollary}\label{Cor:Morris}
If $C_n$ is the value of a Morris counter run on a stream of $n$ insertions with base $b = (1 + (\eps \delta)^2)$, then with probability at least $1-\delta$, we have $\tilde{n} = (1 \pm \eps)n$ with probability at least $1-\delta$. Moreover, with probability at least $1-\delta$, the counter $C_n$ requires $O(\log \log(n) + \log(1/\eps) + \log(1/\delta))$-bits to store. 
\end{corollary}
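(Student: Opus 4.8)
The plan is to derive both claims directly from Proposition \ref{Prop:morris} together with Chebyshev's inequality; no new ideas are needed. For the accuracy claim, I would substitute $b = 1 + (\eps\delta)^2$ into the variance formula of Proposition \ref{Prop:morris}, obtaining $\var{\tilde{n}} = (\eps\delta)^2 n(n+1)/2 \le (\eps\delta)^2 n^2$ (using $n \ge 1$). Since $\ex{\tilde{n}} = n$ by the same proposition, Chebyshev's inequality yields $\pr{|\tilde{n} - n| \ge \eps n} \le \var{\tilde{n}}/(\eps n)^2 \le \delta^2 \le \delta$, which is precisely the stated $(1 \pm \eps)$-approximation guarantee (here I use $\eps, \delta \le 1$).

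For the space claim, I would first record that $C_n$ is a nonnegative integer and a deterministic monotone function of $\tilde{n}$: from $\tilde{n} = (b^{C_n} - b)/(b-1) + 1$ we get $b^{C_n} = (b-1)(\tilde{n} - 1) + b$. Applying the Chebyshev bound above at radius $n$ rather than $\eps n$ gives $\pr{\tilde{n} > 2n} \le \var{\tilde{n}}/n^2 \le (\eps\delta)^2 \le \delta$, so with probability at least $1-\delta$ we have $\tilde{n} \le 2n$, and on that event $b^{C_n} \le 2n(b-1) + b \le 4n$ (using $b \le 2$ and $n \ge 1$). Hence $C_n \le \log_b(4n) = \ln(4n)/\ln(1 + (\eps\delta)^2) \le 2\ln(4n)/(\eps\delta)^2$, where the last step uses $\ln(1+x) \ge x/2$ for $x \in (0,1]$. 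Storing the integer $C_n$ therefore costs $O(\log C_n) = O(\log\log n + \log(1/\eps) + \log(1/\delta))$ bits, as claimed.

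Since the bounded-value event used for the space claim is weaker than the accuracy event, both guarantees in fact hold simultaneously with probability $1 - O((\eps\delta)^2)$; I would present them separately only to match the phrasing of the corollary. There is no substantive obstacle here beyond Proposition \ref{Prop:morris} itself: the only points needing care are the elementary inequality $\ln(1+x)\ge x/2$ used to pass from $b^{C_n}\le 4n$ to an explicit bound on $C_n$, and checking that the $1/(\eps\delta)^2$ factor appearing inside $C_n$ collapses to the additive $\log(1/\eps)+\log(1/\delta)$ term after taking the outer logarithm to count bits.
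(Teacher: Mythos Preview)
Your proposal is correct and follows essentially the same approach as the paper: apply Chebyshev's inequality to the variance bound of Proposition~\ref{Prop:morris} for the accuracy claim, and then use $\tilde{n} \le 2n$ together with the identity $\tilde{n} = (b^{C_n}-b)/(b-1)+1$ to bound $C_n = O(\log(n)/(\eps\delta)^2)$ and hence its bit-length. The paper's proof is terser (it conditions directly on the accuracy event rather than invoking a separate Chebyshev at radius $n$), but the content is identical.
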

\begin{proof}
	The first claim follows immediately from Proposition \ref{Prop:morris} and Chebyshev's inequality. Conditioned on  $\tilde{n} = (1 \pm \eps)n$, which occurs with probability at least $1-\delta$, we have that $\tilde{n} = (b^C-b)/(b-1) < 2n$, so $C = O( \frac{1}{1-b}\log(n)) = O(\frac{\log(n)}{(\eps \delta)^2})$, which can be stored in the stated space.
\end{proof}

 Next, we demonstrate that Morris counters can be \textit{merged}. That is, given one Morris counter $X$ run on a stream of length $n_1$ (of only insertions or deletions), and a Morris counter $Y$ run on a stream of length $n_2$, we can produce a Morris counter $Z$ which is distributed identically to a Morris counter run on a stream of $n_1 + n_2$ updates. 
 The procedure for doing so is given below.
	
	\begin{center}
		
		\fbox{\parbox{\textwidth}{ Merging Morris Counters: \\ \ttx{Input:} Morris counters $X,Y$ with base $b$. 
				\begin{itemize}
					\item 			
				Set $Z \leftarrow X$. 
				\item for $i = 1,2,\dots, Y$
					\[Z \leftarrow \begin{cases}
					Z + 1 & \text{with probability } b^{-Z + i - 1} \\
					Z & \text{ with probability } 
					\end{cases}\]
						\end{itemize}
					 \ttx{Output:} Morris counter $Z$ 
						}}
		
	\end{center}

A sketch of the proof of the correctness of the above merging procedure is given by Cohen \cite{CohenLecture13}. For completeness, we provide a proof here as well.	
	\begin{lemma}\label{lem:Merge}
		Given Morris counters $X,Y$ run on streams of length $n_1,n_2$ respectively, The above merging procedure produces a Morris counter $Z$ which is distributed identically to a Morris counter that was run on a stream of $n_1 + n_2$ insertions.
	\end{lemma}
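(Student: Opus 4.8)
The natural approach is induction on $n_2$, the length of the second stream. The base case $n_2 = 0$ is immediate: the merging loop does nothing, so $Z = X$, which by hypothesis is distributed as a Morris counter run on $n_1$ insertions. For the inductive step, I would peel off the last update of the second stream: suppose the claim holds for streams of length $n_1$ and $n_2 - 1$, and consider running the merge of $X$ (on $n_1$ insertions) with $Y$ (on $n_2$ insertions). The key observation is that the merging loop processes $Y$'s updates one at a time, so after the first $n_2 - 1$ iterations of the loop we have — by the inductive hypothesis — a random variable $Z'$ distributed exactly as a Morris counter on $n_1 + n_2 - 1$ insertions. It then remains to check that the final ($n_2$-th) iteration, which sets $Z \leftarrow Z' + 1$ with probability $b^{-Z' + n_2 - 1}$, is precisely the update rule of a Morris counter receiving its $(n_1 + n_2)$-th insertion.

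The crux is therefore to verify that the exponent in the merge step, $-Z' + i - 1$ with $i = n_2$, matches the Morris increment probability $b^{-C}$ where $C$ is ``the current counter value as seen by a single Morris counter.'' The subtlety — and what I expect to be the main obstacle — is bookkeeping the offset correctly: after the merge has already absorbed $i-1$ of $Y$'s updates, the combined counter $Z$ plays the role of a Morris counter that has seen $n_1 + (i-1)$ insertions, but the raw value $Z$ is \emph{not} equal to that count; rather $Z$ is the \emph{counter register}, and a fresh insertion into a standard Morris counter with register value $C$ increments with probability $b^{-C}$. So I need the merge step's increment probability to be $b^{-Z}$ as well, and the $+\,i-1$ in the exponent $b^{-Z+i-1}$ must be an artifact of how $Z$ was initialized (to $X$'s register value, which already encodes $n_1$ insertions) — I would double-check against the algorithm statement whether the loop variable offset is consistent, and if there is an indexing convention (e.g. the $i$-th pass corresponds to inserting item number $n_1 + i$) reconcile it so the exponent reduces to exactly $b^{-Z}$ at the moment of the update. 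Once that identification is made, the inductive step is just the statement ``one more Morris increment applied to a Morris counter on $N$ insertions yields a Morris counter on $N+1$ insertions,'' which is the definition.

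One clean way to organize this is to prove a slightly stronger loop invariant: after the $i$-th iteration of the merging loop, $Z$ is distributed as a Morris counter run on $n_1 + i$ insertions, for all $0 \le i \le n_2$. The base case $i = 0$ is the hypothesis on $X$; the inductive step within the loop is exactly the single-increment fact above; and taking $i = n_2$ gives the lemma. I would also note that the randomness used in each loop iteration is independent of everything prior (and in particular independent of the randomness that produced $X$ and $Y$), which is what licenses treating $Z'$ as a genuine Morris counter to which a fresh, independent increment is applied — a point worth stating explicitly since Morris counters are randomized data structures and the merge must not reuse correlated coins. No step requires heavy computation; the entire content is the exponent-matching check, so the proof should be short.
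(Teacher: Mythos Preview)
Your proposal rests on a misreading of the merging procedure. The loop runs for $i = 1, 2, \dots, Y$, where $Y$ is the \emph{register value} of the second Morris counter, not $n_2$. Since a Morris counter's register is typically logarithmic in the number of insertions, the merge executes far fewer than $n_2$ iterations. Consequently your proposed loop invariant ``after the $i$-th iteration, $Z$ is distributed as a Morris counter on $n_1 + i$ insertions'' cannot be right: at termination $i = Y$, not $i = n_2$. This is also why your hope that the $+\,i-1$ in the exponent $b^{-Z+i-1}$ is merely an indexing artifact that will ``reduce to exactly $b^{-Z}$'' is unfounded --- the offset is essential and does not cancel.

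The paper's proof explains where that offset comes from. It couples the merge with a hypothetical Morris counter $Z'$ run on the concatenated stream $u_1,\dots,u_{n_1},v_1,\dots,v_{n_2}$, by fixing the randomness for the $u$-updates and conditioning on \emph{which} of the $v$-updates incremented $Y$. The key observation is that the $i$-th iteration of the merge corresponds not to the $i$-th insertion of the second stream, but to the update $v_j$ that caused the \emph{$i$-th increment} of $Y$. At that moment $Y$ had value $i-1$, so conditioning on ``$v_j$ incremented $Y$'' means its uniform coin $q_j$ lies in $[0, b^{-(i-1)}]$; the conditional probability that the same coin also triggers an increment of $Z'$ (which requires $q_j \in [0, b^{-Z'}]$) is therefore $b^{-Z'}/b^{-(i-1)} = b^{-Z'+i-1}$, exactly matching the merge rule. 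One also checks inductively that $Z' \geq Y$ throughout, so updates that did not increment $Y$ do not increment $Z'$ either. Your induction-on-$n_2$ scaffolding does not capture any of this; the argument genuinely requires conditioning on the increment times of $Y$, not on stream positions.
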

\begin{proof}
Let $u_1,\dots,u_{n_1}$ be the updates that $X$ is run on, and let $v_{1},\dots, v_{n_2}$ be the updates that $Y$ is run on. Consider the distribution of a random variable $Z'$ run on $u_1,\dots,u_{n_1},v_1,\dots,v_{n_2}$. We show that $Z',Z$ have the same distribution.  We associate with each $u_i,v_i$ a uniform random variable $p_i ,q_i\in [0,1]$ respectively. First, fix $p_i$ exactly for all $i \in [n_1]$, and now condition on the set of $v_i$ which increment $Y$ during the execution. We now consider the condition distribution of the $q_i$'s given the updates $v_i$ which incremented $Y$. We show that the conditional distribution of executions of $Z'$ is the same as the merging procedure. 

To see this, conditioned on some counter value $y$, items $v_i$ that incremented $y$ has $q_i < b^{-y}$, otherwise they had $q_i \geq b^{-y}$. We will prove via induction that the items that: 1) the items $v_i$ that did not increment $Y$ do not increment $Z'$, 2) the item $v_j$ that had the $i$-th increment of $Y$, conditioned on the value of $Z'$ so far, increments $Z'$ with probability $b^{-Z + i -1}$, and 3) at any point during the processing of the $v_i$'s we have $Z' \geq Y$.

Note that at the beginning, we have $Z' \geq 0$ and $Y = 0$, so $Z' \geq Y$. Moreover, $v_1$ increments $Y$ with probability $1$, so conditioning on this does not affect the distribution of $q_1$. So the conditional probability that $Z'$ is incremented is just $b^{-Z'}$ as needed. If $Z' = 0$ initially, then $b^{-Z'} = 1$, so we maintain $Z' \geq 1 = Y$ at this point.

 Now consider any update $v_j$ that did not cause an increment to $Y$. We must have had $q_j \in [b^{-Y}, 1]$, and since we maintained $Z' \geq Y$ by induction, it follows that $v_j$ did not cause an update to  $Y$, and neither $Z'$ or $Y$ was affected by the update $v_j$, so we maintain $Z' \geq Y$. Now consider the item $v_j$ 
 that caused the $i$-th increment to $Y$. This conditioning implies $q_j \in [0,b^{-(i-1)}]$.  
 Conditioned on the value of $Z'$ so far, the probability $v_j$ increments $Z'$ is the probability that  $q_j \in [0,b^{-Z'}]$ conditioned on  $q_j \in [0,b^{-(i-1)}]$, which is $b^{-Z' + i - 1}$ as desired. Moreover, if we had that $Z' = Y$ at this time step, then $Z' = Y = (i-1)$, so $b^{-Z' + i - 1} = 1$ and $Z'$ is incremented with probability $1$ conditioned on the fact that $Y$ was incremented. So we maintain the property that $Z' \geq Y$ at all steps, which completes the induction proof. 	
\end{proof}

	\begin{corollary}\label{cor:l1}
	There is a protocol for $F_1$ estimation of non-negative vectors, equivalently for the approximate counting problem, in the message passing model which succeeds with probability $1-\delta$ and uses a max-communication of $O((\log\log(n) + \log(1/\eps) + \log(1/\delta))$-bits.
	\end{corollary}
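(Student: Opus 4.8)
The plan is to reduce $F_1$ estimation to approximate counting and then solve approximate counting by merging a single Morris counter up a spanning tree. Since all inputs are non-negative, $F_1 = \|\X\|_1 = \sum_{i=1}^m \|X_i\|_1 = \sum_{i=1}^m x_i$, where $x_i := \sum_{j=1}^n (X_i)_j \in \Z_{\geq 0}$ can be computed locally by player $i$; so it suffices to give a protocol that estimates $x = \sum_i x_i$ (which also proves the ``equivalently'' clause). As in Section \ref{sec:messagepass}, fix a center $\mathcal{C}$ of $G$ and a shortest-path spanning tree $T$ rooted at $\mathcal{C}$, and have all players agree in advance on a common base $b = 1 + \Theta((\eps\delta)^2)$ for their Morris counters.

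The protocol proceeds bottom-up over $T$. Each leaf player $i$ runs a base-$b$ Morris counter on a stream of $x_i$ insertions and sends the resulting counter value to its parent. Each internal player $j$, upon receiving the counter values from its children, builds a Morris counter on its own $x_j$ insertions and then iteratively applies the merging procedure of Lemma \ref{lem:Merge} to combine it with each child's counter, forwarding the merged counter to its parent. Because each two-way merge outputs a counter distributed exactly as one run on the concatenated stream, iterating is valid, and by induction on the depth the counter held by $\mathcal{C}$ at the end is distributed identically to a base-$b$ Morris counter run on $\sum_i x_i = x$ insertions. By Corollary \ref{Cor:Morris}, choosing the constant in $b$ appropriately, the associated estimate $\tilde{x}$ satisfies $\tilde{x} = (1\pm\eps)x$ with probability at least $1 - \delta/2$; $\mathcal{C}$ outputs $\tilde{x}$ as its estimate of $F_1$. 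Each player communicates exactly once, over a single edge, so this is a one-shot protocol and the max-communication is just the number of bits needed to transmit one counter value.

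To bound that quantity: every message is the value $C$ of a Morris counter that has been run on at most $x \le \poly(n,m)$ insertions, so by Proposition \ref{Prop:morris} and Markov's inequality, for any one edge $\Pr[\,C > T\,] \le \delta/(2m)$ with $T = O\!\left(\frac{\log(nm) + \log(1/\eps) + \log(1/\delta)}{(\eps\delta)^2}\right)$. Union bounding over the $m \le \poly(n)$ edges, with probability at least $1 - \delta/2$ every transmitted counter value is at most $T$ and can be sent using $O(\log T) = O(\log\log n + \log(1/\eps) + \log(1/\delta))$ bits; here the key point is that $m = \poly(n)$ forces $\log\log(nm) = O(\log\log n)$ and that $\log\log(1/\eps),\log\log(1/\delta)$ are absorbed into the existing $\log(1/\eps),\log(1/\delta)$ terms. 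Combining this communication event with the accuracy event gives overall success probability at least $1 - \delta$.

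The main thing to be careful about is precisely this communication analysis: one must verify that the union bound over all $m$ intermediate counters contributes only an additive $\log\log m = O(\log\log n)$ inside the bit count, rather than a prohibitive $\log m = \Omega(\log n)$ term. This works out because we only ever pay the logarithm of a counter value that is itself polylogarithmic in $n$ and polynomial in $1/\eps$ and $1/\delta$. A secondary (routine) point is to note that Lemma \ref{lem:Merge} extends from two counters to arbitrarily many by iteration, which is immediate since each merge produces a genuine Morris counter on the combined stream; I would also remark that this protocol is exactly what the recursive rounding scheme of Section \ref{sec:randomround} degenerates to on one-dimensional non-negative inputs.
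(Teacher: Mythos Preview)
Your proposal is correct and follows essentially the same approach as the paper: reduce $F_1$ to approximate counting via $\|\X\|_1=\sum_i \|X_i\|_1$, merge Morris counters up a spanning tree using Lemma~\ref{lem:Merge}, and invoke Corollary~\ref{Cor:Morris} for accuracy. The one difference is in the communication analysis: you bound each transmitted counter by Markov's inequality and union-bound over the $m$ edges, whereas the paper observes that the merging procedure is monotone---the output $Z$ of a merge always satisfies $Z\ge X$ and $Z\ge Y$---so the final counter at $\mathcal{C}$ dominates every intermediate counter ever sent, and a single application of Corollary~\ref{Cor:Morris} to that final counter bounds all message sizes at once without any union bound. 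Both arguments are valid; the monotonicity observation is just a cleaner shortcut.
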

\begin{proof}First note that the $F_1$ of $\X$ is simply $\sum_{i}\|X_i\|_1$, so each player can run a Morris counter on $\|X_i\|_1$ and send them up a tree to a center vertex $\mathcal{C}$ as in Section \ref{sec:randomround}.
	Since Morris counters can be merged without affecting the resulting distribution, the center player $\mathcal{C}$ at the end holds a Morris counter $Z$ which is distributed identically to a Morris counter with base $b$ run on $\|\X\|_1 = \sum_i\|X_i\|_1 $ updates. Then by Corollary \ref{Cor:Morris}, we have that $\tilde{n} = (b^Z - b)/(b-1) + 1 = (1 \pm \eps)\|\X\|_1$ with probability $1-\delta$, and moreover that $Z$ is at most $O(\log\log(n) + \log(1/\eps) + \log(1/\delta)$ bits long. Since $Z$ is at least as large as the size of any other counter sent across an edge, it follows that the max communication in the protocol was at most $O(\log\log(n) + \log(1/\eps) + \log(1/\delta)$ bits, and the proof of the total communication concludes from the fact that each player communicates exactly once. 
\end{proof}

We now note that Morris counters can easily used as approximate counters for streams with both insertions and deletions (positive and negative updates), by just storing a separate Morris counter for the insertions and deletions, and subtracting the estimate given by one from the other at the end. The error is then proportional to the total number of updates made.

\begin{corollary}\label{cor:Morriswithdeletions}
	Using two Morris counters separately for insertions and deletions, on a stream of $I$ insertions and $D$ deletions, there is an algorithm, called a signed Morris counter, which produces $\tilde{n}$ with $|\tilde{n} - n| \leq \eps(I+D)$, where $n = I-D$, with probability $1-\delta$, using space $O(\log\log(I+D) + \log(1/\eps) + \log(1/\delta ))$.
\end{corollary}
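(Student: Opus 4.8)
The plan is to run two independent copies of the basic Morris counter from Corollary~\ref{Cor:Morris}: a counter $C^+$ that is incremented on each of the $I$ insertions, and a counter $C^-$ that is incremented on each of the $D$ deletions, both using base $b = 1 + (\eps \delta')^2$ with $\delta' = \delta/2$. At the end the algorithm outputs $\tilde n = \tilde I - \tilde D$, where $\tilde I = (b^{C^+}-b)/(b-1)+1$ and $\tilde D = (b^{C^-}-b)/(b-1)+1$ are the two individual estimators.

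First I would invoke Corollary~\ref{Cor:Morris} on each counter separately: with probability at least $1-\delta'$ we have $\tilde I = (1\pm\eps) I$, and (independently) with probability at least $1-\delta'$ we have $\tilde D = (1\pm\eps) D$. Union bounding over these two events, both hold simultaneously with probability at least $1-2\delta' = 1-\delta$. On this event,
\[
|\tilde n - n| = \big| (\tilde I - I) - (\tilde D - D) \big| \le |\tilde I - I| + |\tilde D - D| \le \eps I + \eps D = \eps(I+D),
\]
which is exactly the claimed bound. (If $I=0$ or $D=0$ the corresponding counter is trivially exact, so the bound still holds.)

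For the space bound, Corollary~\ref{Cor:Morris} gives that, conditioned on the same good events, $C^+$ is stored in $O(\log\log I + \log(1/\eps) + \log(1/\delta'))$ bits and $C^-$ in $O(\log\log D + \log(1/\eps) + \log(1/\delta'))$ bits; since $\log\log I, \log\log D \le \log\log(I+D)$ and $\log(1/\delta') = \log(1/\delta)+O(1)$, the total storage is $O(\log\log(I+D) + \log(1/\eps) + \log(1/\delta))$ as claimed.

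There is no real obstacle here: the statement is a direct consequence of Corollary~\ref{Cor:Morris} together with a union bound over the two counters. The one conceptual point worth flagging is that the error is necessarily additive in $I+D$ rather than multiplicative in $n = I-D$ --- when massive cancellation occurs ($n \ll I+D$), no sublinear-space approximate counter can give a multiplicative guarantee, so the additive form is the best achievable, and it is precisely what this two-sided construction delivers.
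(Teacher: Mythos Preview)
Your proposal is correct and follows exactly the approach the paper indicates: the paper does not give a formal proof of this corollary, only the one-sentence remark preceding it that one ``stor[es] a separate Morris counter for the insertions and deletions, and subtract[s] the estimate given by one from the other at the end,'' which is precisely what you have written out, with the union bound and triangle-inequality details filled in.
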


Hereafter, when we refer to a Morris counter that is run on a stream which contains both positive and negative updates as a \textit{signed} Morris counter. Therefore, the guarantee of Corollary \ref{cor:Morriswithdeletions} apply to such signed Morris counters, and moreover such signed Morris counters can be Merged as in Lemma \ref{lem:Merge} with the same guarantee, but just Merged separately the counters for insertions and deletions.


\subsection{The $F_p$ Estimation Protocol, $p <1$}

\begin{figure}[h!]
	\fbox{\parbox{\textwidth}{ Procedure for player $j$\\
			$k \leftarrow \Theta(1/\eps^2)$\\
			$\eps' \leftarrow \Theta(\eps\frac{\delta^{1/p}}{\log(n /\delta)} )$\\
			$\delta \leftarrow 1/(200k)$
			\begin{enumerate}[topsep=0pt,itemsep=-1ex,partopsep=1ex,parsep=1ex] 
				\item Using shared randomness, choose sketching matrix $S \in \R^{k \times n}$ of i.i.d. $p$-stable random variables, with $k = \Theta(1/\eps)$. Generate $S$ up to precision $\eta = \poly(1/(n,m,M))$, so that $\eta^{-1}S$ has integral entries. 
				\item For each $i \in [k]$, receive signed Morris counters $y_{j_1,i},y_{j_2,i},\dots,y_{j_t,i}$ from the $t \in \{0,\dots,m\}$ children of node $j$ in the prior layer. 
				\item Compute $\eta^{-1}\langle S_i, X_j \rangle \in \Z$, where $S_i$ is the $i$-th row of $S$, and run a new signed Morris counter $C$ on $\eta^{-1} \langle S_i, X_j \rangle$ with parameters $(\eps',\delta')$. 
				
				\item Merge the signed Morris counters $y_{j_1,i},y_{j_2,i},\dots,y_{j_t,i},C$ into a counter $y_{j,i}$.
				\item Send the merged signed Morris counter $y_{j,i}$ to the parent of player $j$. If player $j$ is the root node $\mathcal{C}$, then set $C_i$ to be the estimate of the signed Morris counter $y_{j,i}$, and return the estimate $$\eta \cdot \text{median} \; \{\frac{|C_1|}{\theta_p}, \dots, \frac{|C_k|}{\theta_p} \}$$
				Where $\theta_p$ is the median of the distribution $\mathcal{D}_p$.
				
			\end{enumerate}

	}}\caption{Multi-party $F_p$ estimation protocol, $p<1$} \label{fig:p<1}
\end{figure}

		We now provide our algorithm for $F_p$ estimation in the message passing model with $p \leq 1$. Our protocol is similar to our algorithm for $p \geq 1$. We fix a vertex $\mathcal{C}$ which is a center of the communication topology: i.e., the distance of the farthest vertex from $\mathcal{C}$ in $G$ is minimal among all such vertices. We then consider the shortest path tree $T$ rooted at $\mathcal{C}$, which has depth at most $d$, where $d$ is the diameter of $G$. The players then choose random vectors $S_i \in \R^n$ for $i \in [k]$, and the $j$-th player computes $\langle S_i , X_j \rangle$, and adds this value to a Morris counter. Each player receives Morris counters from their children in $T$, and thereafter merges these Morris counters with its own. Finally, it sends this merged Morris counter, containing updates from all players in the subtree rooted at $j$, to the parent of $j$ in $T$. At the end, the center $\mathcal{C}$ holds a Morris counter $C_i$ which approximates $\sum_j \langle S_i, X_j\rangle$. The main algorithm for each player $j$ is given formally in Figure \ref{fig:p<1}.

		We first prove a technical lemma which bounds the total number of updates made to the Morris counters. Since the error of the signed Morris counters is proportional to the number of updates made, and not to the actual value that is being approximated, it is important that the number of updates does not exceed the desired quantity of estimation by too much.

		\begin{lemma}\label{lem:L1bound}
			Let $Z_1,\dots,Z_n$ each be distributed via some distribution $\mathcal{D}$, but not necessarily independently, such that $\mathcal{D}$ has  the property that if $Z \sim D$ then there is a universal constant $c>0$ such that for every $\lambda >1$ we have $\pr{|Z| >c\lambda} < \lambda^{-p}$ for some $0 < p \leq 1$. Note that this is true of $D_p$.  Fix any positive vector $\X \in \R^n$, and let $Y_i = Z_i \X_i$ for $i \in [n]$. Then for any $\lambda \geq 1$, if $1-p > \theta > 0$ for some constant $\theta$, then
			\[ \bpr{\|Y\|_1 \geq C \lambda^{1/p} \|\X\|_p  } \leq  \frac{1}{\lambda}\]
			otherwise 
			\[ \bpr{\|Y\|_1 \geq C \lambda^{1/p} \log(n \lambda) \|\X\|_p  } \leq  \frac{1}{\lambda}\]
			Where $C>0$ is a fixed constant independent of $\lambda,n,p$.
		\end{lemma}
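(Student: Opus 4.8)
The plan is to mimic the level-set (layer-cake) argument used in the proof of Lemma \ref{Lemma:lpVariance}, but now with exponent $q=1$ and a single vector $\X$ rather than a collection $X_1,\dots,X_m$. First I would write $Y_i = Z_i \X_i$ and, for $k \in \Z$, define the level sets $I_k = \{ i \in [n] \mid 2^{-k}\|\X\|_p \le |Z_i|\,\X_i \le 2^{-k+1}\|\X\|_p \}$, so that a coordinate $i$ lies in $I_k$ only if $|Z_i| \ge 2^{-k}\|\X\|_p/\X_i$. Using the tail bound $\pr{|Z|>c\lambda}<\lambda^{-p}$ (which holds for $\mathcal{D}_p$ by Proposition \ref{prop:stabletails}), I get $\pr{i \in I_k} \le c\,(2^k \X_i^p/\|\X\|_p^p)$, hence $\ex{|I_k|} \le c\,2^k \sum_i \X_i^p/\|\X\|_p^p = c\,2^k$ for every $k$, where I use $\sum_i \X_i^p = \|\X\|_p^p$ (here equality, since there is a single nonnegative vector).

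Next I would split the sum $\|Y\|_1 = \sum_k (\text{contribution of } I_k)$ into three ranges of $k$. For the very negative $k$ (large individual terms), say $k \le -\log(10C'\lambda)/p$ up to constants, I argue as in Lemma \ref{Lemma:lpVariance} that with probability $\ge 1 - 1/(10\lambda)$ all such $I_k$ are empty: the expected number of coordinates landing in some such $I_k$ is $\le \sum_i c\,2^k \X_i^p/\|\X\|_p^p \lesssim \lambda^{-1}$, so Markov's inequality kills this range. For the middle range, $-\log(10C'\lambda)/p \lesssim k \lesssim 2\log(n\lambda)$, the contribution of $I_k$ is at most $|I_k|\,2^{-k+1}\|\X\|_p$, so in expectation it is $\sum_k c\,2^{-k+1}\cdot 2^k \cdot$ wait — more carefully, $\ex{|I_k| 2^{-k+1}\|\X\|_p} \le 2c\,2^{k}\cdot 2^{-k}\|\X\|_p = 2c\|\X\|_p$ per value of $k$; summing over the $O(\log(n\lambda))$ relevant values of $k$ from $-\frac1p\log(10C'\lambda)$ upward gives a geometric-type series. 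When $1-p>\theta$ the relevant geometric ratio makes the sum dominated by its largest-term end, giving $O(\lambda^{1/p}\|\X\|_p)$; when $p$ is close to $1$ the series is only logarithmically many essentially-equal terms, giving the extra $\log(n\lambda)$ factor — this is exactly the dichotomy in the statement. Applying Markov again on this expected contribution yields failure probability $\le 1/(10C'\lambda)$ for this range. Finally, for $k \gtrsim 2\log(n\lambda)$, each term $|Z_i|\X_i \le 2^{-k+1}\|\X\|_p \le \frac{1}{n^2\lambda}\|\X\|_p$ deterministically, so summing over all $n$ coordinates contributes at most $\frac{1}{n\lambda}\|\X\|_p$, negligible. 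A union bound over the bad events in the first two ranges gives total failure probability $\le 1/\lambda$, and rescaling the constant $C$ absorbs the $10C'$ factors.

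The main obstacle I anticipate is bookkeeping the geometric-series estimate in the middle range so that the two cases ($1-p>\theta$ versus $p \to 1$) come out with the correct exponent $\lambda^{1/p}$ and the correct $\log(n\lambda)$ factor, and making sure the cutoffs for the three $k$-ranges are chosen consistently (the negative cutoff at $-\frac1p\log(\Theta(\lambda))$ and the positive cutoff at $\Theta(\log(n\lambda))$) so that the empty-set argument, the Markov argument, and the deterministic tail argument all fit together with a clean union bound. Everything else is a direct specialization of the Lemma \ref{Lemma:lpVariance} argument with $q$ set to $1$, $m$ replaced by $n$, and the collection of vectors replaced by the single vector $\X$ (with its coordinates $\X_i$ playing the role the $\|X_j\|_p$ played there), so the structure carries over essentially verbatim.
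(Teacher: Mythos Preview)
Your approach is exactly the paper's: level sets $I_k$, an empty-set Markov argument for very negative $k$, a deterministic tail for large $k$, and a Markov bound on the expected contribution of the middle range, with the $1-p>\theta$ versus $p\to 1$ dichotomy governing whether the middle sum is geometric or picks up a $\log(n\lambda)$ factor.

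There is one slip to fix. The tail hypothesis $\pr{|Z|>c\lambda}<\lambda^{-p}$ applied with threshold $2^{-k}\|\X\|_p/\X_i$ gives
\[
\pr{i\in I_k}\;\le\; c'\,2^{kp}\,\frac{\X_i^p}{\|\X\|_p^p},
\]
so $\ex{|I_k|}\le c'\,2^{kp}$, not $c\,2^k$. With the correct exponent, the per-level expected contribution in the middle range is $\ex{|I_k|}\cdot 2^{-k+1}\|\X\|_p \le 2c'\,2^{-k(1-p)}\|\X\|_p$, which is precisely the geometric-type series you anticipated: when $1-p>\theta$ it is dominated by the $k\approx -\tfrac1p\log(\Theta(\lambda))$ end and sums to $O(\lambda^{1/p-1}\|\X\|_p)$, and when $p\to 1$ each term is essentially constant and you get the extra $\log(n\lambda)$ factor. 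A final Markov at threshold $\Theta(\lambda^{1/p}\|\X\|_p)$ (respectively $\Theta(\lambda^{1/p}\log(n\lambda)\|\X\|_p)$) then yields the $1/\lambda$ failure probability. With this exponent corrected, your write-up matches the paper's proof essentially line for line.
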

		\begin{proof}
			We proceed similarly as in Lemma \ref{Lemma:lpVariance}. For $k \in \Z$, let $I_k = \{ i \in [n]\; \big|\; 2^{-k} \|\X\|_p \leq  Y_i \leq 2^{-k+1}\|\X\|_p \}$. Note that for any $i \in [n]$
			\[ \bpr{i \in \bigcup_{k' \geq k} I_{k'}} \leq \bpr{Z_i \geq 2^{-k} \|\X\|_p/ \X_i}  \]
			\[ \leq c  2^{kp}  \frac{\X_i^p}{\|\X\|_p^p}		\]
			For some constant $c > 0$, where in the last inequality we used Proposition \ref{prop:stabletails} for $D_p$ (or the fact that $\mathcal{D}$ has tails of order $\lambda^{-p}$ by assumption). Thus $\ex{\sum_{k' \geq k}|I_{k'}|} \leq c 2^{pk}$. Let $\mathcal{E}_0$ be the event that $I_k = \emptyset$ for all $k \leq- \frac{1}{p}\log(10 c\lambda )$ for some sufficiently large constant $c$. By Markov's inequality, we have $\pr{\mathcal{E}_0 } \geq 1 - \frac{1}{10 \lambda}$. Now for any $k \geq \frac{1}{p} \log(n^2 /\lambda)$, the contribution of any coordinate of $Y_i$ to $\|Y\|_1$ with $i \in I_k$ is at most $\frac{\lambda }{n^2}\|\X\|_p$, thus the total contribution of all such items is at most $\frac{\lambda }{n}\|\X\|_p$, which we can safely disregard. Finally, the expected contribution of the coordinates $Y_i$ for $i \in I_k$ with $ - \lceil \frac{1}{p}\log(10 c\lambda )\rceil \leq  k \leq \lceil \frac{1}{p} \log(n^2 /\lambda) \rceil$ is at most 
			
			\[\bex{	\sum_{- \lceil \frac{1}{p}\log(10 c\lambda )\rceil \leq  k \leq \lceil \frac{1}{p} \log(n^2 /\lambda) \rceil} 2^{-k+1} \|\X\|_p|I_k|} \leq c \sum_{- \lceil \frac{1}{p}\log(10 c\lambda )\rceil \leq  k \leq \lceil \frac{1}{p} \log(n^2 /\lambda) \rceil}  2^{-k(1-p)+1 } \|\X\|_p \] 
			If $p$ is constant bounded from $1$, this sum is geometric and bounded above by $$O\left(2^{\log(10 c \lambda)(1/p - 1) +1 } \|\X\|_p\right) = O\left((10 c\lambda)^{1/p - 1} \|\X\|_p\right)$$ otherwise we can upper bound each term by this quantity, giving a total bound of $$O\left((10 c\lambda)^{1/p - 1} \frac{\log(n\lambda) }{p} \|\X\|_p\right) =O\left((10 c\lambda)^{1/p - 1} \log(n\lambda)  \|\X\|_p\right) $$ where the equality holds because $1/p = \Theta(1)$ if $p$ is close to $1$.  In the first case, by Markov's  inequality, we have
			
			\[\bpr{	\sum_{- \lceil \frac{1}{p}\log(10 c\lambda )\rceil \leq  k \leq \lceil \frac{1}{p} \log(n^2 /\lambda) \rceil} 2^{-k+1} \|\X\|_p|I_k| \geq 2(10 c \lambda)^{1/p} \|\X\|_p} \leq \frac{1}{2\lambda} \]
			Union bounding over the above event and $\mathcal{E}_0$, we obtain the desired result with probability at least $1-(\frac{1}{2\lambda} + \frac{1}{10 \lambda}) > 1-\frac{1}{\lambda}$ in the case that $p$ is constant bounded from $1$. In the event that it is not,  Markov's inequality gives:
			\[\bpr{	\sum_{- \lceil \frac{1}{p}\log(10 c\lambda )\rceil \leq  k \leq \lceil \frac{1}{p} \log(n^2 /\lambda) \rceil} 2^{-k+1} \|\X\|_p|I_k| \geq 2(10 c \lambda)^{1/p} \log(n \lambda) \|\X\|_p} \leq \frac{1}{2\lambda} \]
			as needed. Applying the same union bound gives the desired result for $p$ arbitrarily close or equal to $1$. 
			
		\end{proof}

\begin{theorem}\label{thm:morrismain}
	For $p \in (0,1)$, there is a protocol for $F_p$ estimation in the message passing model which succeeds with probability $2/3$ and uses a total communication of $O(\frac{m}{\eps^2}(\log\log(n) + \log(1/\eps))$-bits, and a max-communication of $O(\frac{1}{\eps^2}(\log\log(n) + \log(1/\eps))$-bits. The protocol requires a total of at most $d$ rounds, where $d$ is the diameter of the communication topology $G$.
\end{theorem}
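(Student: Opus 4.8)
The plan is to analyze the sketch-and-merge protocol of Figure~\ref{fig:p<1}. The players share Indyk's $p$-stable sketching matrix $S \in \R^{k\times n}$ with $k = \Theta(\eps^{-2})$ rows and entries drawn from $D_p$ (truncated to $\poly(n)$ magnitude and $\eta = 1/\poly(n,m,M)$ precision, so that $\eta^{-1}S$ is integral); each player~$j$ feeds the integer $\eta^{-1}\langle S_i, X_j\rangle$ into a fresh signed Morris counter for every $i\in[k]$ and merges its counters with those of its children along the shortest-path tree $T$ rooted at the center $\mathcal{C}$. By Lemma~\ref{lem:Merge} and its signed analogue (Corollary~\ref{cor:Morriswithdeletions}), the counter the root holds for row $i$ is distributed \emph{exactly} as a signed Morris counter run on
\[
N_i \;:=\; \eta^{-1}\sum_{j=1}^m |\langle S_i, X_j\rangle|
\]
updates ($\eta^{-1}\langle S_i,X_j\rangle$ insertions or deletions from player $j$, according to sign). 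Hence the root recovers, for each $i$, a value $\eta C_i$ with $|\eta C_i - \langle S_i,\X\rangle| \le \eps' \cdot \eta N_i = \eps'\sum_j|\langle S_i,X_j\rangle|$ with probability $1-\delta'$, by Corollary~\ref{cor:Morriswithdeletions}.

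Two facts then finish the correctness argument. First, by $p$-stability (Definition~\ref{def:stable}), $\langle S_i,\X\rangle \sim \|\X\|_p Z_i$ with $Z_i\sim D_p$, so the standard Indyk analysis gives $\mathrm{median}_i\{|\langle S_i,\X\rangle|/\theta_p\} = (1\pm\eps)\|\X\|_p$ with probability at least $3/4$; since the median of a list of reals shifts by at most the largest perturbation applied to any element, and $|\eta C_i|$ differs from $|\langle S_i,\X\rangle|$ by at most the error above, it suffices to bound the per-row error by $O(\eps)\|\X\|_p$, which perturbs the final estimate by $O(\eps)\|\X\|_p$ (rescale $\eps$; raising to the $p$-th power if $F_p = \|\X\|_p^p$ is the target preserves the relative error since $p\le1$). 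Second --- the crux --- we must bound $N_i$, because the Morris error scales with the \emph{number of updates}, not with $|\langle S_i,\X\rangle|$. Here the key observation is that, since each $X_j$ is non-negative,
\[
\sum_{j=1}^m |\langle S_i, X_j\rangle| \;\le\; \sum_{j=1}^m\sum_{\ell=1}^n |(S_i)_\ell|\,(X_j)_\ell \;=\; \sum_{\ell=1}^n |(S_i)_\ell|\,\X_\ell,
\]
which is precisely the quantity $\|Y\|_1$ of Lemma~\ref{lem:L1bound} with $Z_\ell = |(S_i)_\ell|$ i.i.d.\ having tails of order $\lambda^{-p}$ (Proposition~\ref{prop:stabletails}). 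Lemma~\ref{lem:L1bound} therefore yields $\sum_j|\langle S_i,X_j\rangle| \le C\lambda^{1/p}\log(n\lambda)\,\|\X\|_p$ with probability $1-1/\lambda$ (the $\log$ factor is dropped when $p$ is bounded away from~$1$).

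Putting the parameters of Figure~\ref{fig:p<1} together --- $\delta = 1/(200k) = \Theta(\eps^2)$, $\lambda = \Theta(1/\delta)$, and $\eps' = \Theta(\eps\,\delta^{1/p}/\log(n/\delta))$ --- the per-row error is $\eps' \cdot C\lambda^{1/p}\log(n\lambda)\,\|\X\|_p = \Theta(\eps)\,\|\X\|_p$, since $\lambda^{1/p}$ cancels the $\delta^{1/p}$ and $\log(n\lambda) = \Theta(\log(n/\delta))$ cancels the denominator. Union-bounding over the $k$ rows the failure probabilities of the update-count bound ($1/\lambda$ each), of the Morris estimates ($\delta'$ each), and of the $\poly(1/n)$ event that some $|(S_i)_\ell|$ exceeds its truncation threshold, all cost a small constant, so the protocol succeeds with probability $\ge 2/3$. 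For communication, $N_i \le \poly(n)$ after truncation and since $\|\X\|_p \le \poly(n)$, so by Corollary~\ref{Cor:Morris} (signed version) every counter is stored and transmitted in $O(\log\log N_i + \log(1/\eps') + \log(1/\delta')) = O(\log\log n + \log(1/\eps))$ bits; each player sends one message per row, and $k = \Theta(\eps^{-2})$, giving max-communication $O(\eps^{-2}(\log\log n + \log(1/\eps)))$ and total $O(m\eps^{-2}(\log\log n + \log(1/\eps)))$; since every message travels one edge toward $\mathcal{C}$ in $T$, whose depth is at most $d$, the protocol runs in at most $d$ rounds.

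The main obstacle is the update-count estimate of the previous paragraph: a signed Morris counter run on $I$ insertions and $D$ deletions estimates $I-D$ only up to additive error $\eps'(I+D)$, so without knowing that $\sum_j|\langle S_i,X_j\rangle|$ is only $\mathrm{polylog}(n)$ times $\|\X\|_p$ the cumulative error could be polynomially larger than the quantity being estimated. The inequality $\sum_j|\langle S_i,X_j\rangle|\le\langle|S_i|,\X\rangle$ is what collapses this to a single sum of i.i.d.\ heavy-tailed variables against the non-negative aggregate vector, exactly the form Lemma~\ref{lem:L1bound} controls; this is also the point at which $p\le 1$ is essential, since for $p>1$ the sum $\sum_j|\langle S_i,X_j\rangle|$ can be $\Omega(m^{1-1/p})$ times $\|\X\|_p$, which is why the $p>1$ regime instead needs the randomized-rounding protocol of Section~\ref{sec:randomround}.
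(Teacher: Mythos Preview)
Your proposal is correct and follows essentially the same approach as the paper's proof: merge signed Morris counters up the shortest-path tree (Lemma~\ref{lem:Merge} and Corollary~\ref{cor:Morriswithdeletions}), bound the total update count via the inequality $\sum_j|\langle S_i,X_j\rangle|\le\sum_\ell|(S_i)_\ell|\X_\ell$ together with Lemma~\ref{lem:L1bound}, choose $\eps'$ so that the resulting per-row error is $O(\eps)\|\X\|_p$, and then invoke Indyk's median estimator with a union bound over the $k$ rows. Your parameter accounting and communication analysis match the paper's as well.
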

\begin{proof}
	By Lemma \ref{lem:Merge}, the Merging procedure has the effect that at the end, the player $\mathcal{C}$ has a signed Morris counter $I,D \in \Z_{\geq 1}$, such that $I$ is distributed as a Morris counter run on the updates $\eta^{-1}\sum_{j : \langle S_i, X_j \rangle \geq 0} \langle S_i, X_j \rangle$, and $D$ is distributed as a Morris counter run on the updates $\eta^{-1}\sum_{j : \langle S_i, X_j \rangle < 0} \langle S_i, X_j \rangle$. Then By Corollary \ref{cor:Morriswithdeletions}, the estimate of this signed Morris counter is a value $\tilde{I} - \tilde{D}$ such that $|(\tilde{I} - \tilde{D}) - \eta^{-1}\sum_{j } \langle S_i, X_j \rangle| \leq \eps' \eta^{-1}\sum_{j } |\langle S_i, X_j \rangle| \leq\eps' \eta^{-1}\sum_{j=1}^n |S_{i,j}  \X_j|$,  with probability $1-\delta$. Call this event $E_{i}^1$. Now by Lemma \ref{lem:L1bound}, with probability at least $1-\delta$, we have that $\sum_{j=1}^n |S_{i,j}  \X_j| \leq C  \delta^{-1/p}\log(n \delta )\|\X\|_p$ for some constant $C$. Note that while Lemma \ref{lem:L1bound} held for random variables $S_{i,j}$ that were not generated to finite precision, note that generating up to precision $\eta$ only changes each term in the sum by at most $\eta M$, and since $\|\X\|_p \geq 1$ (because $\X \neq 0$, and if it was it could be tested with $O(1)$ bits of communication per player), this additive error can be absorbed into the constant $C$. So call the event that this inequality holds $E_{i}^2$, and let $E_i$ be the event that both $E_i^1$ and $E_i^2$ hold. Note $\pr{E_i} > 1-2 \delta$. We now condition on $\cap_{i=1}^k E_i$, which occurs with probability $1 - 2k \delta > 99/100$.

	Now, conditioned on  $\cap_{i=1}^k E_i$, it follows that for each $i \in [k]$, the center $\mathcal{C}$ has an estimate $C_i$ of $\eta^{-1}\langle S_i, \X \rangle$ such that
	\begin{equation}
	\begin{split}
	|\eta C_i - \langle S_i, \X \rangle| &\leq \eps'\sum_{j=1}^n |S_{i,j}  \X_j|\\
	& \leq C  \eps'\delta^{-1/p}\log(n /\delta )\|\X\|_p\\
	 &\leq \eps \|\X\|_p/4\\
	\end{split}
	\end{equation}
	Now by \cite{indyk2006stable}, setting $k = \Theta(1/\eps^2)$, if $S_j$ is the $j$-th row of $S$, then the median $\tilde{F}_p$ of the set $\big\{\frac{|\langle S_1, \X \rangle|}{\theta_p}, \dots, \frac{|\langle S_k, \X \rangle|}{\theta_p} \big\}$ satisfies $|\tilde{F}_p -  \|\X\|_p| < (\eps/2)  \|\X\|_p$ with probability at least $4/5$. Since $|\eta C_i - \langle S_i, \X \rangle| < (\eps/4)  \|\X\|_p$ for all $i \in [k]$, it follows that $\left| \eta \cdot \text{median} \; \{\frac{|C_1|}{\theta_p}, \dots, \frac{|C_k|}{\theta_p} \} - \|\X\|_p\right| \leq \eps \|\X\|_p$ with probability at least $1 - (1/100 + 1/5) > 3/4$, which completes the proof of correctness.
	
	For message complexity, note that conditioned on $\cap_{i=1}^k E_i$, every Morris counter in question is a $(1 + \eps)$ relative error approximation of the quantity (insertions or deletions) that it is estimating. Thus by Corollary \ref{cor:Morriswithdeletions}, noting that at most $\poly(n,m,M)$ updates are every made to any Morris counter, the space to store any Morris counter is $O(\log\log(n) + \log(1/\eps))$, which is an upper bound on the size of any message in the protocol. Note that we can safely fail if any message becomes larger than this threshold, since this would mean that $\cap_{i=1}^k E_i$ failed to hold. The total message complexity follows from the fact that each player sends at most one message during the entire protocol, and the round complexity follows from the depth of the shortest path tree rooted at the center $\mathcal{C}$.

\end{proof}
	
	\subsection{The Streaming Algorithm for $F_p$ Estimation, $p<1$}
	As discussed earlier, the insertion-only streaming model of computation is a special case of the above communication setting, where the graph in question is the line graph, and each player receives vector $X_i \in \R^n$ which is the standard basis vector $e_j \in \R^n$ for some $j \in [n]$. The only step remaining to fully generalize the result to the streaming setting is an adequate derandomization of the randomness required to generate the matrix $S$. Note that while it is a standard assumption in communication complexity that the players have access to an infinitely long shared random string, the likes of which is used to generate the matrix $S$ of $p$-stable random variables, in streaming, generally, the cost to store all randomness is counted against the space requirements of the algorithm. The model of streaming where this is not the case is known as the \textit{random-oracle model}.
	
	Our derandomization will follow from the results of $\cite{kane2010exact}$, which demonstrate that, using a slightly different estimator known as the log-cosine estimator (discussed below), the entries of each row $S_i$ can be generated with only $\Theta( \log(1/\eps)/\log\log(1/\eps))$-wise independence, and the seeds used to generate separate rows of $S_i$ need only be pairwise independent. Thus, storing the randomness used to generate $S$ requires only  $O(\frac{\log(1/\eps)}{\log\log(1/\eps)}\log(n))$-bits of space. 
	
	We now discuss the estimator of \cite{kane2010exact} precisely. The algorithm generates a matrix $S \in \R^{k \times n}$ and $S' \in \R^{k' \times n}$ with $k = \Theta(1/\eps^2)$ and $k' = \Theta(1)$, where each entry of $S,S'$ is drawn from $\mathcal{D}_p$. For a given row $i$ of $S$, the entries $S_{i,j}$ are $\Theta(\log(1/\eps)/\log\log(1/\eps))$-wise independent, and for $i \neq i'$, the seeds used to generate $\{S_{i,j}\}_{j=1}^n $ and $\{S_{i',j}\}_{j=1}^n$ are pairwise independent. $S'$ is generated with only $\Theta(1)$-wise independence between the entries in a given row in $S'$, and pairwise independence between rows. The algorithm then maintains the vectors $y = S\X$ and $y' = S'\X$ throughout the stream, where $\X \in \Z^n_{\geq 0}$ is the stream vector. Define $y_{med}' = \text{median}\{|y_i'| \}_{i=1}^{k'}/\theta_p$, where $\theta_p$ is the median of the distribution $\mathcal{D}_p$ (\cite{kane2010exact} discusses how this can be approximated to $(1 \pm \eps)$ efficiently). The log-cosine estimator $R$ of $\|\X\|_p$ is then given by
	\[R = y_{med}' \cdot \left(	- \ln\left(  \frac{1}{k}\sum_{i=1}^k  \cos\left( \frac{y_i}{y_{med}'}\right)\right)	\right)  \]

	\begin{theorem}\label{thm:streamingMain}
		There is a streaming algorithm for insertion only $F_p$ estimation, $p \in (0,1)$, outputs a value $\tilde{R}$ such that with probability at least $2/3$, we have that \[ |\tilde{R} - \|\X\|_p| \leq \eps \|\X\|_p\] where $\X \in \R^n$ is the state of the stream vector at the end of the stream. The algorithm uses $O((\frac{1}{\eps^2}(\log\log(n) + \log(1/\eps)) + \frac{\log(1/\eps)}{\log\log(1/\eps)}\log(n) )$-bits of space.
	\end{theorem}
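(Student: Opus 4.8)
The plan is to instantiate the protocol of Theorem~\ref{thm:morrismain} on the line graph, where the $t$-th player holds the $t$-th stream update $\Delta_t e_{i_t}$: then maintaining $k=\Theta(1/\eps^2)$ signed Morris counters that track $\langle S_1,\X\rangle,\dots,\langle S_k,\X\rangle$ \emph{is} a streaming algorithm, and the merging of Lemma~\ref{lem:Merge} is exactly what the stream does as it reads successive updates. Two modifications are needed to control the \emph{stored} randomness (which, unlike in the communication model, is charged against the space outside the random-oracle model). First, I would replace Indyk's median estimator by the log-cosine estimator $R = y'_{\mathrm{med}}\cdot\big(-\ln\big(\tfrac1k\sum_{i=1}^k\cos(y_i/y'_{\mathrm{med}})\big)\big)$ of \cite{kane2010exact}, also maintaining the auxiliary sketch $y'=S'\X$ with $S'\in\R^{\Theta(1)\times n}$ to high precision (costing only $O(\log n)$ bits). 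Second, I would generate $S$ and $S'$ with the reduced independence shown to suffice in \cite{kane2010exact}: the entries within each row of $S$ only $\Theta(\log(1/\eps)/\log\log(1/\eps))$-wise independent, those within each row of $S'$ only $\Theta(1)$-wise independent, and the seeds generating distinct rows only pairwise independent. In the random-oracle model the second modification is unnecessary and the analysis below yields the stated $O(\eps^{-2}(\log\log n+\log 1/\eps))$ bound directly.

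The first substantive step is to check that reading the $y_i$ off Morris counters rather than exactly is harmless. Running each signed Morris counter with the parameters $(\eps',\delta')$ of Theorem~\ref{thm:morrismain} — $\eps'=\Theta(\eps\,\delta'^{1/p}/\log(n/\delta'))$, $\delta'=\Theta(\eps^2)$ — Corollary~\ref{cor:Morriswithdeletions} gives $|\eta C_i-\langle S_i,\X\rangle|\le \eps'\sum_{j=1}^n|S_{i,j}\X_j|$, and Lemma~\ref{lem:L1bound} gives $\sum_j|S_{i,j}\X_j|\le C\,\delta'^{-1/p}\log(n/\delta')\|\X\|_p$. The point that makes this compatible with derandomization is that the proof of Lemma~\ref{lem:L1bound} uses only the \emph{marginal} law of each $S_{i,j}$ (it bounds $\Pr[i\in I_k]$ via Proposition~\ref{prop:stabletails} coordinatewise, then applies linearity of expectation and Markov to the level-set sizes $|I_k|$), so it holds verbatim under the limited-independence construction; the Morris-counter guarantee is likewise unaffected. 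Union-bounding over the $O(k)$ relevant failure events, with probability $\ge 99/100$ we get $|\eta C_i-\langle S_i,\X\rangle|\le \eps^2\|\X\|_p$ for all $i\in[k]$ simultaneously, provided $\eps'$ is a small enough polynomial in $\eps$ (divided by $\log n$), exactly as in the proof of Theorem~\ref{thm:morrismain}.

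Next I would feed these perturbed values into the \cite{kane2010exact} analysis. On their good event one has $y'_{\mathrm{med}}=\Theta(\|\X\|_p)$ and $\tfrac1k\sum_i\cos(y_i/y'_{\mathrm{med}})$ bounded away from $0$ by an absolute constant, and \cite{kane2010exact} shows that with the stated limited independence the resulting $R$ is a $(1\pm\eps)$ approximation of $\|\X\|_p$ with probability $9/10$ (taking a large enough constant in $k=\Theta(1/\eps^2)$). Since $\cos$ is $1$-Lipschitz, replacing the exact $y_i$ by $\eta C_i$ perturbs each summand by $O(\eps^2\|\X\|_p/|y'_{\mathrm{med}}|)=O(\eps^2)$, hence the average by $O(\eps^2)$, hence $-\ln(\cdot)$ by $O(\eps^2)$ (the argument stays bounded away from $0$), and finally $R$ by $O(\eps^2\|\X\|_p)$. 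Combining with the $99/100$ event above and rescaling $\eps$ by a constant yields $|\tilde R-\|\X\|_p|\le\eps\|\X\|_p$ with probability $\ge 2/3$.

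For the space bound: the $k=\Theta(1/\eps^2)$ signed Morris counters each see $\poly(n)$ updates and, by Corollary~\ref{cor:Morriswithdeletions}, take $O(\log\log n+\log(1/\eps')+\log(1/\delta'))=O(\log\log n+\log(1/\eps))$ bits, for $O(\eps^{-2}(\log\log n+\log 1/\eps))$ total; $y'$ takes $O(\log n)$ bits; and the seed for $S$ — a degree-$\Theta(\log(1/\eps)/\log\log(1/\eps))$ polynomial over a field of size $\poly(n)$, together with an $O(\log n)$-bit pairwise-independent seed across rows — takes $O\big(\tfrac{\log 1/\eps}{\log\log 1/\eps}\log n\big)$ bits, which dominates the $O(\log n)$ from $S'$ and $y'$; summing gives the claimed bound, and in the random-oracle model the seed term vanishes. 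The main obstacle is the middle two steps: one must verify that none of the ingredients borrowed from \cite{kane2010exact} (which had exact counters and a more permissive error model) or from Theorem~\ref{thm:morrismain}/Lemma~\ref{lem:L1bound} secretly relied on full independence, and that the Morris additive error — which is proportional to $\sum_j|S_{i,j}\X_j|=\poly(\log n)\cdot\|\X\|_p$, not to $\|\X\|_p$ — can in fact be driven below the $\eps\|\X\|_p$ scale and absorbed into the log-cosine estimator via the Lipschitz bound on \cite{kane2010exact}'s good event; the space accounting itself is routine.
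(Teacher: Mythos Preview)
Your proposal is correct and follows essentially the same route as the paper: instantiate the Morris-counter protocol of Theorem~\ref{thm:morrismain} on the line graph, replace Indyk's median by the log-cosine estimator of \cite{kane2010exact} so that the limited-independence derandomization applies, maintain $y'=S'\X$ exactly in $O(\log n)$ bits, and propagate the $O(\eps')\|\X\|_p$ Morris error through $\cos$ and $\ln$ via Lipschitz bounds on the good event of \cite{kane2010exact}. Your explicit observation that Lemma~\ref{lem:L1bound} uses only the marginal tails of the $S_{i,j}$ (hence survives under $\Theta(\log(1/\eps)/\log\log(1/\eps))$-wise independence) is a point the paper leaves implicit, but otherwise the arguments coincide.
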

	\begin{proof}
		In \cite{kane2010exact}, they first condition on the fact that $y_{med}'$ is a constant factor approximation of $\|\X\|_p$. Namely, they condition on the event that $|y_{med}' - \|\X\|_p| \leq \|\X\|_p/10$, which occurs with probability $7/8$ for sufficiently large $k' = \Theta(1)$ (this is just the standard $p$-stable median estimator of Indyk \cite{indyk2006stable}). Note that $y'$ can be computed exactly by our algorithm using only $O(\log(n))$ bits of space. Using our protocol from Theorem \ref{thm:morrismain}, applied to the line graph on $m$ updates (corresponding to a streaming algorithm), it follows that we can approximate $y$ by a vector $\tilde{y}\in \R^k$ such that $|\tilde{y} - y|_\infty < \eps' \|\X\|_p$ with probability $99/100$ (here we have boosted the probability by a constant by running $\Theta(1)$ copies in parallel and taking the median for each coordinate), for some sufficiently small $\eps' = \Theta(\eps)$, and such that computing and storing $\tilde{y}$ requires only $O(k( \log \log(n) + \log(1/\eps)))$-bits of space. Since cosine has bounded derivatives everywhere, we have $| \frac{\tilde{y_i}}{y_{med}'} -   \frac{y_i}{y_{med}'}  | \leq \frac{\eps' \|\X\|_p}{y_{med}'} = O(\eps')$, so  $|\cos\left( \frac{\tilde{y_i}}{y_{med}'}\right) -  \cos\left( \frac{y_i}{y_{med}'}\right)| < O(\eps')$ for all $i \in [k]$. This gives 
		$$\left|\frac{1}{k}\sum_{i=1}^k  \cos\left( \frac{\tilde{y}_i}{y_{med}'}\right) - \frac{1}{k}\sum_{i=1}^k  \cos\left( \frac{y_i}{y_{med}'}\right)\right| = O(\eps')$$
		Furthermore, conditioned on the success of the estimator of \cite{kane2010exact} (which occurs with probability $3/4$, and includes the conditioning on the event that $y_{med}' = \Theta(\|\X\|_p)$), we have $|\frac{1}{k}\sum_{i=1}^k  \cos\left( \frac{y_i}{y_{med}'}\right)| = \Theta(1)$ (see Lemma 2.6 in \cite{kane2010exact}), and since $\ln(\cdot)$ has bounded derivatives for values $\Theta(1)$, it follows that if 
			\[\tilde{R} = y_{med}' \cdot \left(	- \ln\left(  \frac{1}{k}\sum_{i=1}^k  \cos\left( \frac{\tilde{y}_i}{y_{med}'}\right)\right)	\right)  \]
		then $|\tilde{R} - R| < O(\eps'y_{med}' ) = O(\eps' \|\X\|_p)$. Taking $k = \Theta(1/\eps^2)$ with a small enough constant, we have $|R - \|\X\|_p| < \eps \|\X\|_p/2$, and setting $\eps' = \Theta(\eps)$ small enough, we obtain  $|\tilde{R} - \|\X\|_p| < \eps \|\X\|_p$ as needed.

	\end{proof}

	\section{Entropy Estimation} \label{sec:entropy}
	In this section, we show how our results imply improved algorithms for entropy estimation in the message-passing model. Here, for a vector $\X \in \R^n$, the Shannon entropy is given by $H = \sum_{i=1}^n \frac{|\X_i|}{\|\X\|_1} \log(  \frac{\||\X\|_1}{|\X_i|} )$. We follow the approach taken by \cite{clifford2013simple, li2011new,harvey2008sketching, harvey2008streamingB} for entropy estimation in data streams, which is to use sketched of independent \textit{maximally-skewed stable random variables}. While we introduced $p$-stable random variables in Definition \ref{def:stable} as the distribution with characteristic function $\ex{e^{itZ}	} = e^{-|t|^p}$, we remark now that the $p$-stable distribution is also parameterized by an additional \textit{skewness} parameter $\beta \in [-1,1]$. Up until this point, we have assumed $\beta = 0$. In this section, however, we will be using maximally skewed, meaning $\beta = -1$, $p=1$-stable random variables. We introduce these now

	\begin{definition}[Stable distribution, general]
		There is a distribution $F(p,\beta,\gamma,\delta)$ called the $p$-stable distribution with \textit{skewness} parameter $\beta \in [-1,1]$, scale $\gamma$, and position $\delta$. The characteristic function of a $Z \sim F(p,\beta,\gamma,\delta)$ variable $Z$ is given by:
		\[  \ex{e^{-i t Z}} = \begin{cases}
		\exp\left(-\gamma^p|t|^p \left[1- i \beta \tan(\frac{\pi p}{2}) \text{sign}(t) \right] + i \delta t	\right) & \text{ if } p \in (0,2] \setminus \{1\} \\ 
		\exp\left(- \gamma |t| \left[1+ i \beta \frac{2}{\pi}\text{sign}(t)\log(|t|) \right] + i \delta t	\right) & \text{ if } p =1 \\ 
		\end{cases}\]
		where $\text{sign}(t) \in \{1,-1\}$ is the sign of a real $t \in \R$. Moreover, if $Z \sim F(p,\beta,\gamma,0)$ for any $\beta \in [-1,1]$ and $0<p<2$, for any $\lambda > 0$ we have
		\[ \bpr{|Z| > C \lambda} \leq (\frac{\gamma}{\lambda})^p		\]
		where $C$ is some universal constant. 	
		We refer the reader to \cite{nolan2009stable} for a further discussion on the parameterization and behavior of $p$-stable distributions with varying rates.

	\end{definition}

	\begin{figure}[h!]
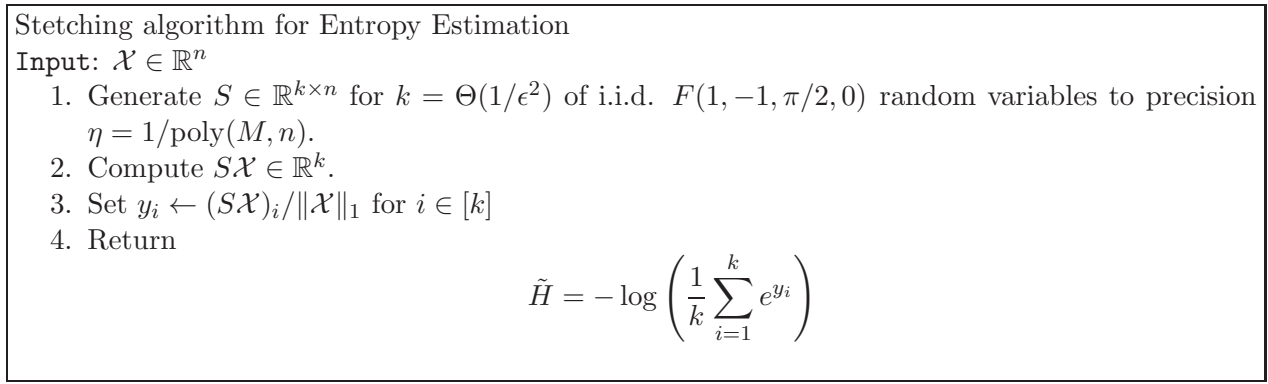

		\fbox{\parbox{\textwidth}{Stetching algorithm for Entropy Estimation\\
				\ttx{Input}: $\X \in \R^n$
				\begin{enumerate}[topsep=0pt,itemsep=-1ex,partopsep=1ex,parsep=1ex] 
					\item Generate $S \in \R^{k \times n}$ for $k = \Theta(1/\eps^2)$ of i.i.d. $F(1,-1,\pi/2,0)$ random variables to precision $\eta = 1/\poly(M,n)$. 
					\item Compute $S\X \in \R^k$. 
					\item Set $y_i \leftarrow (S\X)_i/\|\X\|_1$ for $i \in [k]$
					\item Return \[	\tilde{H} = - \log\left(\frac{1}{k}\sum_{i=1}^k e^{y_i}	\right)	\]
				\end{enumerate}
			
}}\caption{Entropy Estimation algorithm of \cite{clifford2013simple}} \label{fig:entropy}
\end{figure}

The algorithm of \cite{clifford2013simple} is given formally in Figure \ref{fig:entropy}. The guarantee of the algorithm is given in Theorem \ref{thm:cliff}. 
\begin{theorem}[\cite{clifford2013simple}]\label{thm:cliff}
	The above estimate $\tilde{H}$ satisfies $|\tilde{H} - H| < \eps$ with probability at least $9/10$.
\end{theorem}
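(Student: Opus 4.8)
The plan is to analyze the estimator of Figure~\ref{fig:entropy} directly, using the special structure of the maximally-skewed $1$-stable distribution $F(1,-1,\pi/2,0)$. Write $p_i = |\X_i|/\|\X\|_1 \in [0,1]$, so that $\sum_i p_i = 1$ and $\sum_i p_i\log(1/p_i) = H$ (we may assume $\X \neq 0$, since that case is detectable with $O(1)$ space and makes $H$ trivial). Fix a row $S_j$ of $S$, whose entries $Z_1,\dots,Z_n$ are i.i.d.\ $F(1,-1,\pi/2,0)$, and consider $y_j = (S\X)_j/\|\X\|_1 = \sum_i Z_i\,p_i$. Multiplying the characteristic functions of the $Z_i$ and using $p_i \ge 0$, $y_j$ is again a $1$-stable variable; the crucial point, special to stability index $\alpha=1$, is that scaling a skewed $1$-stable by $c>0$ shifts its location by a term proportional to $c\log c$, so the location contributions of the coordinates combine to $\sum_i p_i\log p_i = -H$ while the scale stays $\pi/2$. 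Carrying this computation out carefully (this is where the sign/skewness bookkeeping must be gotten right) shows that $y_j$ is distributed as $-H + W_j$, where $W_j \sim F(1,-1,\pi/2,0)$, and the $W_j$, one per row, are i.i.d.

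Next I would record the exponential-moment facts that make the estimator work. Because the distribution is \emph{maximally} skewed, one of its tails is light enough that the relevant one-sided moment generating function is finite --- for a symmetric, or oppositely-skewed, $1$-stable it would be infinite --- and the scale $\gamma = \pi/2$ is exactly the calibration for which $\mathbb{E}[e^{W_j}] = 1$ and $\mathbb{E}[e^{2W_j}] = O(1)$ (in fact one gets the clean form $\mathbb{E}[e^{sW_j}] = e^{s\log s}$ on the admissible half-line). Hence $\mathbb{E}[e^{y_j}] = e^{-H}\mathbb{E}[e^{W_j}] = e^{-H}$ and $\mathbb{E}[e^{2y_j}] = e^{-2H}\mathbb{E}[e^{2W_j}] = O(e^{-2H})$, so $\mathrm{Var}(e^{y_j}) = O(e^{-2H})$. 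Setting $\bar T = \frac1k\sum_{j=1}^k e^{y_j}$, we get $\mathbb{E}[\bar T] = e^{-H}$ and, by independence of the rows, $\mathrm{Var}(\bar T) = O(e^{-2H}/k)$. With $k = \Theta(1/\eps^2)$ and a large enough constant, Chebyshev's inequality gives $|\bar T - e^{-H}| \le (\eps/4)\,e^{-H}$, i.e.\ $\bar T = (1\pm\eps/4)e^{-H}$, with probability at least $9/10$.

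Finally I would pass from this multiplicative estimate to the additive guarantee: on the good event, $\tilde H = -\log\bar T = -\log\big((1\pm\eps/4)e^{-H}\big) = H - \log(1\pm\eps/4)$, and since $|\log(1\pm x)| \le 2x$ for $x \le \tfrac12$ this gives $|\tilde H - H| \le \eps/2 < \eps$, regardless of how large $H$ is. The only remaining loose end is that $S$ is generated only to precision $\eta = 1/\poly(M,n)$: this perturbs each $y_j$ additively by $1/\poly(n)$, hence $e^{y_j}$ multiplicatively by $1\pm 1/\poly(n)$, which is absorbed into the $\eps/4$ slack. The main obstacle is the content of the first two paragraphs: establishing the exact distributional identity $y_j \sim -H + W_j$ (where the entropy materializes purely because of the $t\log|t|$ term in the $\alpha=1$ characteristic function) and verifying that the one-sided exponential moment is finite and normalized to $1$ under this particular parameterization --- everything afterwards (Chebyshev plus the $\log$-linearization) is routine.
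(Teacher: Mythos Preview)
The paper does not prove Theorem~\ref{thm:cliff}; it is quoted as a black box from \cite{clifford2013simple} and then \emph{used} (together with Lemma~\ref{lem:entropy}) inside the proof of Theorem~\ref{thm:entropymain}. So there is no in-paper proof to compare against.

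That said, your proposal is exactly the argument of the cited source: use $1$-stability together with the $t\log|t|$ term in the $\alpha=1$ characteristic function to identify the law of $y_j$ as a shifted maximally-skewed $1$-stable whose shift is the entropy; exploit the one-sided super-light tail (available only at $|\beta|=1$) to get finite first and second exponential moments; average, apply Chebyshev, and linearize the logarithm. The structure is correct and is the standard route.

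One bookkeeping caution. In the proof of Theorem~\ref{thm:entropymain} the paper records that $y_j \sim F(1,-1,\pi/2,H)$, i.e.\ location parameter $+H$, whereas you write $y_j = -H + W_j$. Whether these agree depends on which of Nolan's parameterizations is in force: for $\alpha=1$ the location parameter in one common convention is \emph{not} a pure translation, and the standard parameterizations differ by a term of the form $\beta\tfrac{2}{\pi}\gamma\log\gamma$. Your internal sanity check --- that $\mathbb{E}[e^{y_j}]=e^{-H}$ is precisely what makes $-\log\bar T \approx H$ --- is the right anchor; when you fill in the step you labeled ``carrying this computation out carefully,'' make sure the sign of the shift and the normalization $\mathbb{E}[e^{W_j}]=1$ genuinely hold under the specific parameterization $F(1,-1,\pi/2,0)$ used here. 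That is the only place an error could hide; everything downstream (Chebyshev, $\log$-linearization, absorbing the $\eta$-precision perturbation) is routine as you say.
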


\begin{lemma}\label{lem:entropy}
	Fix $0 < \eps_0 < \eps$. Let  $S \in \R^{k \times n}$ with $k = \Theta(1/\eps^2)$ be a matrix of i.i.d. $F(1,-1,\pi/2,0)$ random variables to precision $\eta = 1/\poly(M,n)$. Then there is a protocol in the message passing model that outputs $Y \in \R^k$ at a centralized vertex with $\|Y - S\X\|_\infty \leq \eps_0 \|\X\|_1$ with probability $9/10$. The protocol uses a total communication of $O(\frac{m}{\eps^2}(\log\log(n) + \log(1/\eps_0))$-bits, and a max-communication of $O(\frac{1}{\eps^2}(\log\log(n) + \log(1/\eps_0)))$-bits. 
\end{lemma}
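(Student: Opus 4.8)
The plan is to run essentially the same recursive signed-Morris-counter merging protocol as in Figure~\ref{fig:p<1} (the protocol behind Theorem~\ref{thm:morrismain}), but with the sketch matrix $S$ now drawn from the maximally-skewed distribution $F(1,-1,\pi/2,0)$ instead of symmetric $p$-stables, and with the internal rounding precision chosen more aggressively. Concretely: fix a center $\mathcal{C}$ of $G$ and a shortest-path spanning tree $T$ rooted at $\mathcal{C}$ of depth at most $d$; using shared randomness the players generate $S$ to precision $\eta = 1/\poly(n,m,M)$; for each row $i \in [k]$, $k = \Theta(\eps^{-2})$, player $j$ computes the integer $\eta^{-1}\langle S_i, X_j\rangle$, feeds it into a fresh signed Morris counter with parameters $(\eps',\delta')$, merges in the signed Morris counters received from its children in $T$ via Lemma~\ref{lem:Merge} (applied separately to the positive and negative parts, as in Corollary~\ref{cor:Morriswithdeletions}), and forwards the result to its parent. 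At the end $\mathcal{C}$ holds, for each $i$, a signed Morris counter whose estimate $C_i$ satisfies, by Corollary~\ref{cor:Morriswithdeletions},
\[ |\eta C_i - \langle S_i, \X\rangle| \le \eps' \sum_j |\langle S_i, X_j\rangle| \]
with probability $1-\delta'$; the protocol outputs $Y_i = \eta C_i$ (no median / log-cosine step is applied here).

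The key quantitative step is to bound $\sum_j |\langle S_i, X_j\rangle|$. Since the $X_j$ are entrywise non-negative with $\sum_j X_j = \X$, we have $\sum_j |\langle S_i, X_j\rangle| \le \sum_j \sum_\ell |S_{i,\ell}| (X_j)_\ell = \sum_\ell |S_{i,\ell}| \X_\ell$. The entries $S_{i,\ell}$ are i.i.d.\ from $F(1,-1,\pi/2,0)$, which satisfies $\pr{|S_{i,\ell}| > C\lambda} \le O(1/\lambda)$, so they meet the hypothesis of Lemma~\ref{lem:L1bound} with $p=1$. Because $p=1$ is not bounded away from $1$, that lemma yields $\sum_\ell |S_{i,\ell}|\X_\ell \le C\lambda \log(n\lambda)\|\X\|_1$ with probability $1-1/\lambda$ (the $\eta M$ discretization error of generating $S$ to precision $\eta$ is absorbed into the constant using $\|\X\|_1 \ge 1$, with the case $\X = 0$ detected using $O(1)$ bits per player). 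Taking $\lambda = \Theta(k)$ and $\delta' = \Theta(\eps^2) = \Theta(1/k)$, a union bound over the $k$ rows makes all these tail events together with the $k$ Morris-counter guarantees hold simultaneously with probability at least $9/10$. Choosing $\eps' = \Theta\!\big(\eps_0\eps^2/\log(n/\eps)\big)$ with a small enough constant then gives $\eps'\sum_j |\langle S_i, X_j\rangle| \le \eps_0\|\X\|_1$ for every $i$, i.e.\ $\|Y - S\X\|_\infty \le \eps_0\|\X\|_1$.

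For the communication bound, note that since $\eps_0 < \eps$ we have $\log(1/\eps') = O(\log(1/\eps_0) + \log(1/\eps) + \log\log n) = O(\log(1/\eps_0) + \log\log n)$, and at most $\poly(n,m,M)$ updates are ever made to any counter; hence by Corollary~\ref{cor:Morriswithdeletions}, conditioned on the good event, each signed Morris counter occupies $O(\log\log n + \log(1/\eps_0))$ bits, and we may abort whenever a message exceeds this length, which only happens off the good event. Each player communicates exactly once in this one-shot protocol and sends one such counter per row, i.e.\ $\Theta(\eps^{-2})$ counters across its edge, giving max-communication $O(\eps^{-2}(\log\log n + \log(1/\eps_0)))$ and total communication $O(m\eps^{-2}(\log\log n + \log(1/\eps_0)))$, in $d$ rounds.

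The analysis is a direct adaptation of the proof of Theorem~\ref{thm:morrismain}, so the only genuine subtlety is the extra $\log(n\lambda)$ factor that Lemma~\ref{lem:L1bound} incurs at the boundary case $p=1$: this is what forces $\eps'$ to be polynomially small in $1/\log n$. Since only $\log(1/\eps')$ enters the counter size, this merely contributes the $\log\log n$ term and leaves the stated bounds unchanged. The remaining bookkeeping --- discretization of $S$, the $\X=0$ edge case, and propagating the failure probabilities through the $k$ parallel repetitions --- is routine.
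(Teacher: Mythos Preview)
Your proposal is correct and follows essentially the same approach as the paper: the paper's own proof simply observes that the only property of the sketch distribution used in the proof of Theorem~\ref{thm:morrismain} is the power-law tail bound needed for Lemma~\ref{lem:L1bound}, notes that $F(1,-1,\pi/2,0)$ satisfies this tail bound, and concludes. You have spelled out the same argument in detail, including the correct handling of the $p=1$ boundary case of Lemma~\ref{lem:L1bound} (the extra $\log(n\lambda)$ factor) and the resulting choice of $\eps'$, which the paper leaves implicit.
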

\begin{proof}
	Note that in the proof of Theorem \ref{thm:morrismain}, the only property of the distribution of $S$ that was needed to obtain a $\eps_0 \|\X\|_1$ additive error estimate of $\langle S_i, \X\rangle = (S\X)_i$ for every  $i \in [k]$ was that Lemma \ref{lem:L1bound} hold for this distribution. Namely, we need the property that if $Z\sim F(1,-1,\pi/2,0)$, then there is a constant $c$ such that for every $\lambda \geq 1$ we have $\pr{|Z| \geq c \lambda} < \lambda^{-1}$. This tail bound is indeed true for $1$-stable distributions of any skewness (see, e.g. \cite{nolan2009stable}). Thus the conditions of Lemma \ref{lem:L1bound} hold for $F(1,-1,\pi/2,0)$, and the result follows from the proof of Theorem \ref{thm:morrismain}.
\end{proof}

\begin{theorem}\label{thm:entropymain}
There is a multi-party communication protocol in the message passing model that outputs a $\eps$-additive error of the Shannon entropy $H$. The protocol uses a max-communication of $O(\frac{1}{\eps^2}(\log\log(n) + \log(1/\eps))$-bits. 
\end{theorem}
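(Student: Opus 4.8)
The plan is to run the sketch-recovery protocol of Lemma~\ref{lem:entropy} together with an approximate-counting protocol for the normalization $\|\X\|_1$, and then show that feeding the resulting approximate sketch into the log-exp estimator of Figure~\ref{fig:entropy} (Theorem~\ref{thm:cliff}) perturbs its output by only $O(\eps)$. Concretely, fix $\eps_0 = \Theta(\eps/\log(nM))$ (any sufficiently small fixed polynomial of $\eps/\log(nM)$ will do). The players jointly sample $S \in \R^{k\times n}$ with $k = \Theta(1/\eps^2)$ rows of i.i.d.\ $F(1,-1,\pi/2,0)$ entries to precision $1/\poly(M,n)$, run the protocol of Lemma~\ref{lem:entropy} with parameter $\eps_0$ to deliver $Y$ with $\|Y - S\X\|_\infty \le \eps_0\|\X\|_1$ to the center $\mathcal C$, and, piggybacked on the same spanning tree (valid since $\X \ge 0$, so $\|\X\|_1 = \sum_i \|X_i\|_1$), run the Morris-counter protocol of Corollary~\ref{cor:l1} with precision $\eps_0$ to deliver a scalar $\widetilde L = (1\pm\eps_0)\|\X\|_1$. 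Then $\mathcal C$ sets $\hat y_i = Y_i/\widetilde L$ and outputs $\tilde H' = -\log\!\big(\frac1k\sum_{i=1}^k e^{\hat y_i}\big)$.

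For correctness I would condition on a constant number of events, each of probability $\ge 1 - 1/30$ (so all hold simultaneously with constant probability, amplified in the usual way by a median of independent repetitions): (i) the estimator of Theorem~\ref{thm:cliff} succeeds, i.e.\ $|\tilde H - H| \le \eps$ for $\tilde H = -\log(\frac1k\sum e^{y_i})$ with $y_i = (S\X)_i/\|\X\|_1$, and moreover $\frac1k\sum e^{y_i} = \Theta(e^{-H}) \ge 1/\poly(n)$ (both part of the analysis of \cite{clifford2013simple}); (ii) $\|Y - S\X\|_\infty \le \eps_0\|\X\|_1$; (iii) $\widetilde L = (1\pm\eps_0)\|\X\|_1$; and (iv) $\max_i |y_i| = O(\log(nM) + \poly(1/\eps))$, where the positive side uses the very light right tail of maximally left-skewed $1$-stables, the negative side uses truncating $S$ at magnitude $\poly(n)$ together with a union bound over the $k$ rows via their $\lambda^{-1}$ left tail, and $H \le \log(nM)$. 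On these events, writing $Y_i = (S\X)_i + \theta_i$ with $|\theta_i| \le \eps_0\|\X\|_1$ and $\widetilde L = (1+\mu)\|\X\|_1$ with $|\mu| \le \eps_0$, one gets $\hat y_i = y_i + O(\eps_0(|y_i|+1))$, hence $|\hat y_i - y_i| = O(\eps_0\max_i|y_i|) = O(\eps)$ for every $i$ by the choice of $\eps_0$. Therefore $e^{\hat y_i} = (1\pm O(\eps))e^{y_i}$ coordinate-wise, so $\frac1k\sum e^{\hat y_i} = (1\pm O(\eps))\frac1k\sum e^{y_i}$, and since $\log$ is $O(1)$-Lipschitz on an interval of the form $[(1-c)a,(1+c)a]$ this gives $|\tilde H' - \tilde H| = O(\eps)$; combined with (i) we obtain $|\tilde H' - H| = O(\eps)$, which is the claimed $\eps$-additive guarantee after rescaling $\eps$ by a constant.

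For the communication bound, Lemma~\ref{lem:entropy} run with parameter $\eps_0 = \Theta(\eps/\log(nM))$ uses max-communication $O(\frac1{\eps^2}(\log\log n + \log(1/\eps_0)))$, and $\log(1/\eps_0) = \log(1/\eps) + O(\log\log(nM)) = O(\log\log n + \log(1/\eps))$ under the standing assumption $M = \poly(n)$, so this is $O(\frac1{\eps^2}(\log\log n + \log(1/\eps)))$; the Morris-counter protocol of Corollary~\ref{cor:l1} contributes only $O(\log\log n + \log(1/\eps))$ across any edge, which is absorbed. The main obstacle is exactly the error-propagation step: unlike a linear functional of $\X$, the estimator $-\log(\frac1k\sum e^{y_i})$ exponentiates each coordinate, so an additive error $\eta$ in $y_i$ becomes a multiplicative $e^{\pm\eta}$ in the corresponding term, which is harmless only when $\eta = O(\eps)$, forcing the recovery precision to be $\Theta(\eps/\max_i|y_i|)$-small rather than $\eps$-small. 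Making this quantitative needs a high-probability bound on $\max_i|y_i|$, which relies on the specific two-sided tail behaviour of maximally-skewed $1$-stable variables (super-light on one side, $\lambda^{-1}$ on the other) rather than a generic tail bound; fortunately this only shrinks $\eps_0$ to a fixed polynomial of $\eps/\log n$, costing an extra $\log\log n$ per message that is already within the target bound. Everything else --- invoking Lemma~\ref{lem:entropy}, merging the Morris counters, and the Lipschitz estimate for $\log$ near a point bounded away from $0$ --- is routine.
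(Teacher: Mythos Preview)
Your proposal is correct and follows essentially the same route as the paper: recover $S\X$ via Lemma~\ref{lem:entropy}, recover $\|\X\|_1$ via Corollary~\ref{cor:l1}, bound $\max_i|y_i|$ using the tails of skewed $1$-stables, and then push an $O(\eps)$ additive perturbation of each $y_i$ through the log-exp estimator of Theorem~\ref{thm:cliff}.

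One small caveat: your headline choice $\eps_0=\Theta(\eps/\log(nM))$ is not quite small enough. The union bound over $k=\Theta(1/\eps^2)$ rows on the heavy (left) tail only gives $\max_i|y_i|=O(\log n + 1/\eps^2)$, so $\eps_0\cdot\max_i|y_i|=O(\eps)$ forces $\eps_0\le c\,\eps^3/\log(nM)$; taking the cube of $\eps/\log(nM)$, as your parenthetical already allows, fixes this and leaves the communication bound unchanged since $\log(1/\eps_0)=O(\log(1/\eps)+\log\log n)$. The paper's proof differs only cosmetically: it takes $\eps_0=O(\eps^6)$ (a pure polynomial in $\eps$) and inserts an anti-concentration step to first convert the additive error $\eps_0\|\X\|_1$ on $(S\X)_i$ into a multiplicative $O(\eps^3)$ error, before invoking the same upper bound $|y_i|=O(1/\eps^2)$; your direct additive bookkeeping is arguably cleaner and, by carrying the $H\le\log n$ contribution explicitly, slightly more careful.
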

\begin{proof}
	By Lemma \ref{lem:entropy}, the central vertex $\mathcal{C}$ (as in Section \ref{sec:randomround}) can obtain a vector $Y \in \R^k$ with $\|Y - S\X\|_\infty \leq \eps_0 \|\X\|_1$ with probability $9/10$. By Corollary \ref{cor:l1}, there is also a multi-party protocol which gives an estimate $R \in \R$ such that $R = (1\pm \eps_0)\|\X\|_1$ with probability $9/10$, where each player communicates once, sending at most $O(\log \log(n) + \log(1/\eps)))$ bits, and the estimate is held at $\mathcal{C}$ at the end. Now note that each $(S\X)_i/\|\X\|_1$ is distributed as $F(1,-1,\pi /2, H)$, where $H$ is the Shannon entropy of $\X$ (\cite{clifford2013simple} Lemma 2.2 or \cite{nolan2009stable} Proposition 1.17). By anti-concentration of all stable random variables, we have that if $Z \sim F(p,\beta,\gamma,\delta)$ then $\pr{|Z| < \eps_0 \gamma} < C \eps_0$ for some constant $C$. Thus $\pr{|(S\X)_i| < \eps_0  \|\X\|_1} \leq C \eps_0$ for some constant $C$, so we can condition on the event $\mathcal{E}$ that  $|(S\X)_i| > \eps_0/\eps^3  \|\X\|_1$ for all $i \in [k]$, which occurs with probability at least $99/100$ after a union bound and setting $\eps_0 = O( \eps^6)$ sufficiently small. Given this, we have $Y_i = (1 \pm \eps^3)(S\X)_i$ for all $i \in [k]$. Setting $\overline{Y_i} = Y_i / R$ for $i \in [k]$, we obtain $\overline{Y_i} = (1 \pm O(\eps^3))y_i$, where $y_i = (S\X)_i/\|\X\|_1$ as in Figure \ref{fig:entropy}. Moreover, we can condition on the event that $|(S\X)_i| < C'/(\eps^2)\|\X\|_1$ for all $i \in [k]$, which occurs with probability at least $99/100$ after a union bound and applying the tail bounds for $1$-stable random variables. Given this, we have $|y_i| =O( 1/\eps^2)$ for all $i \in [k]$, so $e^{\overline{Y}_i} = e^{y_i \pm O(\eps^3) y_i} = e^{y_i \pm O(\eps)} = (1\pm O(\eps))e^{y_i}$.
	Thus 
\begin{equation}
\begin{split}
 - \log\left(\frac{1}{k}\sum_{i=1}^k e^{\overline{Y_i}}	\right)	&  = - \log\left((1\pm O(\eps))\frac{1}{k}\sum_{i=1}^k e^{y_i}	\right) \\ 
   &= - \log\left(\frac{1}{k}\sum_{i=1}^k e^{y_i}	\right) + O(\eps)\\
    &= H + O(\eps)\\
\end{split}
\end{equation}
Where in the first equality, we use the fact that each summand $\sum_{i=1}^k e^{\overline{Y_i}}$ is non-negative, so $\sum_{i=1}^k ( 1 \pm O(\eps))e^{y_i} =  ( 1 \pm O(\eps)) \sum_{i=1}^k e^{y_i}$, and the last inequality follows from applying Theorem \ref{thm:cliff}. The central vertex $\mathcal{C}$ can then output $- \log\left(\frac{1}{k}\sum_{i=1}^k e^{\overline{Y_i}}	\right)$ as the estimate, which completes the proof.
\end{proof}

Since our protocol does not depend on the topology of $G$, a direct corollary is that we obtain a $\tilde{O}(\eps^{-2})$-bits of space \textit{streaming} algorithm for entropy estimation in the random oracle model. Recall that the random oracle model allows the streaming algorithm query access to an arbitrarily long tape of random bits. This fact is used to store the random sketching matrix $S$. 

\begin{theorem}\label{thm:entropystream}
There is a streaming algorithm for $\eps$-additive approximation of the empirical Shannon entropy of an insertion only stream in the random oracle model, which succeeds with probability $3/4$. The space required by the algorithm is $O(\frac{1}{\eps^2}(\log\log(n) + \log(1/\eps))$ bits.
\end{theorem}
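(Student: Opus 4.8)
The plan is to observe that the insertion-only streaming setting is exactly the line-graph special case of the message-passing model, and that the protocol underlying Theorem~\ref{thm:entropymain} is \emph{topology independent} --- its max-communication bound carries no $\log d$ factor. Instantiating that protocol on the line graph on $m$ vertices (one player per stream update, holding a scaled basis vector) therefore yields a one-pass streaming algorithm whose space equals the max-communication, provided we can supply the random sketching matrix $S$ for free, which is precisely what the random-oracle model grants. So the streaming algorithm is, literally, Figure~\ref{fig:entropy}'s estimator computed via the Morris-counter merging machinery of Section~\ref{sec:pleq1}.

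Concretely, I would have the algorithm read from the random tape a matrix $S \in \R^{k\times n}$ of i.i.d.\ $F(1,-1,\pi/2,0)$ entries, $k=\Theta(1/\eps^2)$, generated to precision $\eta=1/\poly(n,M)$ as in Lemma~\ref{lem:entropy}. It maintains $k$ signed Morris counters $C_1,\dots,C_k$ together with one further Morris counter $B$ for $\|\X\|_1$. On an update $(i_t,\Delta_t)$, for each $i\in[k]$ it feeds the integer $\eta^{-1}S_{i,i_t}\Delta_t$ into $C_i$ --- a batch increment realized by the merge operation of Lemma~\ref{lem:Merge}, equivalently by $|\Delta_t|$ unit updates routed to the insertion or deletion side of the signed counter according to sign --- and feeds $\Delta_t$ into $B$. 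Since merging preserves the Morris distribution exactly, at the end $C_i$ is distributed as a signed Morris counter run on $\eta^{-1}\langle S_i,\X\rangle$ and $B$ as a Morris counter on $\|\X\|_1$. The algorithm then sets $Y_i=\eta\,\widetilde{C_i}$, $R=\widetilde{B}$, $\overline{Y_i}=Y_i/R$, and outputs $\tilde H = -\log\big(\frac1k\sum_{i=1}^k e^{\overline{Y_i}}\big)$.

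Correctness is then inherited essentially verbatim from Lemma~\ref{lem:entropy} and the proof of Theorem~\ref{thm:entropymain}: those arguments use only (i) that Lemma~\ref{lem:L1bound} applies to $F(1,-1,\pi/2,0)$ by the $1$-stable tail bound, giving $\|Y-S\X\|_\infty\le\eps_0\|\X\|_1$ with good probability; (ii) Corollary~\ref{cor:l1}, so $R=(1\pm\eps_0)\|\X\|_1$; and (iii) anti-concentration and tail bounds for $1$-stables together with Theorem~\ref{thm:cliff}, which convert an $\eps_0$-accurate estimate of $S\X/\|\X\|_1$ into an $O(\eps)$-additive estimate of $H$ after taking $\eps_0=\Theta(\eps^6)$. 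For space, condition on the $O(1)$-probability good events; then every Morris counter is a $(1\pm\eps_0)$-relative approximation of a $\poly(n)$-sized quantity, so by Corollary~\ref{cor:Morriswithdeletions} each of the $k+1=O(1/\eps^2)$ counters uses $O(\log\log n+\log 1/\eps)$ bits, and the algorithm may simply abort once any counter exceeds this length since that happens only off the good event. The total is $O(\frac1{\eps^2}(\log\log n+\log 1/\eps))$ bits as claimed, and a constant number of parallel repetitions boosts the success probability to $3/4$.

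The main obstacle --- and the reason this theorem is confined to the random-oracle model, unlike Theorem~\ref{thm:streamingMain} for $F_p$ --- is derandomization: the estimator of \cite{clifford2013simple} is analyzed with \emph{fully independent} maximally-skewed $1$-stable variables, and no bounded-independence analysis (analogous to \cite{kane2010exact} for the log-cosine estimator) is available for the log-sum-exp entropy estimator, so storing $S$ honestly would cost $\Theta(n/\eps^2)$ bits. Hence the randomness for $S$ must be charged to the oracle. A secondary, purely bookkeeping point is ensuring a large single update $\Delta_t$ is absorbed into each Morris counter in one step without pushing its value outside the $\poly(n)$ regime governed by Corollary~\ref{cor:Morriswithdeletions}.
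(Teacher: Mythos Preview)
Your proposal is correct and takes essentially the same approach as the paper: the paper treats Theorem~\ref{thm:entropystream} as a direct corollary of Theorem~\ref{thm:entropymain}, noting only that the protocol is topology-independent and that the random oracle supplies the sketching matrix~$S$. Your write-up fleshes out precisely these points (line-graph specialization, Morris-counter maintenance via Lemma~\ref{lem:Merge}, correctness via Lemma~\ref{lem:entropy} and the proof of Theorem~\ref{thm:entropymain}, and the derandomization obstacle confining the result to the random-oracle model), so it is in fact more detailed than what the paper itself provides.
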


	\section{Approximate Matrix Product in the Message Passing Model}\label{sec:amp}
		In this section, we consider the approximate regression problem in the message passing model over a topology $G= (V,E)$. Here, instead of vector valued inputs, each player is given as input two integral matrices $X_i \in \{0,1,2,\dots,M\}^{n \times t_1}$, $Y_i \in \{0,1,2,\dots,M\}^{n \times t_2}$. It is generally assumed that $n>> t_1,t_2$, so the matrices $X_i,Y_i$ are rectangular.	
	Let $\X = \sum_{i=1}^m  X_i$ and $\Y = \sum_i Y_i$. The goal of the players is to approximate the matrix product $ \X^T \Y \in \R^{t_1 \times t_2}$. Specifically, at the end of the protocol one player must output a matrix $R \in \R^{t_1 \times t_2}$ such that $\|R - \X^T \Y\|_F \leq \eps \|\X\|_F \|\Y\|_F $, where for a matrix $A$, $\|A\|_F = (\sum_{i,j} A_{i,j}^2)^{1/2}$ is the Frobenius norm of $A$. 
	
	
	We now describe a classic sketching algorithm which can be used to solve the approximate regression problem. The algorithm is relatively straightforward: it picks a random matrix $S \in \R^{k \times n}$. For instance, $S$ can be a matrix of i.i.d. Gaussian variables with variance $1/k$, or even the count-sketch matrix from Section \ref{sec:HH}. It then computes $S\X$ and $S\Y$, and outputs $(S\X)^T S\Y$. In this work, we will use a Gaussian sketch $S$. The following fact about such sketches will demonstrate correctness.

\begin{lemma}[\cite{kane2014sparser}]
		Fix matrices $\X \in \R^{n \times t_1}, \Y \in \R^{n \times t_2}$ and $0 < \eps_0$. Let $S \in \R^{k \times n}$ be a matrix of i.i.d. Gaussian random variables with variance $1/k$, for $k = \Theta(1/(\delta\eps_0^2))$. Then we have
		\[	\bpr{\|\X^T S^T S \Y - \X^T \Y \|_F \leq \eps_0 \|\X\|_F \|\Y\|_F } \geq1- \delta	\]
		Moreover, with the same probability we have $\|S\X\|_F = (1 \pm \eps_0) \|\X\|_F$ and $\|S\Y\|_F = (1 \pm \eps_0) \|\Y\|_F$
\end{lemma}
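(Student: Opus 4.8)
The plan is the textbook second-moment argument for approximate matrix product, specialized to a Gaussian sketch. Write $A_i$ for the $i$-th column of $\X$ and $B_j$ for the $j$-th column of $\Y$, so that the $(i,j)$ entry of $\X^T S^T S \Y$ is $\langle S A_i, S B_j\rangle = \sum_{\ell=1}^k (s_\ell\cdot A_i)(s_\ell\cdot B_j)$, where $s_\ell\in\R^n$ is the $\ell$-th row of $S$ and has i.i.d. $N(0,1/k)$ coordinates. First I would check unbiasedness: since $\ex{s_{\ell a}s_{\ell b}}$ equals $1/k$ when $a=b$ and $0$ otherwise, $\ex{(s_\ell\cdot A_i)(s_\ell\cdot B_j)} = \langle A_i,B_j\rangle/k$, hence $\ex{\langle SA_i,SB_j\rangle} = \langle A_i,B_j\rangle$, so the error matrix $\X^T S^T S\Y - \X^T\Y$ is entrywise mean zero.

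Next I would bound the variance of each entry. The $k$ row-contributions are independent, so $\var{\langle SA_i,SB_j\rangle} = k\,\var{(s_1\cdot A_i)(s_1\cdot B_j)}$; the pair $(s_1\cdot A_i, s_1\cdot B_j)$ is jointly Gaussian, and Isserlis/Wick gives $\ex{(s_1\cdot A_i)^2(s_1\cdot B_j)^2} = \ex{(s_1\cdot A_i)^2}\ex{(s_1\cdot B_j)^2} + 2\ex{(s_1\cdot A_i)(s_1\cdot B_j)}^2$, so after subtracting the square of the mean and using Cauchy--Schwarz, $\var{(s_1\cdot A_i)(s_1\cdot B_j)} = (\|A_i\|_2^2\|B_j\|_2^2 + \langle A_i,B_j\rangle^2)/k^2 \le 2\|A_i\|_2^2\|B_j\|_2^2/k^2$, whence $\var{\langle SA_i,SB_j\rangle} \le 2\|A_i\|_2^2\|B_j\|_2^2/k$. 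Summing over $(i,j)$ and using mean-zeroness,
\[ \ex{\|\X^T S^T S\Y - \X^T\Y\|_F^2} = \sum_{i,j}\var{\langle SA_i,SB_j\rangle} \le \frac{2}{k}\Big(\sum_i\|A_i\|_2^2\Big)\Big(\sum_j\|B_j\|_2^2\Big) = \frac{2}{k}\|\X\|_F^2\|\Y\|_F^2. \]
Markov's inequality then gives $\bpr{\|\X^T S^T S\Y - \X^T\Y\|_F^2 \ge (2/(k\delta))\|\X\|_F^2\|\Y\|_F^2} \le \delta$, and taking $k = \Theta(1/(\delta\eps_0^2))$ makes the threshold $\eps_0^2\|\X\|_F^2\|\Y\|_F^2$.

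For the ``moreover'' statements I would run the analogous second-moment computation directly on $\|S\X\|_F^2 = \sum_{\ell=1}^k s_\ell^T\X\X^T s_\ell$: the summands are i.i.d.\ with mean $\|\X\|_F^2/k$, and the standard quadratic-form identity $\var{g^T M g} = 2\,\mathrm{tr}((M\Sigma)^2)$ for $g\sim N(0,\Sigma)$ with $\Sigma = I/k$, $M = \X\X^T$ gives variance $2\,\mathrm{tr}((\X\X^T)^2)/k^2 = 2\|\X^T\X\|_F^2/k^2 \le 2\|\X\|_F^4/k^2$, hence $\var{\|S\X\|_F^2} \le 2\|\X\|_F^4/k$; Chebyshev with $k = \Theta(1/(\delta\eps_0^2))$ yields $\|S\X\|_F^2 = (1\pm\eps_0)\|\X\|_F^2$, so $\|S\X\|_F = (1\pm\eps_0)\|\X\|_F$ after adjusting constants, and symmetrically for $\Y$; a union bound over the three events (rescaling $\delta$ by a constant, absorbed into the $\Theta$) completes the proof. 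The only mildly delicate point is pinning down the fourth-moment constants for correlated Gaussians; the rest is a routine mean/variance computation plus Markov and Chebyshev, with no heavy-tail level-set analysis as in Lemma~\ref{Lemma:lpVariance} needed here since $p=2$ means all moments are finite.
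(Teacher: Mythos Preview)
Your argument is correct. The paper's own proof takes a slightly different, more modular route: it verifies that a dense Gaussian sketch with $k=\Theta(1/(\delta\eps_0^2))$ rows satisfies the $(\eps_0,\delta,2)$-JL moment property (i.e., for any unit vector $x$, $\ex{(\|Sx\|_2^2-1)^2}=O(1/k)$, which follows since $k\|Sx\|_2^2$ is a sum of $k$ i.i.d.\ $\chi^2_1$ variables), and then invokes Theorem~21 of \cite{kane2014sparser} as a black box to convert the JL moment property into the approximate matrix product guarantee. Your proof instead unpacks that black box in the Gaussian case: you compute the entrywise variance of $\X^TS^TS\Y$ directly via Isserlis's theorem, sum over entries, and apply Markov. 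Both are second-moment arguments and the underlying calculation is essentially the same; the paper's version is shorter because it outsources the matrix-product step to a citation, while yours is self-contained and makes the constants explicit. Your treatment of the ``moreover'' claims via the quadratic-form variance identity is also a direct instantiation of the same JL moment bound the paper establishes, applied columnwise.
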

\begin{proof}
We first claim that dense Gaussian matrices $S \in \R^{k \times n}$ satisfy the $(\eps,\delta,2)$-JL moment property (Definition 20 \cite{kane2014sparser}) with $k = \Theta(1/(\delta \eps^2))$. To see this, note that $\|Sx\|_2^2 = \sum_{i=1}^k g_i^2 \|x\|_2^2/k$ by $2$-stability of Gaussians, where $g_i$'s are i.i.d. Gaussian. Thus $k\|Sx\|_2^2$ is the sum of squares of $k$ i.i.d. normal Gaussians, and therefore has variance $3k$ (since Gaussians have $4$-th moment equal to $3$). It follows that the variance of $\|Sx\|_2^2$, which is precisely $\ex{(\|Sx\|_2^2 - 1)^2}$, is $\Theta(1/k)$. Setting $k=1/(\delta \eps^2)$, it follows that $S$ has the $(\eps,\delta,2)$-JL moment property.  So by Theorem 21 of \cite{kane2014sparser}, we obtain the approximate matrix product result. 
\end{proof}
	
	Recall now by Lemma \ref{lemma:roundingmain}, with probability $1-\exp(-1/\delta)$ for a fixed $i,j$, taken over the randomness used to draw a row $S_i$ of $S$, we have that the central vertex $\mathcal{C}$ can recover a value $r_{\mathcal{C}}^{i,j}$ such that $\ex{r_{\mathcal{C}}^{i,j}} = (S\X)_{i,j}$ and $\var{ r_{\mathcal{C}}^{i,j}} \leq (\eps/\delta)^2 \|\X_{*,j} \|_2$, where $\X_{*,j}$ is the $j$-th \textit{column} of $\X$. Setting $\delta = \Omega(\log(n))$, we can union bound over this variance holding for all $i,j$. Recall the $\delta$ here just goes into the $\log(1/\delta)$ communication bound of the algorithm. Altogether, the central vertex obtains a random matrix $R^\X \in \R^{k \times t_1}$ such that $\ex{ R^\X} = (S \X)$ and $\ex{\|R^\X - S\X\|_F^2} \leq k (\eps/\delta)^2\sum_{j=1}^{t_1} \|\X_{*,j} \|_2$. Setting $\eps = \poly(1/k) = \poly(1/\eps_0)$ small enough, we obtain $\ex{\|R^\X - S\X\|_F^2}  \leq (\eps_0/\delta)^2 \|\X\|_F$. Similarly, we can obtain a $R^\Y$ at the central vertex $\mathcal{C}$, such that  $\ex{\|R^\Y - S\Y\|_F^2}  \leq (\eps_0/\delta)^2 \|\Y\|_F$. Let $\Delta^\X = R^\X - S\X$ and $\Delta^\Y = R^\Y - S\Y$. By Chebyshev's inequality, we have $\|\Delta^\X\|_F \leq \eps_0^2 \|\X\|_F$ and $\|\Delta^\Y\|_F \leq \eps_0^2 \|\Y\|_F$ with probability $1-\delta$, so
	
	\begin{equation}
	\begin{split}
	\| (R^\X)^T R^\Y - \X^T \Y\|_F &= \| \X^T S^T S \Y  + (\Delta^\X)^TS\Y  + \X^T S^T \Delta^\Y- \X^T \Y\|_F \\
&\leq \|\X^T S^T S \Y - \X^T \Y \|_F +\|\X^T S^T \Delta^\Y\|_F +\|(\Delta^\X)^TS\Y \|_F  \\	
&\leq\eps_0 \|\X\|_F\|\Y\|_F  +\|\X^T S^T\|_F \|\Delta^\Y\|_F +\|(\Delta^\X)^T\|_F\|S\Y \|_F  \\	
&\leq\eps_0 \|\X\|_F\|\Y\|_F + 2\eps_0^2 \|\Y\|_F \|\X\|_F   \\
& \leq O(\eps_0) \|\X\|_F\|\Y\|_F
	\end{split}
	\end{equation}
	where we applied Cauchy-Schwartz in the second inequality, giving the desired result. 	
	 Taken together, this gives the following theorem.

	\begin{theorem}\label{thm:approxmatrix}
		Given inputs $\X = \sum_{i=1}^m X_i, \Y = \sum_{i=1}^m Y_i$ as described above, there is a protocol which outputs, at the central vertex $\mathcal{C}$, a matrix $R \in \R^{t_1 \times t_2}$ such that with probability $3/4$ we have
		\[ 	\|R - \X^T \Y\|_F \leq \eps \|\X\|_F \|\Y\|_F 	\]
		The max communication required by the protocol is $O\left(\eps^{-2}(t_1 + t_2)( \log \log n + \log 1/\eps + \log d	)\right)$, where $d$ is the diameter of the communication topology $G$.
	\end{theorem}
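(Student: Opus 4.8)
The plan is to reduce the problem to approximately computing the sketches $S\X$ and $S\Y$ at the center $\mathcal{C}$, and to do this with the recursive randomized rounding protocol of Section~\ref{sec:randomround}. Fix a Gaussian sketching matrix $S \in \R^{k \times n}$ with i.i.d.\ entries of variance $1/k$ and $k = \Theta(1/(\delta\eps_0^2))$, for a parameter $\eps_0 = \Theta(\eps)$ and a constant $\delta$ to be fixed; by the JL-moment / approximate matrix product lemma from \cite{kane2014sparser} quoted above, with probability $1-\delta$ we simultaneously have $\|\X^T S^T S\Y - \X^T\Y\|_F \le \eps_0\|\X\|_F\|\Y\|_F$, $\|S\X\|_F = (1\pm\eps_0)\|\X\|_F$, and $\|S\Y\|_F = (1\pm\eps_0)\|\Y\|_F$. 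It therefore suffices for $\mathcal{C}$ to recover matrices $R^\X \in \R^{k\times t_1}$ and $R^\Y \in \R^{k\times t_2}$ that are close to $S\X$ and $S\Y$ in Frobenius norm, and to output $R = (R^\X)^T R^\Y$.

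To obtain $R^\X$, observe that $(S\X)_{i,j} = \langle S_i, \X_{*,j}\rangle$, where $\X_{*,j}$ is the $j$-th column of $\X$, a non-negative vector equal to the sum over players of their $j$-th input columns. For each $i \in [k]$ and $j \in [t_1]$ I would run the protocol of Figure~\ref{alg:rounding} with $Z = S_i$ on the inputs $\{(X_\ell)_{*,j}\}_\ell$, using rounding precision $\eps'$ and failure parameter $\delta'$. By the Gaussian high-probability case of Lemma~\ref{lemma:roundingmain}, the estimate $r_{\mathcal{C}}^{i,j}$ satisfies $\ex{r_{\mathcal{C}}^{i,j}} = (S\X)_{i,j}$ and, with probability $1-e^{-1/\delta'}$ over the draw of $S_i$, $\var{r_{\mathcal{C}}^{i,j}} \le (\eps'/\delta')^2\|\X_{*,j}\|_2^2$. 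Choosing $1/\delta' = \Theta(\log(n(t_1+t_2)k))$ lets me union bound this variance bound over all $k(t_1+t_2)$ coordinates of both sketches, with $\log(1/\delta')$ contributing only an additive $O(\log\log n)$ to message lengths. Summing the per-coordinate variances gives $\ex{\|R^\X - S\X\|_F^2} \le k(\eps'/\delta')^2\sum_{j}\|\X_{*,j}\|_2^2 = k(\eps'/\delta')^2\|\X\|_F^2$, so taking $\eps' = \poly(1/k,\eps_0)$ small enough and applying Chebyshev yields $\|R^\X - S\X\|_F \le \eps_0^2\|\X\|_F$ with probability $1-\delta$; symmetrically for $R^\Y$.

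Writing $\Delta^\X = R^\X - S\X$, $\Delta^\Y = R^\Y - S\Y$ and expanding
\[ (R^\X)^T R^\Y - \X^T\Y = \big(\X^T S^T S\Y - \X^T\Y\big) + (\Delta^\X)^T S\Y + \X^T S^T \Delta^\Y + (\Delta^\X)^T\Delta^\Y, \]
the triangle inequality together with $\|AB\|_F \le \|A\|_F\|B\|_F$ and the bounds on $\|\Delta^\X\|_F, \|\Delta^\Y\|_F, \|S\X\|_F, \|S\Y\|_F$ above gives $\|R - \X^T\Y\|_F = O(\eps_0)\|\X\|_F\|\Y\|_F$; rescaling $\eps_0 = \Theta(\eps)$ and fixing $\delta$ to an absolute constant yields the stated error and success probability $3/4$. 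For the communication bound, each player sends exactly one rounded message per coordinate of $S\X$ and $S\Y$, i.e.\ $k(t_1+t_2) = O(\eps^{-2}(t_1+t_2))$ messages; by the same truncation argument as in the proof of Theorem~\ref{thm:LPmain} (the Gaussian entries of $S$ are polynomially bounded with high probability, so the rounded values lie in a bounded polynomial range), each message fits in $O(\log\log n + \log(1/\eps) + \log d)$ bits, giving max-communication $O(\eps^{-2}(t_1+t_2)(\log\log n + \log 1/\eps + \log d))$.

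The main obstacle I anticipate is the bookkeeping of error propagation: I must set the rounding precision $\eps'$ polynomially below $\eps_0$ so that the accumulated Frobenius-norm error in $R^\X$ and $R^\Y$ survives the matrix-product step, while simultaneously verifying that this choice, and the $\Omega(\log(n(t_1+t_2)k))$-sized failure parameter needed for the union bound over all $k(t_1+t_2)$ sketch entries, inflate the per-message length only by absorbable $\log\log n$ and $\log(1/\eps)$ terms rather than by a $\log n$ factor.
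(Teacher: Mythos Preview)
Your proposal is correct and follows essentially the same approach as the paper: Gaussian JL sketch plus the recursive randomized rounding of Lemma~\ref{lemma:roundingmain} applied coordinate-wise to each column of $S\X$ and $S\Y$, with the Gaussian high-probability variance bound enabling a union bound over all $k(t_1+t_2)$ entries, followed by the same triangle-inequality expansion of $(R^\X)^T R^\Y - \X^T\Y$ and the same message-length analysis as in Theorem~\ref{thm:LPmain}. The only cosmetic difference is that you explicitly write out the cross term $(\Delta^\X)^T\Delta^\Y$, which the paper silently absorbs.
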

\begin{proof}
	Correctness follows from the above discussion. The communication complexity bound is nearly identical to the proof of Theorem \ref{thm:LPmain}. First note that for each coordinate of $S\X$, exactly one message is sent by each player. Set $K =  (Mnm)^2/\gamma$, where $\gamma = (\eps \delta /(d \log(nm)))^C$ is the rounding parameter as in Lemma \ref{lemma:roundingmain}.  We can condition on the fact that $|S_{i,j}| \leq c K^3$ for all $i,j$ and for some constant $c>0$, which occurs with probability at least $1-1/K^2$. Now by Lemma \ref{lemma:roundingmain}, using the notation of Section \ref{sec:randomround}, for a fixed coordinate of $S\X$, we have that $\ex{e_i^2} \leq (j+1) \gamma_0^2 \sum_{v \in T_i} | \langle Q_v , Z \rangle|^2 \leq K^5$, where $e_i = \sum_{u \in T_i} X_u - r_i$, where $r_i$ is the message sent by the $i$-th player that coordinate of $SX_i$. By Markov's inequality with probability $1-1/K^2$ we have $|e_i| < K^4$. Thus $|r_i| \leq K^6$ for all $i$.

	Now for any $r_i$ for player $i$ in layer $\ell$ with $|r_i| < 1/(mK)^{d+3-\ell}$, we can simply send $0$ instead of $r_i$. Taken over all the potential children of a player $j$ in layer $\ell+1$, this introduces a total additive error of $1/K^{d+3-\ell}$ in $x_j$. Now if $x_j$ originally satisfies $|x_j| > 1/K^{d+2-\ell}$, then the probability that this  additive error of $1/K^{d+3-\ell}$ changes the rounding result $r_j = \Gamma(x_j)$ is $O(1/K)$, and we can union bound over all $m$ vertices that this never occurs, and then union bound over all $t_1k< n^2$ coordinates of $S\X$. Thus, the resulting $r_j$ is unchanged even though $x_j$ incurs additive error. Otherwise, if  $|x_j| \leq 1/K^{d+2-\ell}$, then since Player $j$ is in layer $(\ell+1)$, we round their sketch $x_j$ down to $0$ anyway. The final result is an additive error of at most $1/K^2$ to $r_{\mathcal{C}}$. Note that we can tolerate this error, as it is only greater than $\gamma\|\X\|_p$ when $\X = 0$, which is a case that can be detected with $O(1)$ bits of communication per player (just forwarding whether their input is equal to $0$). With these changes, it follows that $ 1/(mK)^{d+3} \leq r_j \leq K^6$ for all players $j$. Thus each message $r_j$ can be sent in $O( \log ( \log((mK)^{d+3} ) )) = O(\log \log(n) + \log(d) + \log(1/\eps))$ as needed, and the same bound holds for estimating $S\Y$.

\end{proof}

	\bibliography{cluster}
	
	\appendix 
    \section{Proof Sketch of the $\Omega(m/\eps^2)$ Lower Bound for $F_p$ estimation in the One-Way Coordinator Model}\label{app:1}
	We now sketch the proof of the  $\Omega(m/\eps^2)$ lower bound that was remarked upon in the introduction. First, consider the following problem Alice is given a vector $x \in \R^t$, and bob $y \in \R^t$, such that $x_i \geq 0, y_i \geq 0$ for all $i \in [t]$. Alice and Bob both send a message to Eve, who must then output a $(1 \pm \eps)$ approximation to $\|x + y\|_p$, for $p \in (0,2] \setminus \{1\}$. Via a reduction from the Gap-Hamming communication problem, there is an  $\Omega(1/\eps^2)$-bit communication lower bound for this problem \cite{woodruff2004optimal}. More specifically, there is a distribution $\mathcal{D}$ over inputs $(x,y) \in \R^t \times \R^t$, such that any communication protocol that solves the above problem on these inputs correctly with probability $3/4$ must send $\Omega(1/\eps^2)$ bits. 
	
	Now consider the one-way coordinator model, where there are $m$ players connected via an edge to a central coordinator. They are given inputs $x_1,\dots,x_m$, and must each send a single message to the coordinator, who then  must estimate $\|x\|_p = \|x_1 + x_2 + \dots + x_m\|_p$. Consider two distributions, $P_1,P_2$ over the inputs $(x_1,\dots,x_m)$. In the first, two players $i,j$ are chosen uniformly at random, and given as inputs $(x,y) \sim \mathcal{D}$, and the rest of the players are given the $0$ vector. In $P_2$, we draw $(x, y) \sim \mathcal{D}$, and every player is given either $x$ or $y$ at random. The players are then either given input from $P_1$ or $P_2$, with probability $1/2$ for each. In the first case, if the two players with the input do not send $\Omega(1/\eps^2)$ bits, then they will not be able to solve the estimation problem via the $2$-party lower bound. However, given only their input, the distributions $P_1$ and $P_2$ are indistinguishable to a given player. So the players cannot tell if the input is from $P_1$ or $P_2$, so any player that gets an non-zero input must assume they are in case $P_1$ if they want to solve the communication problem with sufficiently high constant probability, and send $\Omega(1/\eps^2)$ bits of communication. This results in $\Omega(m/\eps^2)$ total communication when the input is from $P_2$, which is the desired lower bound. 
	
		\section{$\Omega(1/\eps^2)$ Lower Bound for additive approximation of Entropy in Insertion-Only Streams}\label{app:2}
		We now prove the $\Omega(1/\eps^2)$-bits of space lower bound for any streaming algorithm that produces an approximation $\tilde{H}$ such that $|\tilde{H}  - H| < \eps$ with probability $3/4$. Here $H$ is the empirical entropy of the stream vector $\X$, namely $H = H(\X) = -\sum_{i=1}^n \frac{|\X_i|}{F_1} \log\frac{|\X_i|}{F_1} $. To prove the lower bound, we must first introduce the \textsc{GAP-HAMDIST} problem. Here, there are two players, Alice and Bob. Alice is given $x \in \{0,1\}^t$ and Bob receives $y \in \{0,1\}^t$. Let $\Delta(x,y) = |\{i \; | \; x_i \neq y_i\}|$ be the Hamming distance between two binary strings $x,y$. Bob is promised that either $\Delta(x,y) \leq t/2 - \sqrt{t}$ (NO instance) or $\Delta(x,y) \geq t/2 + \sqrt{t}$ (YES instance), and must decide which holds. Alice must send a single message to Bob, from which he must decide which case the inputs are in. It is known that any protocol which solves this problem with constant probability must send $\Omega(t)$-bits in the worst case (i.e., the maximum number of bits sent, taken over all inputs and random bits used by the protocol).
		
		\begin{proposition}[\cite{woodruff2004optimal, jayram2008one}]\label{prop:hamdist}
		    Any protocol which solves the \textsc{GAP-HAMDIST} problem with probability at least $2/3$ must send $\Omega(t)$-bits of communication in the worst case. 
		\end{proposition}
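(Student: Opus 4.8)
The plan is to recover this classical one-way communication bound for Gap-Hamming via a reduction from the \textsc{Index} problem, following the approach of \cite{jayram2008one}. Recall \textsc{Index}$_n$: Alice holds $x \in \{0,1\}^n$, Bob holds $i \in [n]$, Alice sends a single message, and Bob must output $x_i$. The first ingredient is the standard fact that any protocol solving \textsc{Index}$_n$ with probability at least $2/3$ requires an $\Omega(n)$-bit message. This is an information-theoretic argument: by Yao's principle take $x$ uniform and $i$ uniform and independent of $x$, let $M$ be Alice's message, and observe $|M| \geq H(M) \geq I(x;M) \geq \sum_{i=1}^n I(x_i;M) \geq n\bigl(1 - H_2(1/3)\bigr) = \Omega(n)$, where the second-to-last step uses independence of the coordinates of $x$ (so $I(x;M)$ dominates the sum of per-coordinate informations), and the last step is Fano's inequality applied to Bob's estimator of each $x_i$ from $(M,i)$, together with $i \perp (x_i, M)$. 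Here $H_2(\cdot)$ is the binary entropy function.

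The core step is a public-coin reduction transforming an instance of \textsc{Index}$_n$ into an instance of \textsc{GAP-HAMDIST} on $t = \Theta(n)$ coordinates, so that Alice's string $A \in \{0,1\}^t$ is a function of $x$ and the shared randomness only, and Bob's string $B \in \{0,1\}^t$ is a function of $i$ and the shared randomness only. The public coins are arranged so that, across the $t$ coordinates, the ``off-target'' coordinates (those not tied to index $i$) contribute to $\Delta(A,B)$ an unbiased sum of independent bits centered at $t/2$ with fluctuation $\Theta(\sqrt t)$, while the ``on-target'' coordinates contribute a deterministic shift whose sign equals $x_i$ and whose magnitude is $\Theta(\sqrt t)$ — calibrated to the $\pm\sqrt t$ gap in the \textsc{GAP-HAMDIST} promise. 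Consequently the resulting instance is a \textsc{yes} instance if $x_i = 1$ and a \textsc{no} instance if $x_i = 0$, each with probability bounded away from $1/2$. Feeding this instance into a purported $c$-bit \textsc{GAP-HAMDIST} protocol lets Bob output a guess for $x_i$ correct with probability $1/2 + \Omega(1)$; running $O(1)$ independent copies with fresh public coins and taking a majority boosts this to $\geq 2/3$, giving a one-way \textsc{Index}$_n$ protocol of cost $O(c)$. Combined with the previous paragraph, $c = \Omega(n) = \Omega(t)$.

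I expect the main obstacle to be the second step: designing the public-coin reduction and verifying the anti-concentration estimate that pins the ``signal'' coming from the single bit $x_i$ precisely to the $\Theta(\sqrt t)$ scale of the gap — essentially a Berry--Esseen / local central limit computation for a sum of independent $\pm 1$ random variables — and doing so robustly enough that the merely \emph{constant}-probability correctness of the \textsc{GAP-HAMDIST} protocol (rather than high-probability correctness), once compounded with the randomness internal to the reduction, still leaves a usable constant advantage on \textsc{Index}. An alternative route is Woodruff's original reduction \cite{woodruff2004optimal}, which composes \textsc{Index} with a good error-correcting code so that flipping one input bit changes $\Theta(t)$ codeword positions and the gap promise is met directly; one could also quote the stronger (and considerably harder) two-way $\Omega(t)$ bound of Chakrabarti--Regev, but that is far more than is needed here.
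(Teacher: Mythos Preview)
The paper does not prove this proposition at all: it is stated with attribution to \cite{woodruff2004optimal, jayram2008one} and used as a black box in the subsequent theorem. So there is no ``paper's own proof'' to compare against; your proposal is supplying a proof where the paper simply cites one.

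That said, your sketch is a faithful outline of the Jayram--Kumar--Sivakumar argument and is essentially correct. Two small remarks. First, in the \textsc{Index} lower bound, the step $\sum_i I(x_i;M) \geq n(1 - H_2(1/3))$ needs a touch more care: the protocol is only guaranteed success $2/3$ \emph{on average} over the uniform index $i$, not for every $i$, so you should sum the per-index Fano bounds $I(x_i;M) \geq 1 - H_2(p_i)$ and then use concavity of $H_2$ to pass to the average error $\bar p \leq 1/3$. Second, your description of the public-coin reduction is accurate in spirit but deliberately high-level; if you intend to write this out, the concrete construction (shared random $\pm 1$ vectors, Alice taking signs of inner products with $x$, Bob taking the sign of the $i$-th coordinate of each shared vector) together with the anti-concentration estimate for the resulting binomial is exactly what is in \cite{jayram2008one}, and your identification of that estimate as the crux is right. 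The alternative you mention via Woodruff's code-based reduction from \cite{woodruff2004optimal} is also valid and arguably cleaner to write out in full.
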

		
		We remark that while a $\Omega(1/\eps^2)$ lower bound is known for \textit{multiplicative-approximation} of the entropy, to the best of our knowledge there is no similar lower bound written in the literature for additive approximation in the insertion only model. We note that for the turnstile model, a lower bound of $\tilde{\Omega}(1/\eps^2 \log(n))$ for additive approximation is given in \cite{kane2010exact}. The proof of the following theorem is an adaptation of the proof in \cite{kane2010exact}, where we restrict the hard instance to have no deletions. 
		
		\begin{theorem}
		Any algorithm for $\eps$-additive approximation of the entropy $H$ of a stream, in the insertion-only model, which succeeds with probability at least $2/3$, requires space $\Omega(\eps^{-2})$
		\end{theorem}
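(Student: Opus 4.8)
The plan is to prove this by a reduction from \textsc{GAP-HAMDIST}, in the spirit of the turnstile lower bound of \cite{kane2010exact} but using a hard instance with no deletions. Fix $t = \Theta(\eps^{-2})$ with a small enough constant, say $t = \lfloor \eps^{-2}/16 \rfloor$. Given inputs $x,y \in \{0,1\}^t$ to \textsc{GAP-HAMDIST}, consider an insertion-only stream over a universe of $2t$ coordinates indexed by pairs $(i,b)$ with $i \in [t]$, $b \in \{0,1\}$: Alice (holding $x$) inserts one token into coordinate $(i, x_i)$ for each $i$, and Bob (holding $y$) inserts one token into $(i, y_i)$ for each $i$. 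The resulting frequency vector $\X$ satisfies $\X_{(i,0)} + \X_{(i,1)} = 2$ for every $i$: if $x_i = y_i$ then one of those two coordinates equals $2$ and the other $0$, whereas if $x_i \ne y_i$ then both equal $1$. All updates are insertions, $\|\X\|_1 = 2t$, and the stream has length $2t = \poly(1/\eps)$.

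Writing $k = \Delta(x,y)$, exactly $t-k$ indices contribute one coordinate of value $2$ and each of the $k$ mismatched indices contributes two coordinates of value $1$ (all remaining coordinates are $0$). Taking $\log = \log_2$ for concreteness,
\[
H = (t-k)\big(-\tfrac{2}{2t}\log\tfrac{2}{2t}\big) + 2k\big(-\tfrac{1}{2t}\log\tfrac{1}{2t}\big) = \frac{(t-k)\log t + k(\log t + 1)}{t} = \log t + \frac{k}{t},
\]
and for any other base of logarithm this becomes $\log t + \Theta(k/t)$, which is all that is needed. Thus $H$ is an increasing affine function of $k$: on a NO instance ($k \le t/2 - \sqrt t$) we get $H \le \log t + \tfrac12 - \tfrac{1}{\sqrt t}$, and on a YES instance ($k \ge t/2 + \sqrt t$) we get $H \ge \log t + \tfrac12 + \tfrac{1}{\sqrt t}$. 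Since $t$ was chosen so that $1/\sqrt t > 2\eps$, any estimate $\tilde H$ with $|\tilde H - H| \le \eps$ lies strictly below $\log t + \tfrac12$ in the NO case and strictly above it in the YES case, so thresholding $\tilde H$ at $\log t + \tfrac12$ correctly decides \textsc{GAP-HAMDIST}.

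The reduction is then standard: a streaming algorithm $\mathcal{A}$ that $\eps$-additively approximates $H$ of an insertion-only stream with probability $\ge 2/3$ using $S$ bits of space yields a one-way protocol in which Alice runs $\mathcal{A}$ on her $t$ updates, sends the $S$-bit memory state to Bob, who resumes $\mathcal{A}$ on his $t$ updates and thresholds the output. This is a one-way (Alice~$\to$~Bob) protocol for \textsc{GAP-HAMDIST} on $t$-bit inputs that succeeds with probability $\ge 2/3$ and communicates $S$ bits, so Proposition~\ref{prop:hamdist} gives $S = \Omega(t) = \Omega(\eps^{-2})$. (The coins of $\mathcal{A}$ are supplied by the shared randomness of the communication model, which is the regime of the lower bound cited in Proposition~\ref{prop:hamdist}; alternatively one fixes the coins by an averaging argument.)

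The construction is elementary and I do not expect a serious obstacle. The two points that need care are: (i) pinning down the constant in $t = \Theta(\eps^{-2})$ so that the entropy separation $\approx 2/\sqrt t$ between YES and NO instances strictly exceeds the $2\eps$ slack an $\eps$-additive estimate may introduce (and checking that the base of the logarithm affects only constants); and (ii) ensuring that the $\Omega(t)$ bound invoked for \textsc{GAP-HAMDIST} is the \emph{one-way}, public-coin version, which is exactly what \cite{woodruff2004optimal, jayram2008one} provide.
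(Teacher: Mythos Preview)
Your proposal is correct and is essentially the same reduction as the paper's: indexing the $2t$ coordinates by pairs $(i,b)$ with Alice inserting at $(i,x_i)$ and Bob at $(i,y_i)$ is exactly the paper's construction $\X = x'' + y''$ with $x'' = x \circ 0^t + 0^t \circ \bar{x}$ (and similarly for $y$), just written in different notation. Your entropy computation $H = \log t + k/t$ is in fact cleaner than the paper's version, which contains minor arithmetic slips but reaches the same conclusion that $H$ is affine increasing in $\Delta(x,y)$ with a gap of $\Theta(1/\sqrt{t}) = \Theta(\eps)$ between the YES and NO cases.
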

		\begin{proof}
	 Given a $x,y \in \{0,1\}^t$  instance of  \textsc{GAP-HAMDIST}, for $t = \Theta(1/\eps^2)$, Alice constructs a stream on $2t$ items. Let $x'$ be the result of flipping all the bits of $x$, and let $x'' = x \circ 0^t + 0^t \circ x' \in \{0,1\}^{2t}$ where $\circ$ denotes concatenation. Define $y',y''$ similarly. Alice then inserts updates so that the stream vector $\X = x''$, and then sends the state of the streaming algorithm to Bob, who inserts his vector, so that now $\X = x'' + y''$. We demonstrate that the entropy of $H$ differs by an additive term of at least $\eps$ between the two cases.  In all cases case, we have 
		  \begin{equation}
		      \begin{split}
		           H & =  \frac{t-\Delta}{t} \log(t) + \frac{\Delta}{2t}\log(2t)\\
		           & = \log(t) + \Delta\left (\frac{2 \log(t)- \log 2t }{2t} \right)\\		    
		           \end{split}
		  \end{equation}

		   We can assume $t \geq 4$,  and then $2 \log(t)- \log (2t) = C > 0$, where $C$ is some fixed value known to both players that is bounded away from $0$. So as $\Delta$ increases, the entropy increases. Thus in a YES instance, the entropy is at least 
		    \begin{equation}
		      \begin{split}
		          H &\geq\log(t) + (t/2 + \sqrt{t})\frac{C}{2t} \\
		             &= \log(t) + (1/4 + 1/2 \sqrt{t})C\\ 	   
		             &= \log(t) + C/4 + \Theta(\eps) \\ 	  
		           \end{split}
		  \end{equation}
		  In addition, in the NO instance, the entropy is maximized when $\Delta = t/2 - \sqrt{T}$. so we have 
		     \begin{equation}
		      \begin{split}
		          H &\leq\log(t) + (t/2 - \sqrt{t})\frac{C}{2t} \\
		             &= \log(t) + C/4 - \Theta(\eps) \\ 	  
		           \end{split}
		  \end{equation}
		    Therefore, the entropy differs between YES and NO instances by at least an additive $\Theta(\eps)$ term. After sufficient rescaling of $\eps$ by a constant, we obtain our $\Omega(t) = \Omega(1/\eps^2)$ lower bound for additive entropy estimation via the linear lower bound for \textsc{GAP-HAMDIST} from Proposition \ref{prop:hamdist}. 
		    \end{proof}

\end{document}